\theoremstyle{definition}
\theoremstyle{remark}
\theoremstyle{plain}
\newtheorem{claim}{Claim}
\newtheorem{lemma}{Lemma}
\newcommand{\ket}[1]{\vert{#1}\rangle} 
\newcommand{\bra}[1]{\langle{#1}\vert} 
\DeclareMathOperator{\Tr}{Tr}
\newcommand{\beq}{\begin{equation}}
\newcommand{\eeq}{\end{equation}}
\newcommand{\figref}[1]{Fig.~\ref{#1}}
\newcommand{\secref}[1]{Sec.~\ref{#1}}
\newcommand{\appref}[1]{App.~\ref{#1}}
\newcommand{\jona}[1]{{\color{Black} #1}}
\newcommand{\fab}[1]{{\color{Black} #1}}
\newcommand{\nico}[1]{{\color{Black} #1}}
\newcommand{\gh}[1]{{\color{Black} #1}}
\newcommand{\defeq}{\vcentcolon=}
\newtcolorbox[blend into=figures]{boxfigure}[3][]
{ float*=ht,width=\textwidth,lower separated=false, center upper, 
title={#2},label= fig:#3,#1}
\begin{document}

\title{Unifying paradigms of quantum refrigeration:\\
fundamental limits of cooling and associated work costs}
\author{Fabien Clivaz}
\affiliation{Department of Applied Physics, University of Geneva, 1211 Geneva 4, Switzerland}
\author{Ralph Silva}
\affiliation{Department of Applied Physics, University of Geneva, 1211 Geneva 4, Switzerland}
\author{G\'eraldine Haack}
\affiliation{Department of Applied Physics, University of Geneva, 1211 Geneva 4, Switzerland}
\author{Jonatan Bohr Brask}
\affiliation{Department of Applied Physics, University of Geneva, 1211 Geneva 4, Switzerland}
\author{Nicolas Brunner}
\affiliation{Department of Applied Physics, University of Geneva, 1211 Geneva 4, Switzerland}
\author{Marcus Huber}
\affiliation{Institute for Quantum Optics and Quantum Information (IQOQI), Austrian Academy of Sciences, Boltzmanngasse 3, A-1090 Vienna, Austria}

\date{\today}

\begin{abstract}
In classical thermodynamics the work cost of control can typically be neglected. On the contrary, in quantum thermodynamics the cost of control constitutes a fundamental contribution to the total work cost. Here, focusing on quantum refrigeration, we investigate how the level of control determines the fundamental limits to cooling and how much work is expended in the corresponding process. \jona{We compare two extremal levels of control. First coherent operations, where the entropy of the resource is left unchanged, and second incoherent operations, where only energy at maximum entropy (i.e. heat) is extracted from the resource. For minimal machines, we find that the lowest achievable temperature and associated work cost depend strongly on the type of control, in both single-cycle and asymptotic regimes. We also extend our analysis to general machines.} Our work provides a unified picture of the different approaches to quantum refrigeration developed in the literature, including algorithmic cooling, autonomous quantum refrigerators, and the resource theory of quantum thermodynamics.
\end{abstract}

\maketitle

\section{Introduction}

Characterizing the ultimate performance limits of thermal machines is directly connected to the problem of understanding the fundamental laws of thermodynamics. The development of classical thermodynamics was instrumental for the realization of efficient thermal machines. Similarly, understanding the thermodynamics of quantum systems is closely related to the fundamental limits of quantum thermal machines. An intense research effort has been devoted to these questions \cite{book,review,millen,janet}, resulting in the formulation of the basic laws of quantum thermodynamics, a resource theory perspective, and a large body of work on quantum thermal machines, including first experimental demonstrations. 

When trying to establish fundamental limits on quantum thermodynamics tasks, one is always faced with the problem of identifying the relevant resources. For instance, one may consider different classes of allowed operations on a quantum system, or equivalently different levels of control. This challenge is particular to the quantum regime, where monitoring and manipulating systems generally affects the dynamics. Conceptually different approaches have been pursued in parallel to explore this question. 

\gh{One approach is via} the development of a general theory of quantum thermodynamics \gh{that} aims at placing upper bounds on the performance limits of quantum thermal machines. By establishing fundamental laws, this abstract perspective provides limits that hold for any possible quantum process (hence to all transformations achievable by quantum thermal machines). Typically, such upper bounds are obtained by characterising possible state transitions, focusing on the single-cycle regime. The intuition being that a machine cannot perform better than a perfect cycle. Here one can distinguish two paradigms. In the first, free operations are given by ``thermal operations''~\cite{wit,resource,TO,coherence1,coherence2,coherence3}, i.e. energy conserving unitaries applied to the system and a thermal bath. The implicit assumptions are access to i) a perfect timing device, ii) arbitrary spectra in the bath, and iii) interaction Hamiltonians of arbitrary complexity. This \gh{perspective}  led to derivations of the second law \cite{secondlaws,paul,yelena}---i.e. the removal of system entropy in a thermally equilibrated environment comes at an inevitable work cost---and general formulations of the third law \cite{thirdlaw,thirdlaw2,thirdlaw3}---cooling to temperatures approaching absolute zero requires a diverging amount of resources. In the second paradigm, one considers an increased amount of classical control over a single quantum system, but no access to bath degrees of freedom. I.e. the implicit assumptions are i) a perfect timing device ii) the ability to implement any cyclic change in the Hamiltonian of a quantum system. This led to the concepts of passive states \cite{passive1,passive2,passive3,skrzypczyk15,passive4} and algorithmic cooling \cite{algo1,algo2,raeisi,algo3,algo4} and more generally to fundamental limits on single-cycle performance of coherently driven quantum machines \cite{ticozzi}. 

Another approach is via explicit models of quantum thermal machines that provide lower bounds on their performance. A wide range of such models have been discussed. In general terms, a quantum thermal machine makes use of external resources (e.g. thermal baths) to accomplish a specific task, such as work extraction or cooling. More formally, these machines are modeled as open quantum systems, where the machine consists of few interacting quantum systems coupled to external baths. Performance is usually evaluated in the asymptotic regime of non-equilibrium steady states. Machines with very different levels of control must be distinguished.

Autonomous quantum thermal machines feature the lowest level of control \cite{auto1,auto0,auto2,virtual,venturelli,autoexp1,patrick,autoexp2,roulet}. Here the machine subsystems are coupled to thermal baths at different temperatures, and interact via time-independent Hamiltonians, thus requiring no external source of work or control. In the opposite regime, machines requiring a high level of control have been considered, such as quantum Otto engines \cite{abah,rossnagel,otto1,otto2}. Here one assumes the ability to implement complex unitary cycles, which generally require time-dependent Hamiltonians or well-timed access to a coherent battery \cite{clock1,clock2,coh1}. Nonetheless similar statements of the second and third law are also possible in this regime \cite{thirdlawmachines,silva16}. 

Each of the above approaches represents a perfectly reasonable paradigm for discussing the ultimate limitations of quantum thermodynamics, each featuring its own merits and drawbacks. 
Comparing these approaches is thus a natural and important question. It is however also a challenging one, due to the fact that each approach works within its own respective framework and set of assumptions. 
Recently, several works established preliminary connections between some of these approaches. Refs \cite{transient1,transient2} studied autonomous machines in the transient regime and showed that a single cycle can achieve more cooling than the steady state regime. Quantum machines powered by finite-size baths have been studied \cite{finiteotto} to understand the impact of finite resources, and the control cost of achieving a shortcut to adiabaticity was studied in \cite{shortcutcost}. In \cite{finitesize} the authors explored the implications of finite size systems, i.e. thermal operations \gh{not at} the thermodynamic limit.
In the single-cycle regime, Refs \cite{Gaussianpassive,realresource,passiveinstability} discussed thermodynamic performance under restricted sets of thermal operations, with limited complexity. Finally, even the assumption of perfect timing control, inherent to all paradigms except autonomous machines, should arguably carry a thermodynamic cost \cite{clock3}.

The above paradigms can instructively be split into two types of assumed control over the quantum system. For a single cycle of a thermodynamic process, we can either assume to be capable of engineering time dependent Hamiltonians, dubbed \emph{coherent} control or just turning on time-independent interactions, which we call \emph{incoherent} control. We explicitly model each bath constituent that we have access to and refer to it as machine size. Thus for an infinite machine size, the incoherent control paradigm exactly captures the resource theory of thermodynamics. On the other hand, the explicit modeling of size adds another layer to the analysis of thermodynamic processes in terms of size/complexity.

\begin{table}[h!]
\centering
    \begin{tabular}{|c|c||c|c|}
    \hline
  \multicolumn{2}{|c||}{\backslashbox{complexity}{control}} & Incoherent& Coherent \\  \hline
   \multirow{ 2}{*}{ancillas}& at \(T_H\)& m&0\\ \cline{2-4}
   & at \(T_R\)& n-m&n\\ \hline
   \multicolumn{2}{|c||}{limit \(n \rightarrow \infty\)}& TO& CPTP\\ \hline
    \end{tabular}
    \caption{We here summarize the important properties of both paradigms. By complexity we mean the number of components the machine is allowed to have. Each component is in principle allowed to be a qudit of arbitrary dimension. In the limit of infinitely many ancillas the single cycle incoherent paradigm become the thermal operations (TO) used in the Resource theory of thermodynamics (RTT) and in the single cycle coherent paradigm one is allowed to apply any CPTP map to the target.}
\end{table}

In the accompanying letter \cite{OurLetter}, we used this framework to derive a universal bound for quantum refrigeration and proved that it could be obtained by all types of control, provided that complex enough machines and corresponding interactions are available. In the present work we dig deeper and reveal the intricate connection between machine complexity, control and add the amount of resources consumed in the process to the picture. The latter, in turn, is connected to the entropy change associated with the energy drawn from the resource. Considering our two extremal levels of control; first the \emph{coherent} scenario, where the entropy of the resource is left unchanged, and second the \emph{incoherent} scenario, where only energy at maximum entropy (i.e. heat) is extracted from the resource. Within each level of control, we investigate the lowest attainable temperature, and the work cost for attaining a certain temperature. These quantities allow us to give a direct and insightful comparison between the different approaches for quantum refrigeration.

To tackle these questions, it is natural to consider machines of a given size (i.e. the number of systems that one has access to), since the size in itself also represents a form of control. We analyze this aspect of control starting from the smallest possible machines. It turns out that the two-qubit machine is the smallest one where the coherent and incoherent scenarios \gh{can be compared in a meaningful way.} We also discuss the case of general machines, and in particular the limit of asymptotically large machines.

Our results clearly demonstrate the expected crucial role of control for quantum cooling performance, but surprisingly unifies the different operational approaches through machine complexity.

\section{Setting and summary of results\label{sec:setting}}

\begin{figure}[t]
	\centering
	\includegraphics[width=9cm]{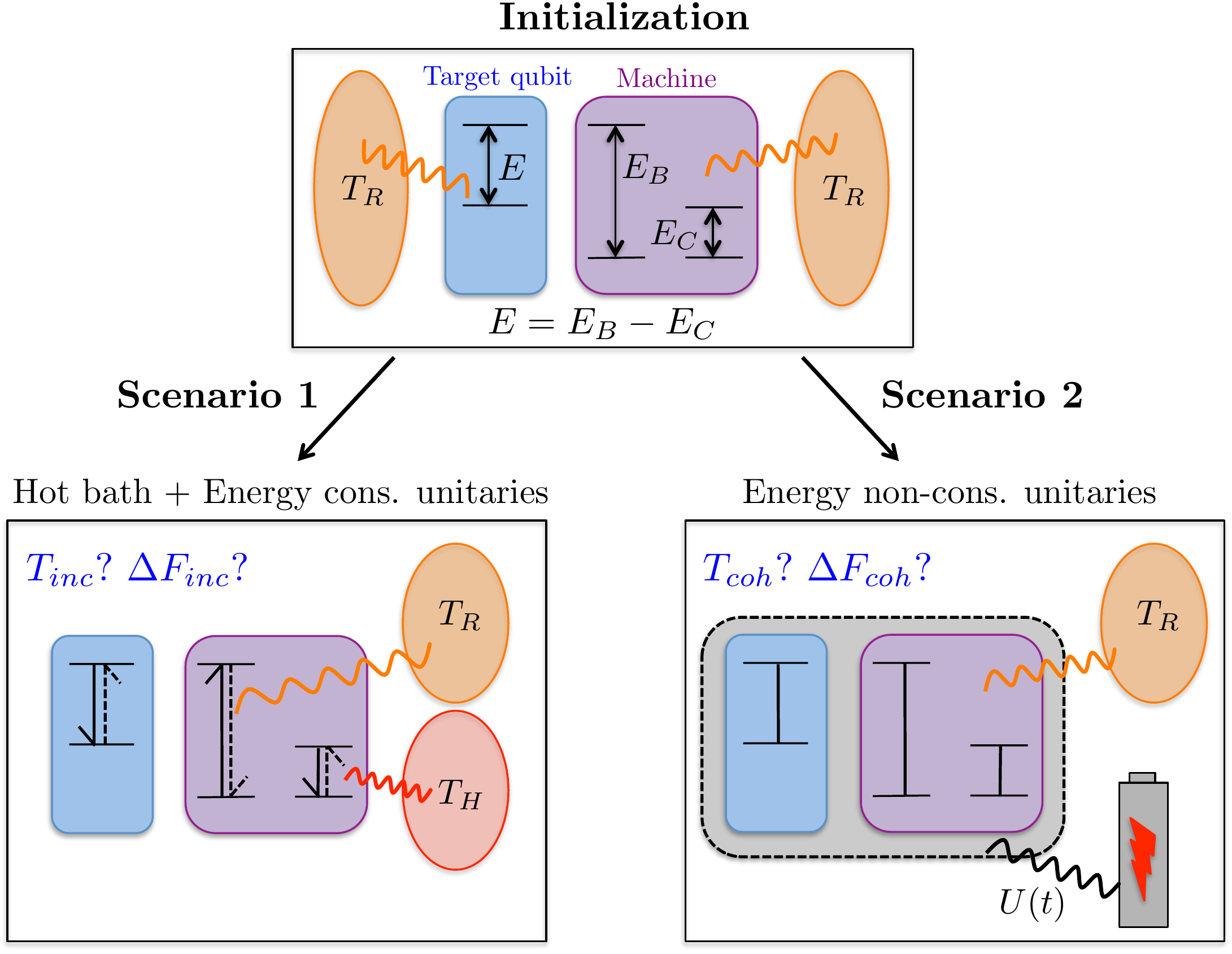}
	\caption{Model for the minimal thermal machine achieving cooling and allowing for the comparison of two paradigmatic scenarios of quantum refrigeration. After initialization of the machine and target qubit with a thermal bath at room temperature $T_R$, two scenarios are proposed. In \gh{scenario} 1, the free energy is provided by a hot bath. This corresponds to a low level of control, i.e. maximal entropy change. In contrast, \gh{scenario} 2 describes a thermal machine requiring a high level of control (e.g. via a coherent battery), that can implement arbitrary unitary operations at zero entropy change.}\label{fig:model}
\end{figure}

\nico{Cooling a quantum system could have several meanings. For a system initially in a thermal state, one can drive it to a thermal state of lower temperature. Alternatively, one could consider increasing the ground-state population, or decreasing the entropy or the energy. These notions are in general inequivalent for target systems of arbitrary dimension. Determining the fundamental limits to cooling is therefore a complex problem in general. It turns out, however, that for the case of a qubit target, all the above notions of cooling coincide. Because of the clarity that this offers but also because the bounds set on target qubits imply bounds for target qudits, see our accompanying article \cite{OurLetter}, we focus on qubit targets only in this article.

Specifically we consider cooling a single qubit which is initially in a thermal state set by the environment temperature $T_R$ and then isolated from any environment \cite{footnote0}. The goal is to increase the ground sate of the qubit (without changing its energy gap).}
In order to cool the target qubit, we couple it to a quantum thermal machine. We consider two scenarios for the operation of this machine, that represent the two extremal levels of control (coherent and incoherent) introduced above. For each of these scenarios we are interested in the limits to cooling performance, see our accompanying article \cite{OurLetter} for a complementary treatment of this,  as well as in the associated work cost. We characterize the work cost by the free energy change, a well-established monotone across thermodynamic paradigms (see e.g. \cite{work,work2}). This quantifies the maximum extractable work from a resource in the presence of an environment at equilibrium, and hence measures to what degree the resource is out of equilibrium with the environment, a property necessary to induce non-trivial transformations of the target system. 

More precisely, the two scenarios are defined as follows:
\begin{itemize}
\item {\bf{Scenario 1: Incoherent operations.}} The source of free energy is a hot bath at a temperature $T_H > T_R$. The machine (or any of its subsystems) can be coupled to the hot bath or rethermalized with the environment at any stage. The machine interacts with the target qubit via an energy conserving unitary operation. The work cost of the operation corresponds to the decrease in free energy of the hot bath.

\item {\bf{Scenario 2: Coherent operations.}} Here the source of free energy is coherent in the sense of allowing for energy non-conserving unitary operations between the machine and the target qubit. This effectively assumes a coherent battery or classical control field as the source of free energy. There is no additional thermal bath, and the machine may only be coupled to the environment (at temperature $T_R$). As the entropy is unaffected, the work cost, i.e. the change in free energy, is simply the change in energy.
\end{itemize}

In order to compare these two scenarios and understand the fundamental limits to cooling performance, we investigate 

\begin{itemize}
\item[i)] the lowest attainable temperature $T^*$,
\item[ii)] the work cost for attaining any given temperature, in particular $T^*$. 
\end{itemize}

In contrast to our accompanying article \cite{OurLetter} where we focus on the unbounded number of cycles regime, we are here interested in the single-cycle, repeated and asymptotic regimes. In the single-cycle regime, an initial thermalisation step is followed by a single unitary operation on the machine and the target qubit (energy conserving or arbitrary, for scenario 1 and 2, respectively). In the repeated operations regime, thermalisation and unitary operations are alternated a finite number of times. In the asymptotic regime, this cycle of steps is repeated indefinitely.

 Turning our attention to the machine more closely, we consider that distinct subsystems of the machine can connect to baths at different temperatures, but we do not allow individual transitions in the machine to be separately thermalised at different temperatures. With that in mind, while bounds on the performance of general machines can be set for both paradigms, see our accompanying article \cite{OurLetter}, the incoherent paradigm is trivial unless the machine has a tensor product structure. Since we are here focusing in comparing both paradigms, in particular with respect to their associated work cost, we will consider machines with such a structure only. Furthermore, besides the more practical aspect of small machines, which are arguably easier to realize, especially in the incoherent scenario where increasing the machine size usually comes at the price of decreased interaction strengths \cite{autoexp1,autoexp2}, they also already suffice to saturate the cooling bounds of each scenario, see our accompanying article \cite{OurLetter}. This as such motivates our interest to focus most our analysis on the minimal settings. The two smallest possible machines consist of either a single qubit or two qubits. Of these, only the latter allows for a non-trivial comparison between the incoherent and coherent scenarios as a single-qubit machine allows for cooling only in scenario 2.

Figure \ref{fig:summary} summarises the results of our comparison, and demonstrates the crucial role of control for the fundamental limits of quantum refrigeration. It shows the minimal achievable temperature of the target qubit vs.~the associated work cost in each scenario and for the single-cycle, finite-repetition, and asymptotic regimes. 

Surprisingly, in the single-cycle regime, we find that neither scenario is universally superior. While scenario 2 always achieves the lowest temperature when no restriction is placed on the work cost, there is a threshold work cost below which scenario 1 outperforms scenario 2. 

For finite repetitions, additional cooling starts from the end points of maximal single-cycle cooling in each scenario. For scenario 1, one can think of this as repeated thermal operations with a locality restriction, i.e. access to a single qubit from each of the two baths in every round, and for scenario 2 it corresponds to multiple cycles of coherently driven quantum machines (such as e.g. quantum Otto cycles). 

In the asymptotic regime scenario 1 corresponds to the minimal autonomous quantum thermal refrigerator, as shown in \cite{Raam} and discussed in our accompanying article \cite{OurLetter}. Scenario 2 \gh{leads to heat bath algorithmic cooling, when augmented with the ability to individually rethermalise the machine qubits to the environment temperature \(T_R\)}. Moreover, like in the single-cycle regime, scenario 2 always achieves a lower temperature, although generally at a higher work cost. 

While minimal machines saturate the cooling bounds, they do so in a very ineffective way from a work cost perspective. Extending our analysis to the case of N-qubit machines, by considering cooling to a fixed target temperature, we finally show that both coherent and incoherent machines can achieve minimal work cost, \textit{i.e} saturate the second law, in the limit of large size.\\

\begin{figure}[t]
	\centering
	\includegraphics[width=9cm]{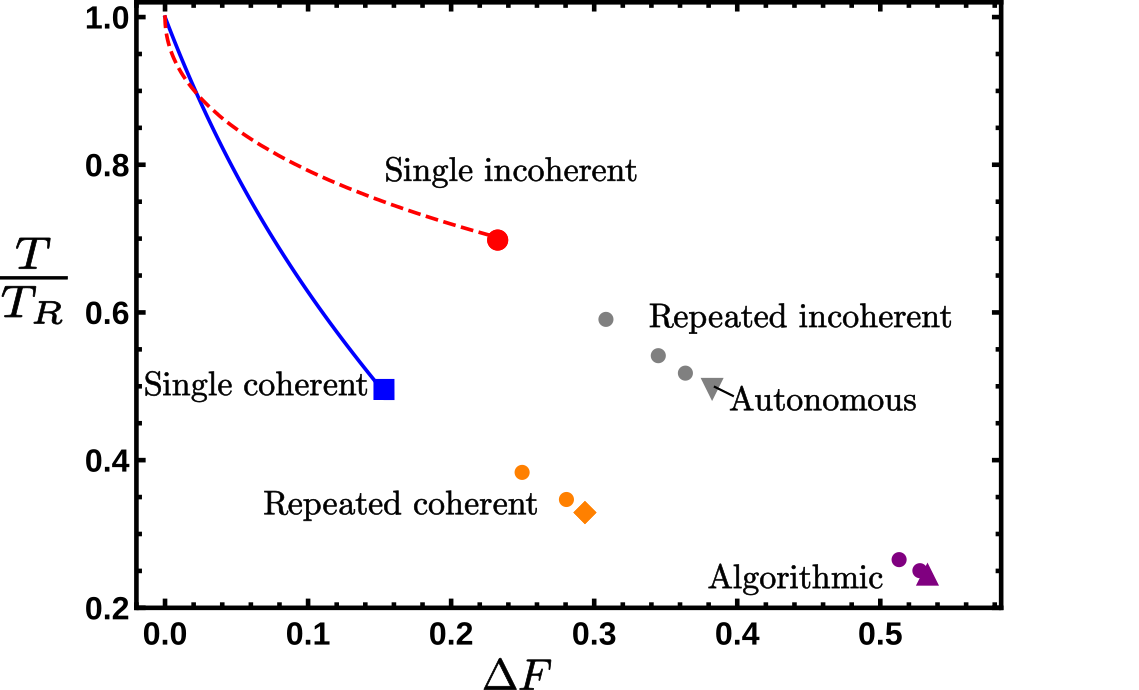}
	\caption{Comparison of achievable temperatures and associated work costs for scenarios 1 and 2 in the single-cycle, finite repetitions, and asymptotic regimes. The ratio $T/T_R$ is the relative cooling, $T$ being the final temperature and $T_R$ the initial one. The symbols (dots, etc) correspond to maximal cooling (i.e. achieving minimal temperature $T^*$) in each scenario. \jona{Here we use $T_R=1$.}
	\label{fig:summary}
	}
\end{figure}

The rest of the paper is organized as follows. In \secref{sec:model}, we introduce notation and definitions. Section \ref{sec:onequbit} deals with the case of the one-qubit machine. In Secs \ref{sec:prelim} and \ref{sec:single-cycle}, we investigate the cooling performance and associated work cost of the two-qubit machine, focusing on the single-cycle regime. In \secref{sec:repeat}, we discuss repeated operations and the asymptotic regime of the two-qubit machine. We then discuss the saturation of the second law by more general machines in \secref{sec:secondlaw} before concluding in \secref{sec:conclusion}.

\gh{
\section{Notation and definitions \label{sec:model}}}

As argued in section \ref{sec:setting}, we consider machines consisting of a given number of qubits. We take the energies of all qubit ground states to be zero, denote the excited state energy of qubit $i$ by $E_i$, and the energy eigenstates by $\ket{0}_i$ and $\ket{1}_i$. Thus, the local Hamiltonian for each qubit is $H_i = E_i \ket{1}_i\bra{1}$, and the total Hamiltonian of target and machine is
\begin{equation}
H = \sum_i E_i \ket{1}_i\bra{1} .
\end{equation}
The initial state, prior to cooling, is the same for the incoherent and coherent scenarios. Every qubit is in a thermal state of its local Hamiltonian at the environment temperature $T_R$. In general, a thermal state of a qubit with energy gap $\varepsilon$ and temperature $T$ is given by
\begin{equation}
\tau(\varepsilon,T) = r(\varepsilon,T) \ket{0}\bra{0} + [1-r(\varepsilon,T)] \ket{1}\bra{1} ,
\end{equation}
where the populations are determined by the Boltzmann distribution (throughout the paper we work in natural units, $k_B=\hbar=1$)
\begin{equation}
\label{eq:bdist}
r(\varepsilon,T) = \frac{1}{1 + e^{-\varepsilon/T}} = \frac{1}{\mathcal{Z}(\varepsilon,T)} ,
\end{equation}
where $\mathcal{Z}(\epsilon,T)$ is the partition function corresponding to the qubit Hamiltonian and temperature.

We denote the ground state populations at the environmental temperature by $r_i = r(E_i,T_R)$, and the corresponding thermal states by $\tau_i$. We will refer to the target to be cooled as qubit $A$, but for convenience, we will generally drop the subscript for the target qubit, such that
\begin{equation}
E \defeq E_A, \hspace{0.5cm} r \defeq r_A, \hspace{0.5cm} \tau \defeq \tau_A .
\end{equation}
Note that we can choose a unit of energy such that \(E=1\) without loss of generality, which we do for all our numerical analysis.

In scenario 1, one (or more) of the machine qubits is first heated to a higher temperature $T_H$. This is followed by an energy conserving unitary acting jointly on the target and the machine, i.e.~any unitary $U$ for which $[U,H]=0$. In scenario 2, an energy non-conserving unitary is applied directly to the initial state of target and machine.

We extract the temperature of the target qubit by reading its ground state population and inverting the relation \eqref{eq:bdist}. When the target qubit is diagonal, which will turn out to be the case for all our relevant operations, the target has a well defined temperature and this is a valid way to extract it. When the target state is not diagonal it strictly speaking has no temperature. One way to nevertheless extend the notion of temperature to these states also is as presented above. The work cost is accounted for from the perspective of the work reservoir, i.e. the free energy change of the resource. This is not necessarily equal to the free energy change of the system itself, but is nonetheless the appropriate way to quantify consumed resources. For completeness, we have also worked out the two scenarios for the two-qubit machine case from a system perspective in \appref{app:internal}.\\

\jona{
\section{One-qubit machine}
\label{sec:onequbit}

Denoting the machine qubit by $B$, the Hamiltonian is $H = H_A + H_B$, and the initial state is \gh{simply}
\begin{equation}
\rho^{in} = \tau \otimes \tau_B .
\end{equation}

\subsection{Scenario 1: incoherent operations}

In this scenario, the machine qubit is first heated to a higher temperature $T_H$, resulting in the state
\begin{equation}
\rho^{H} = \tau \otimes \tau_B^H ,
\end{equation}
where $\tau_B^H = \tau(E_B,T_H)$ is the thermal state of qubit B at the temperature of the hot bath. This is followed by an energy conserving unitary. However there is no such unitary that can cool the target, as we demonstrate now.

For the action of the unitary to be nontrivial (and hence, for any cooling of the target to happen), the spectrum of the joint Hamiltonian $H$ must have some degeneracy, allowing one to shift population between distinct energy eigenstates of the same energy. The only possibilities are that \textit{(i)} one of the energies vanish $E=0$ or $E_B=0$, or \textit{(ii)} the gaps are equal $E=E_B$. In case \textit{(i)}, the thermal state $\rho^H$ will be proportional to the identity in the degenerate subspace, and hence $U\rho^HU^\dagger = \rho^H$ for any energy conserving unitary $U$. In case \textit{(ii)}, because the matrix elements (in the product basis of $H_A$, $H_B$) fulfill $\rho^H_{01,01} = r(1-r_B^H) > r_B^H(1-r) = \rho^H_{10,10}$, unitaries acting on the degenerate subspace can only heat up the target. 

Thus, for the single-qubit machine, cooling is impossible in the incoherent scenario.

\subsection{Scenario 2: coherent operations\label{sec:OneMco}}

Using coherent operations, it is possible to cool, and we now derive the minimal attainable temperature of the target, and the work cost of cooling. This will also provide some intuition for how to tackle the two-qubit machine, where coherent and incoherent cooling can be compared.

Cooling corresponds to increasing the ground state population of the target using an arbitrary joint unitary $U$ on target and machine. This population $r_{coh}$ is given by the sum of the two first diagonal entries of the final state $U\rho^{in}U^\dagger$, when expressed in the product basis of $H_{\text{A}} \otimes H_{\text{B}}$. \fab{From the Schur-Horn theorem one learns that this sum can at most be the sum of the two greatest eigenvalues of \(U \rho^{\text{in}} U^{\dagger}\), which, since \(U\) cannot change the eigenvalues of the state and since \(\rho^{\text{in}}\) is diagonal, are the sum of the two largest diagonal entries of \(\rho^{\text{in}}\). Maximal cooling is thus} achieved when $r_{coh}$ equals the sum of the two largest diagonal entries of $\rho^{in}$. One readily sees that $\rho_{00,00}^{in} = r r_B$ is the largest element and $\rho_{11,11}^{in} = (1-r)(1-r_B)$ the smallest, while
\begin{equation}
\frac{\rho_{01,01}^{in}}{\rho_{10,10}^{in}} = \frac{r(1-r_B)}{(1-r)r_B} = e^{\frac{E-E_B}{T_R}} .
\end{equation}
Cooling is only possible if the initial ground state population $r = \rho_{00,00}^{in} + \rho_{01,01}^{in}$ is not already maximal, i.e.~if $E < E_B$. In this case, the maximal final population is $r_{coh}^* = \rho_{00,00}^{in} + \rho_{10,10}^{in} = r_B$ corresponding to (from \eqref{eq:bdist})
\begin{equation}
T_{coh}^* = \frac{E}{\ln\left(\frac{r_{coh}^*}{1-r_{coh}^*}\right)} = \frac{E}{E_B} T_R .
\end{equation}

This temperature can be achieved by a unitary which swaps the states $\ket{01}$ and $\ket{10}$, and in fact this also minimises the associated work cost. More generally, we can identify an optimal unitary which minimises the work cost of cooling to any temperature in the attainable range, i.e.~any ground state population $r_{coh}$ between $r$ and $r_{coh}^*$. The optimal work cost is given by
\begin{equation}
\label{eq:workcostonequbitcoh}
\Delta F_{coh} = (r_{coh} - r)(E_B - E) ,
\end{equation}
and it is achieved by a unitary of the form
\begin{equation}
\label{eq:optimalUonequbitcoh}
U = e^{-i t L} ,
\end{equation}
where
\begin{equation}
\label{eq:Lonequbit}
L = i \ket{01}\bra{10} - i \ket{10}\bra{01} 
\end{equation}
is a Hamiltonian which generates swapping of excitations between the target and machine qubits, and $t = \arcsin(\sqrt{\mu})$ with
\begin{equation}
\mu = \frac{r_{coh} - r}{r_B - r} .
\end{equation}

The optimality of \eqref{eq:workcostonequbitcoh} can be proven using the Shur-Horn theorem and \fab{majorization} \cite{MarshalMajorization, NielsenMajorization}. \gh{The idea of the proof is as follows.} \fab{By scanning through all the unitarily attainable \(\rho^{\text{coh}}= U \rho^{\text{in}} U^{\dagger}\) we are looking at all the Hermitian matrices \(\rho^{\text{coh}}\) with spectrum \(\vv{\rho}^{\text{in}}\) (given an \(n \times n\)  matrix \(\mu=(\mu_{ij})\) we generically denote its vectorized diagonal \((\mu_{11},\dots,\mu_{nn})\) by \(\vv{\mu}\)). According to the Schur-Horn theorem there exists a Hermitian matrix \(\rho^{\text{coh}}\) with spectrum \(\vv{\rho}^{\text{in}}\) if and only if the majorization condition \(\vv{\rho}^{\text{coh}} \prec \vv{\rho}^{\text{in}}\) holds. Hence, a state $\rho^{coh}$ is reachable by a unitary starting from $\rho^{in}$ if and only if the diagonals fulfill \(\vv{\rho}^{\text{coh}} \prec \vv{\rho}^{\text{in}}\). In the coherent scenario, the free energy difference and hence the work cost is simply given by
\begin{align}
	\Delta F = \Tr \left[(\rho^{coh} - \rho^{in}) H\right] = (\vv{\rho}^{\text{coh}} - \vv{\rho}^{\text{in}})\cdot \vv{H},
\end{align}
where $\vv{H}$ is the diagonal of $H$. The last term is constant, and so minimising the work cost for a given final $r_{coh}$ is equivalent to 
\begin{equation}
\label{eq:workminimisation_onequbit}
\underset{\vv{\rho}\prec\vv{\rho}^{\text{in}}}{\min} \, \vv{\rho}\cdot \vv{H} \hspace{0.5cm} \text{s.t.} \hspace{0.5cm} 
\rho_1 + \rho_2  = r_{coh}.
\end{equation}
As shown in \appref{app:Uopt}, this minimisation can be solved analytically, leading to \eqref{eq:workcostonequbitcoh} and \eqref{eq:optimalUonequbitcoh}.
}}

\jona{

\gh{
\section{Two-qubit machine: Model}
\label{sec:prelim}}

\gh{When considering the two-qubit machine, the total Hamiltonian of target and machine} is $H = H_A + H_B + H_C$, with qubits B and C forming the machine. \gh{The setup, as well as the two scenarios, are illustrated in \figref{fig:model}}. The starting point for both scenarios 1 and 2 is the initial state
\begin{equation}
\label{eq:initialstate}
	\rho^{\text{in}} = \tau \otimes \tau_B \otimes \tau_C.
\end{equation}

In scenario 2, an energy non-conserving unitary is applied directly to the initial state $\rho^{\text{in}}$, while in scenario 1, qubit C is first heated to a higher temperature $T_H$, resulting in the state 
\begin{eqnarray}
	&&\rho^H = \tau \otimes \tau_B \otimes \tau_C^H,
\end{eqnarray}
where $\tau_C^H = \tau(E_C,T_H)$ is the thermal state of qubit C at the temperature of the hot bath. This is followed by an energy conserving unitary acting on the three qubits. To allow for non-trivial energy conserving unitaries, there must be a degeneracy in the spectrum of $H$ with an associated degenerate subspace. In \appref{app:degeneracy}, we show that the only degeneracy which enables cooling of the target is obtained by setting
\begin{equation}\label{eq:deg}
E = E_B - E_C\,.
\end{equation}
Hence, we work with this convention throughout the following.}\\
\\


\gh{
\section{Two-qubit machine: single-cycle regime}
\label{sec:single-cycle}}

In this section, we discuss the single-cycle regime \gh{of the two-qubit machine}. 
We show that scenario 2 (coherent operations) always reaches lower temperatures when the work cost is unrestricted. However, for sufficiently low work cost, it turns out that scenario 1 (incoherent operations) outperforms scenario 2. \\

\subsection{Scenario 1: incoherent operations}
\label{sec:single-cycle1}

We first identify the energy-conserving unitary that is optimal for cooling the target qubit. From the relation \eqref{eq:deg} it follows that there is only one subspace that is degenerate in energy (relevant for cooling), which is spanned by the states $\ket{010}$ and $\ket{101}$. Optimal cooling is simply achieved by swapping these two states, i.e. the unitary is given by (see \appref{app:degeneracy} for more details)
\begin{equation}\label{eq:u_incoh}
	U =  \ket{010} \bra{101} + \ket{101} \bra{010} + \mathds{1}_{\text{non-deg}},
\end{equation}
where $\mathds{1}_{\text{non-deg}}$ is the identity operation on the complement space. We can thus directly compute the final temperature of the target qubit. We first compute the final ground state population $r_{\text{inc}}$

\begin{eqnarray}
\label{eq:rSf}
r_{\text{inc}} (T_H) = r r_B + [(1-r) r_B + r (1-r_B)] (1-r_C^H),
\end{eqnarray}
where $r_C^H = r(E_C,T_H)$ denotes the ground state population of qubit C after heating and $r$ and $r_B$ denote the ground state populations of the target qubit and qubit B at room temperature $T_R$. The final temperature is found by inverting Eq.~\eqref{eq:bdist}
\begin{equation}
\label{eq:TSf}
T_{\text{inc}} (T_H) = \frac{E}{\ln(\frac{r_{\text{inc}}}{1-r_{\text{inc}}})}.
\end{equation}
Not limiting the work cost, optimal cooling is obtained in the limit $T_H \rightarrow \infty$. In this case $r_C^H = \frac{1}{2}$, and thus
\begin{equation}
	r_{\text{inc}}^* = \lim_{T_H \rightarrow \infty} r_{\text{inc}} (T_H) = \frac{1}{2}(r+r_B).
\end{equation}
We thus obtain the lowest achievable temperature for scenario 1:
\begin{equation}
\label{eq:Tminincoh}
T_{\text{inc}}^*  = \lim_{T_H \rightarrow \infty} T_{\text{inc}} (T_H) =  \frac{E}{\ln(\frac{r+r_B}{2-(r+r_B)})}.
\end{equation}

We are now interested in the work cost of cooling. For scenario 1, the hot bath is the only resource, implying that the free energy decrease in the hot bath represents the cooling cost. The free energy difference is $\Delta F = \Delta U - T_R \Delta S$, where \(\Delta U\) is the internal energy change. For a thermal bath \(\Delta U\) is defined as the heat drawn from the bath, \(Q\), which from the first law equals the change in energy of qubit C. We follow the convention of counting as positive what is taken from the bath. The change in entropy $\Delta S$ also takes a simple form for a thermal bath, $\Delta S = Q/T_H$. This gives
\begin{equation}
\label{eq:deltaF1}
\begin{aligned}
\Delta F_{\text{inc}} (T_H) &= Q (1-\frac{T_R}{T_H}) \\
&= E_C (r_C - r_C^H) (1-\frac{T_R}{T_H}) .
\end{aligned}
\end{equation}
The above equation shows that the work cost is determined directly by the hot bath temperature $T_H$. The work cost associated to maximal cooling is given by
\begin{equation}
\label{eq:deltaFmaxincoh}
\Delta F_{\text{inc}}^*= \lim_{T_H \rightarrow \infty} \Delta F_{\text{inc}}(T_H) = E_C (r_C-\frac{1}{2}). 
\end{equation}
 Note that despite appearances, the above expression is not independent of $E_B$, as the machine qubits are mutually constrained by the degeneracy condition \eqref{eq:deg}.

More generally, as the ground state population $r_{\text{inc}}$ is monotonic in $r_C^H$, see Eq.~\eqref{eq:rSf}, and thus in $T_H$, one can cool to any temperature between $T_R$ and $T_{\text{inc}}^*$ by varying $T_H$ continuously between $T_R$ and infinity. The associated work cost is given by Eq.~\eqref{eq:deltaF1}; see \figref{fig:Tf_single}.

Note that the minimum achievable temperature in this scenario is lower bounded away from absolute zero. Taking the limits $T_H \rightarrow \infty$ and then $E_B \rightarrow \infty$, $r_{\text{inc}}$ tends to $(1+r)/2$. The work cost diverges in this limit. This is in contrast to scenario 2 presented in the following section, where for an unbounded work cost, one can cool arbitrarily close to absolute zero.

\subsection{Scenario 2: coherent operations}\label{sec:single-cycle2}

We now turn to the second scenario, where any joint unitary operation can be applied to the target and machine qubits. The freedom in unitary operation means that the resonance condition $E_B = E + E_C$ is in principle not required to allow cooling, in contrast to scenario 1. However, as the cooling in either scenario depends on the choice of machine qubits, the freedom to choose them represents an extra level of control. In order to make a meaningful comparison between coherent and incoherent operations, we will therefore enforce the resonance condition for scenario 2 as well. 

We first investigate the lowest achievable temperature. By definition this is obtained by maximizing the ground state population of the target qubit. If we express the state of all three qubits as a density matrix $\rho$ in the energy eigenbasis, then the initial state is seen to be diagonal from Eq.~\eqref{eq:initialstate} and the reduced state of the target is given by $\text{Tr}_{BC}(\rho)$. Its ground state population is then simply given by adding the populations (diagonal elements) of the 4 following states: $\{\ket{000},\ket{001},\ket{010},\ket{011}\}$. \fab{Making use of the Schur-Horn theorem as argued in \secref{sec:OneMco} one reaches optimal cooling by unitarily rearranging the populations such that the four largest populations of the initial state are mapped to the four levels contributing to the ground state population of the target.

\gh{Labeling the population of the state $\ket{ijk}$ in the corresponding initial density operator \(\rho^{\text{in}}\) by \(p_{ijk}\)}, and arranging them in decreasing order of magnitude, we find}
\begin{align}
p_{000} \!>\! \{ p_{001}, p_{100} \} \!>\! p_{010} \!=\! p_{101} \!>\! \{ p_{011}, p_{110} \} \!>\! p_{111},
\end{align}
where $\{\}$ denotes populations whose ordering depends on whether $E_C > E$ or $E_C < E$. Thus the only change necessary to optimize cooling is to swap the populations of $\ket{100}$ and $\ket{011}$, and this leads to a final ground state population of $r_{\text{coh}}^* = r_B$, corresponding to the remarkably simple final temperature
\begin{equation}\label{eq:Tmincoh}
T_{\text{coh}}^* = T_R \frac{E}{E_B}.
\end{equation}
This is the lowest achievable temperature in scenario 2, when the work cost is unrestricted.\\

We now turn to the question of optimizing the work cost. Indeed, on inspection of the end point of the above procedure, one finds that within the ground and excited subspaces of the target qubit, one can perform unitaries that rearrange populations without affecting cooling, but that extract energy back from the system, hence decreasing the work cost of the cooling procedure.

We illustrate this subtlety with the end point of the simple swap above. The only modified populations after the swap are those of the states $\ket{100}$ and $\ket{011}$. Denoting the new population of energy level \(\ket{ijk}\) by $p_{ijk}^\prime$, we have that $p_{011}^\prime = p_{100}$ and $p_{100}^\prime = p_{011}$, with the rest unchanged. Thus the new ordering is
\begin{align}
p^\prime_{000} \!>\! \{ p^\prime_{001}, p^\prime_{011} \} \!>\! p^\prime_{010} \!=\! p^\prime_{101} \!>\! \{ p^\prime_{100}, p^\prime_{110} \} \!>\! p^\prime_{111}.
\end{align}
Although the ground state population is maximized by this swap, one sees that its energy is not minimal, since e.g. $p^\prime_{011} > p^\prime_{010}$.  As a consequence, one could now extract energy without changing the ground state population by simply swapping the levels $\ket{011}$ and $\ket{010}$. Formally, this implies that within each subspace of the target qubit (ground and excited), the state is not \emph{passive} \cite{passive1}, i.e. the populations are not ordered in decreasing order in energy within each subspace.
\fab{This showcases the general fact  that the state the optimal unitary drives the system to, necessarily has to be passive within each of these subspaces. In the maximal cooling case, as shown in \appref{subsec:endpoint}, this passivity condition remarkably turns out to also be sufficient.}
If one thus follows performing the unitary that reorders each subspace to be passive, and subtracts the energy extracted from the work cost, we arrive at the optimal work cost corresponding to maximal cooling, $\Delta F_{\text{coh}}^*$. \\

We find that there are two cases. If $E_C \leq E$, then
\begin{align}
\label{eq:deltaFmaxcohBig}
\Delta F_{\text{coh}}^* &= E_C \left( r_B - r \right).
\end{align}
Note that this end point can be achieved by simply performing the unitary that swaps the states of qubits $A$ and $B$. On the other hand, if $E_C > E$, then
\begin{align}
\label{eq:deltaFmaxcohSmall}
\Delta F_{\text{coh}}^* &= \left( E_C - E \right) \left( r_C - r \right) + E_C \left( r_B - r_C \right).
\end{align}
The unitary that achieves this result is the sequence of two swaps - first the swap between the target and qubit $C$, followed by the swap between the target and qubit $B$.

Remarkably, these two expressions can be intuitively understood in the following manner. In order to achieve cooling on the target qubit, one would swap its state with a qubit of the machine (or qubit subspace from the machine, also called a "virtual qubit" \cite{virtual}, see \appref{app:virtualqubit}, that has a larger energy gap between its ground and excited states than the target qubit. However, doing so requires moving population against the energy difference between the target and the specific machine qubit. Minimizing the work cost of the cooling procedure therefore amounts  to swapping the state of the target qubit with the state of the machine qubit with the minimal energy gap as long as this one is bigger than the energy $E$ of the target qubit.

If $E_C \leq E$, then the smallest qubit subspace of the machine that has a higher energy gap than the target is qubit $B$, and the optimal procedure is to swap the states of those two qubits. This has a work cost $E_B - E = E_C$ per population. In contrast, when $E_C > E$, then qubit $C$ is the machine qubit with the smallest energy gap bigger than $E$ ($E_C < E_B$ by definition). We thus begin by swapping the target qubit with qubit $C$, at a work cost per population of $E_C - E < E_C$, and only after proceed to cool further by swapping the target qubit with qubit $B$, at higher work cost. This two cases respectively lead to Eqs.~\eqref{eq:deltaFmaxcohBig} and \eqref{eq:deltaFmaxcohSmall} when the work cost is unrestricted.

We now move to the case where the work cost is restricted. Equivalently, we consider the problem of cooling to a certain temperature (above $T_{\text{coh}}^*$), and derive the minimal associated work cost.  Intuitively, as the lowest temperature given by Eq.~\eqref{eq:Tmincoh} can be reached by a full swap (or a sequence of two full swaps if $E_C > E$), we might expect that an optimal strategy for reaching an intermediate temperature will be a partial swap. 

{This is indeed the case. In analogy with \eqref{eq:workminimisation_onequbit}, the minimal work cost for a given target temperature $T_{coh}$ and corresponding ground-state population $r_{coh}$ is given by
\begin{equation}
\label{eq:workminimisation_onequbit}
\underset{\vv{\rho}\prec\vv{\rho^{\text{in}}}}{\min} \, \vv{\rho}\cdot \vv{H} \hspace{0.5cm} \text{s.t.} \hspace{0.5cm} 
\sum_{i=1}^4\rho_i  = r_{coh},
\end{equation}
where $\vv{\rho}$, $\vv{\rho^{\text{in}}}$, and $\vv{H}$ represent the diagonals of $\rho$, $\rho^{in}$, and $H$. This minimisation can be solved analytically, as shown in \appref{app:Uopt}. The optimal unitary and associated work cost depends on whether $E_C \leq E$ or $E_C > E$.

For the case $E_C \leq E$, we can parametrise a partial swap of the target with machine qubit B as in \eqref{eq:Lonequbit} by}
\begin{equation}
U_\leq(\mu) = e^{-i t L_{AB}} ,
\end{equation}
where
\begin{equation}
\label{eq:LAB}
L_{AB} = i \ket{01}_{AB}\bra{10} - i \ket{10}_{AB}\bra{01} 
\end{equation}
It is useful to define $t = \arcsin(\sqrt{\mu})$, where $\mu\in[0,1]$ is a swapping parameter. The ground state population of the target qubit and the free energy cost are given by
\begin{align}
	r_{\text{coh},\leq} (\mu) &= r + \mu \left( r_B - r \right), \label{eq:rAf2leq}\\
	\Delta F_{\text{coh},\leq} (\mu) &= \mu E_C \left( r_B - r \right),\label{eq:deltaF2leq}
\end{align}
with $\mu=0$ corresponding to no swap and $\mu=1$ to a full swap, which is the limit of maximal cooling, as previously discussed, see Eqs.~\eqref{eq:Tmincoh} and \eqref{eq:deltaFmaxcohBig}.

Similarly, for the case $E_C > E$ we employ the unitary that first swaps qubits $A$ with $C$ until the required temperature is reached, and if this is not the case after the full swap, continue by swapping the new state of qubit $A$ with qubit $B$. This unitary can be parametrised as
\begin{equation}
	U_>(\mu) =  e^{-i g(\mu) L_{AB}}  e^{-i f(\mu) L_{AC}} ,
\end{equation}
\fab{where $f(\mu) = \arcsin(\sqrt{\min\{2\mu,1\}})$, $g(\mu) = \arcsin(\sqrt{\max\{2\mu-1,0\}})$,} and $L_{AC}$ is definied analogously to Eq.~\eqref{eq:LAB}. Again $\mu\in[0,1]$ such that for $\mu \leq \frac{1}{2}$, a partial swap between A and C is performed and for $\frac{1}{2}<\mu\leq 1$, an additional partial swap between A and B is performed. The ground state population for the strategy defined by $U_>$ is:
\begin{align}\label{eq:rAf2gtr}
	r_{\text{coh},>}(\mu) &= \begin{cases}
		r + 2\mu(r_C-r), 					 &  \mu\in[0,\frac{1}{2}] \\
		r_C + (2\mu-1)(r_B-r_C),  & \mu\in(\frac{1}{2},1]
	\end{cases},
\end{align}
and the work cost for the same strategy is given by
\begin{align}\label{eq:deltaF2gtr}
	\Delta F_{\text{coh},>}(\mu) &= \begin{cases}
		2\mu (E_C-E) (r_C-r),				 & \mu\in[0,\frac{1}{2}] \\
\begin{aligned}
&(E_C - E) (r_C-r)\\
&+ (2\mu-1) E_C (r_B-r_C)
\end{aligned},			& \mu\in(\frac{1}{2},1]
	\end{cases}.
\end{align}

The final temperature can again be computed by inverting Eq.~\eqref{eq:bdist} using the ground state population $r_{\text{coh}}$ as given by Eq.~\eqref{eq:rAf2leq} or Eq.~\eqref{eq:rAf2gtr} according to the relative size of $E$ and $E_C$. Since both $\Delta F_{\text{coh}}$ and $T_{\text{coh}}$ are given as functions of $\mu$, by varying $\mu$ from 0 to 1, we can parametrically map out the amount of cooling and the associated work cost,  as shown in \figref{fig:Tf_single} and discussed in \secref{subsec:comparison}.

\subsection{Comparison of scenarios 1 and 2}
\label{subsec:comparison}
Our main results in the single-cycle regime are summarised in \figref{fig:Tf_single}. There we map out the amount of cooling vs.~the associated work cost for both scenarios 1 and 2. In the first case, the curve is generated from Eqs.~\eqref{eq:rSf} and \eqref{eq:deltaF1} (inverting Eq.~\eqref{eq:bdist} to extract $T_{\text{inc}}$) and is parametric in the hot bath temperature $T_H$. In the second case, the curve is generated from Eqs.~\eqref{eq:deltaF2leq} and \eqref{eq:rAf2leq} (inverting Eq.~\eqref{eq:bdist} to extract $T_{\text{coh}}$) and is parametrised by the swapping parameter $\mu$. We selected \(E_C \leq E\) for \figref{fig:Tf_single} but note that the behavior of the curve for \(E_C > E\) is similar, changing only by the fact that the coherent curve has a discontinuity in the first derivative at \(\mu = \frac{1}{2}\).

\begin{figure}[t]
	\centering
	\includegraphics[width=8.5cm]{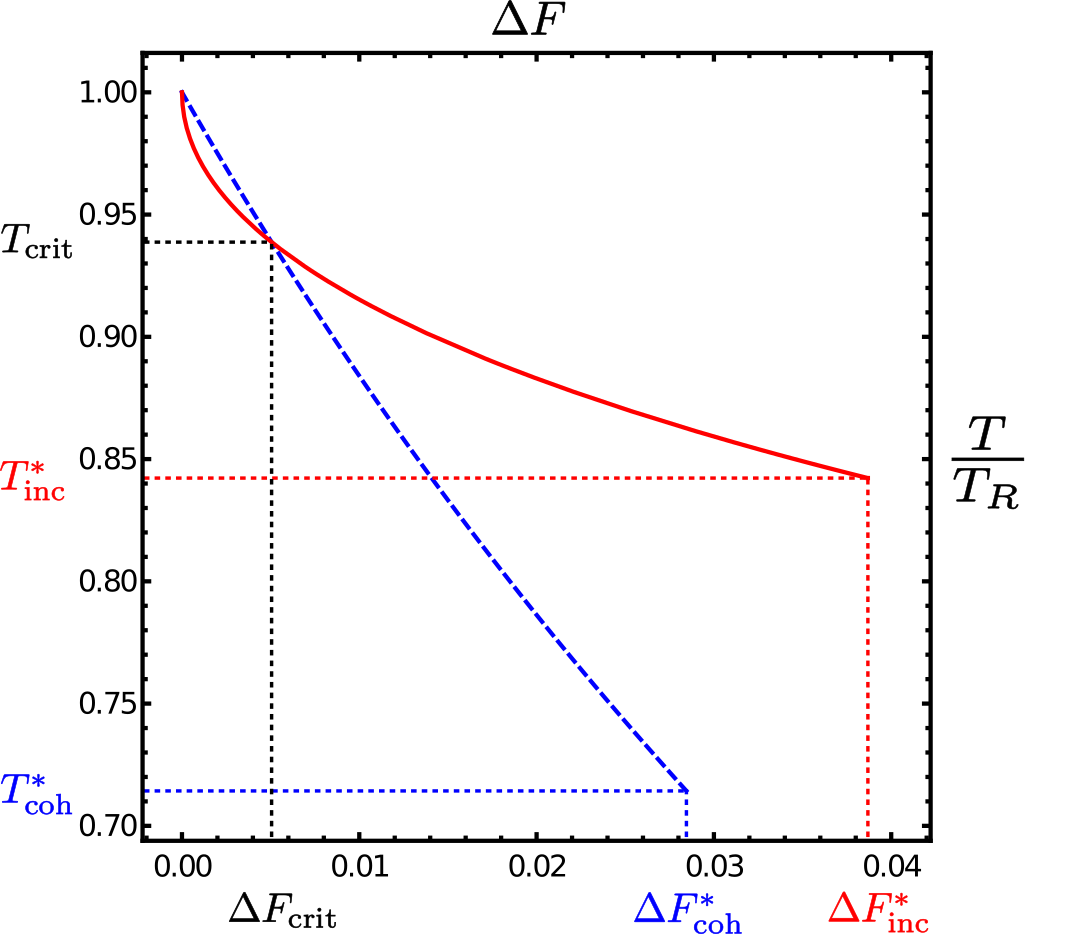}
\caption{Parametric plot of the relative temperature of the target qubit \(\frac{T}{T_R}\) as a function of its work cost $\Delta F$ for $E_C=0.4$ and \(T_R=1\). The red solid curve corresponds to scenario 1 (incoherent operations), the blue dashed, to scenario 2 (coherent operations). When the cooling is maximal (i.e. the work cost is unrestricted), scenario 2 always outperforms scenario 1, $T_{\text{coh}}^{*} < T_{\text{inc}}^{*}$ and $\Delta F_{\text{coh}}^* < \Delta F_{\text{inc}}^*$. However, below a critical work cost $\Delta F_\text{crit}$, scenario 1 always outperforms scenario 2.\label{fig:Tf_single}}
\end{figure}

The plot illustrates several interesting observations. First, comparing the endpoints of the curves, we see that coherent operations achieve a lower minimal temperature (i.e.~stronger cooling) and that the associated work cost is lower than the one for achieving the minimal temperature with incoherent operations. This is true generally. As can be seen by comparing Eqs.~\eqref{eq:Tminincoh} and \eqref{eq:Tmincoh}, $T_{\text{coh}}^* < T_{\text{inc}}^*$ since
\begin{equation}
	\ln \left( \frac{r+r_B}{2-(r+r_B)} \right) < \frac{E_B}{T_R},
\end{equation}
where we use that $E_B>E$. Similarly, comparing Eqs.~\eqref{eq:deltaFmaxincoh} and \eqref{eq:deltaFmaxcohBig}, we see that $\Delta F_{\text{coh}}^* < \Delta F_{\text{inc}}^*$, see \appref{app:endpoint}. Thus, for maximal cooling, coherent operations always perform better than incoherent ones in the single-cycle regime.

Second, perhaps surprisingly, for non-maximal cooling with low work cost, incoherent operations may outperform coherent ones. In fact, for sufficiently low work cost, this is always the case. This can be seen by looking at the derivatives of the two curves in \figref{fig:Tf_single} with respect to $\Delta F$, close to $\Delta F = 0$. For the incoherent scenario, using the parametrization w.r.t. $T_H$, we have
\begin{equation}
	\lim_{\Delta F_{\text{inc}} \rightarrow 0} \frac{dT_{\text{inc}}}{d\Delta F_{\text{inc}}} = \lim_{\Delta F_{\text{inc}} \rightarrow 0} \frac{dT_{\text{inc}}}{dT_H} \left( \frac{d \Delta F_{\text{inc}}}{dT_H} \right)^{-1} = -\infty .
\end{equation}
On the other hand, for the coherent scenario, using the parametrization in terms of $\mu$, we find that
\begin{equation}
\lim_{\Delta F_{\text{coh}} \rightarrow 0} \frac{dT_{\text{coh}}}{d\Delta F_{\text{coh}}} = - \frac{1}{E_C' r (1-r)\ln^2(\frac{1-r}{r})},
\end{equation}
where $E_C' = E_C$ for $E_C\leq E$ and $E_C'=E_C-E$ if $E > E_C$. This expression is negative but finite. Hence, since both curves begin at the same point, the incoherent curve must lie below the coherent one for sufficiently small $\Delta F$. From the previous observations, it follows that the curves must cross at least once. Numerically we find that there is always exactly one such crossing. \fab{Hence, there exists a critical work cost $\Delta F_{\text{crit}}$ below which incoherent \gh{operations} perform better than coherent \gh{ones,} while the reverse is true above some \(\Delta F_{\text{crit}}' \geq \Delta F_{\text{crit}}\), with $ \Delta F_{\text{crit}}'=\Delta F_{\text{crit}}$ numerically strongly supported to be true}. We denote the temperature of the target qubit at the crossing point by $T_{\text{crit}}$. In \appref{app:cross} we study the behaviour of $T_{\text{crit}}$ and $\Delta F_{\text{crit}}$ as functions of $T_R$ and $E_C$.

\gh{
\section{Two-qubit machine: Repeated operations and asymptotic regime}
\label{sec:repeat}
}

\gh{In this section we go beyond the single-cycle regime discussed above. 
In the repeated and asymptotic regimes, the cooling unitaries of either scenario can be repeated a finite number of times \gh{or indefinitely}, inter-spaced by steps in which the machine qubits (B and C) are rethermalised to the temperatures of their baths, \textit{i.e.} \gh{respectively} $T_R$ and $T_H$ in scenario 1 and $T_R$ for both \gh{machine qubits} in scenario 2. 

The target qubit is assumed not to rethermalise during the cooling process. In this way, the bounds we obtain on achievable temperature and work cost are general. Moreover, these bounds can be attained in the limit where the thermal coupling of the target qubit is much smaller than other couplings in the system. } \\

Before going into details, we first summarize the main results of this section.\\

\begin{enumerate}
    \item Repeated operations do enhance the cooling, as the lowest achievable temperatures in both scenarios are strictly lower than in the single-cycle case.
    \item For incoherent operations (scenario 1), the asymptotic regime (the limit of infinite repetitions) corresponds to autonomous refrigeration. Specifically, we recover the cooling and work cost obtained in the steady-state of a three-qubit autonomous refrigerator \cite{auto0,footnote1}.
    \item For coherent operations (scenario 2), the asymptotic regime corresponds to algorithmic cooling. In particular, the cooling bounds correspond to known results \cite{raeisi,algo3}.
    \item In the asymptotic regime, incoherent operations (scenario 1, autonomous cooling) achieve the same maximal cooling (for $T_H \rightarrow \infty$) as that of a single-cycle coherent operation (scenario 2). See our accompanying article \cite{OurLetter} for more details on this relation. 
    \item In both scenarios, the approach to the asymptotic state of the target qubit (w.r.t. its ground state population) is exponential in the number of repetitions.
\end{enumerate}

In the following, we will start by discussing repeated operations in scenario 1 and then move to scenario 2.

\subsection{Scenario 1: repeated incoherent operations} \label{sec:repeatinc}

\begin{figure}[t]
	\centering
	\includegraphics[width=9cm]{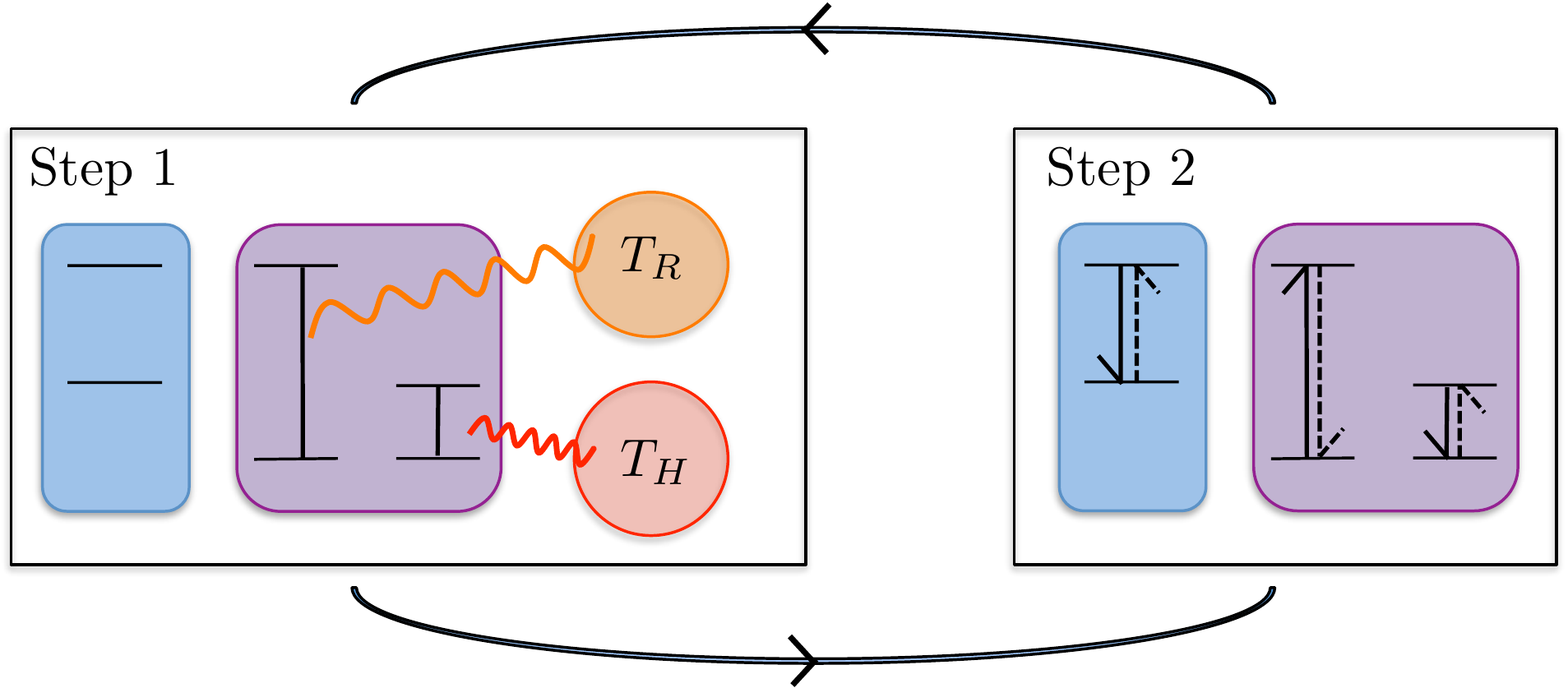}
	\caption{Scenario 1, repeated incoherent operations. Each cycle comprises the steps of 1. the environment reset of qubit $B$ and resource input into qubit $C$, and 2. the cooling unitary operation.\label{fig:repeat_inc}}
\end{figure}

As mentioned above, the scenario of repeated incoherent operations involves a rethermalisation of the machine qubits to their respective baths in every step. This is followed by an energy-conserving unitary operation between the machine and the target. Thus, the cooling cycle consists in the following steps (see \figref{fig:repeat_inc}), which can be repeated any number of times.
\begin{enumerate}
    \item \emph{Environment reset and resource input - } \fab{Qubit C is heated to \(T_H\) after the machine \gh{has} been brought back to the environment temperature \(T_R\).}
    \item \emph{Cooling step - } The energy-preserving unitary given by Eq.~\eqref{eq:u_incoh} (swapping the degenerate states $\ket{101} \leftrightarrow \ket{010}$) is applied.
    \end{enumerate}

Prior to the first step, all three qubits are at temperature $T_R$. Then qubit C is heated to $T_H$. After this, every cooling step lowers the temperature of the target qubit $A$, but also cools down qubit $C$ while heating qubit $B$, which necessitates the reset of $B$ to $T_R$ and the heating of $C$ to $T_H$ before the swap can be repeated. This process can be conveniently characterized using the notion of a virtual qubit, \cite{virtual}. The virtual qubit corresponds to the subspace of the machine which is involved in the cooling swap with the target qubit. See \appref{app:virtualqubit} and \appref{app:incohop}, for a detailed explanation. It is thus the properties of the virtual qubit that determine the cooling in each step. For the unitary operation here, the virtual qubit is spanned by the states $\{\ket{01}_{BC},\ket{10}_{BC}\}$. In each repetition, the rethermalisation of qubits $B$ and $C$ (Step 1) resets the virtual qubit.

In the asymptotic limit of infinite repetitions, we find that the ground state population of the target goes to (see \appref{app:incohop}) 
\begin{align} \label{equ:rincinf}
r_{\text{inc},\infty} = \frac{1}{1 + e^{-E/T_{\text{inc},\infty}}}, 
\end{align}
where $T_{\text{inc},\infty}$ is equal to the temperature of the virtual qubit,
\begin{align}\label{eq:TVinc}
	T_{\text{inc},\infty} = T_{V,\text{inc}} = \frac{E}{\frac{E_B}{T_R} - \frac{E_C}{T_H}}.
\end{align}

For a finite number $n$ of repetitions, the ground state population of the target qubit approaches the asymptotic value as
\begin{align}
\label{eq:incoherentnstep}
    r_{\text{inc,n}} &= r_{\text{inc},\infty} - \left( r_{\text{inc},\infty} - r \right) \left( 1 - N_{V,\text{inc}} \right)^n,
\end{align}
where $N_{V,\text{inc}} = r_B (1-r_C^H)+(1-r_B) r_C^H$ is the norm of the virtual qubit (i.e. the total population in the subspace $\{\ket{01}_{BC},\ket{10}_{BC}\}$). Note that all of the quantities in the above expressions are functions of $T_H$.

As argued also in \appref{app:incoherentauto}, the asymptotic temperature given by Eq.~\eqref{eq:TVinc} is exactly equal to the temperature obtained in the steady state of an autonomous refrigerator \cite{auto0}, and thus the asymptotic state of the target qubit under repeated incoherent operations is the same as the steady state of the autonomous fridge. More precisely,
\begin{align}
	r_{\text{inc},\infty} &= r_{\text{auto}} & &\text{i.e.} \quad T_{\text{inc},\infty} = T_{\text{auto}}.
\end{align}
This highlights an interesting connection between discrete and continuous cooling procedures; see also \cite{Raam}.

Furthermore, showcasing one of the result of our accompanying article \cite{OurLetter}, the maximal cooling in either case, obtained in the limit $T_H \rightarrow \infty$, is the same as for a single-cycle coherent operation (c.f. Eq.~\eqref{eq:Tmincoh})
\begin{equation}
   T_{\text{auto}}^* = \lim_{T_H \rightarrow \infty} T_{\text{auto}} = \frac{E}{E_B}T_R = T_{\text{coh}}^* \,.
\end{equation}
Note that in this limit we have that $N_{V,\text{inc}} = \frac{1}{2} $. Hence in each repetition the difference between the current and asymptotic ground state population is halved. 

Finally, we discuss the work cost of cooling. Detailed calculations are given in \appref{app:incohop}. Intuitively, the free energy drawn from the hot bath can be divided into two parts: i) the energy required in the first instance of step 1, to initially heat up qubit C to temperature $T_H$, and ii) the energy required in all subsequent repetitions of step 1, to bring qubit C back to $T_H$. This is straightforwardly calculated from the change in population of qubit $C$, which is equal to the change in population of qubit $A$, due to the form of the energy-preserving unitary in step 2. The total heat drawn from the hot bath for $n$ repetitions is
\begin{align}
	Q^H_n &= E_C \left( r_C - r_C^H \right) + E_C \left( r_{\text{inc},n-1} - r \right).
\end{align}

In the asymptotic case, we find that the total heat drawn from the hot bath is exactly the same as if we had run the autonomous refrigerator beginning from the initial state, i.e. $Q^H_{\infty} = Q^H_{\text{auto}}$. See \appref{app:incoherentauto} for a detailed proof.

In order to cool to a given temperature, it is possible to vary the number of repetitions as well as the temperature of the hot bath $T_H$. One may therefore ask which is the most cost-efficient strategy. Generically, we observe (see \figref{fig:compare_incoherent}) that for a given final temperature, implementing many cooling swaps has a lower work cost than using fewer swaps (at higher temperature $T_H$). As implementing a higher number of swaps would take longer time, this observation is reminiscent of the power vs efficiency trade-off in continuously operated machines \cite{ContinuousDevices}.

\begin{figure}[h]
	\centering
	\includegraphics[width=8.5cm]{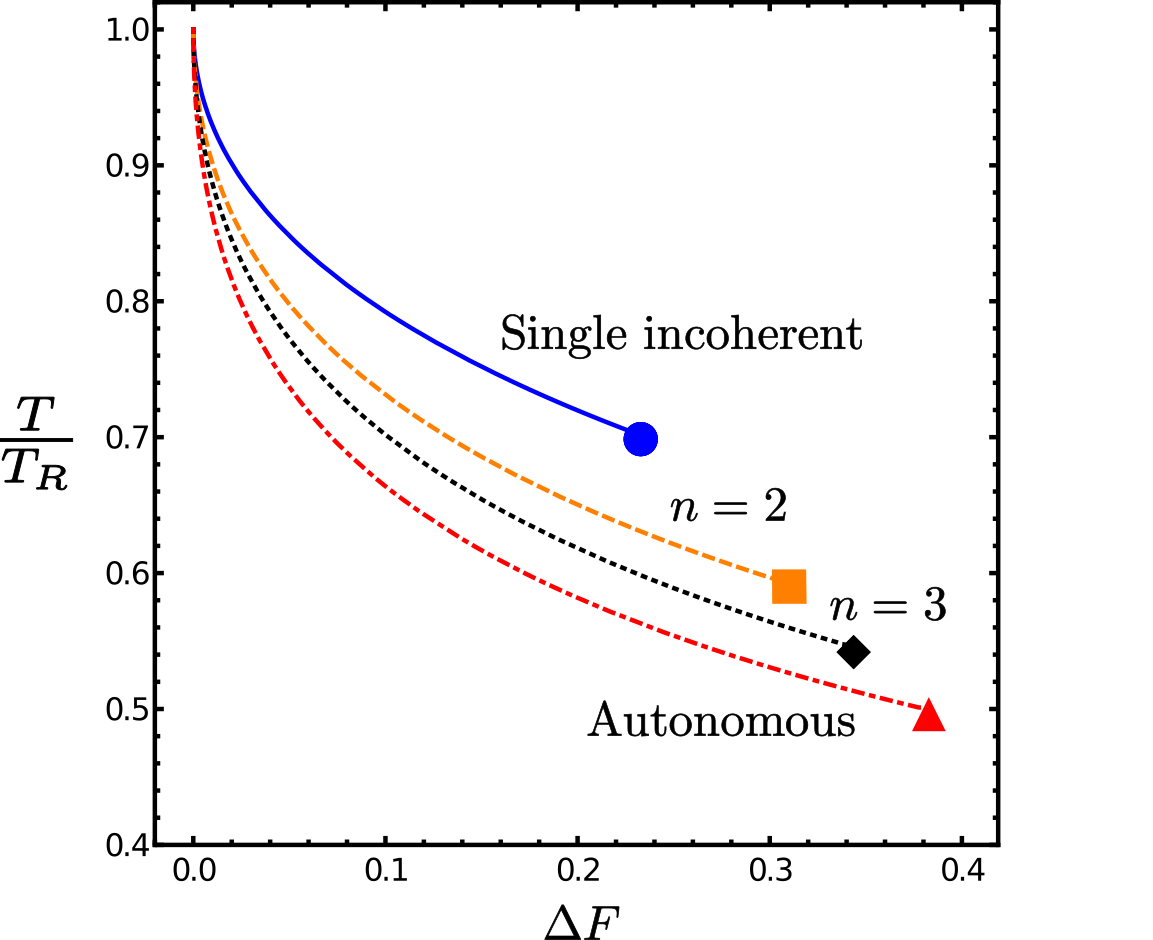}
	\caption{Cooling vs the work cost for different number of repetitions of incoherent operations. Each curve is parametrized by the temperature of the hot bath, $T_H$. \(E_C\), \(E\) and \(T_R\) are all set to \(1\).\label{fig:compare_incoherent}}
\end{figure}

\subsection{Scenario 2: repeated coherent operations}

\begin{figure}[t]
	\centering
	\includegraphics[width=9cm]{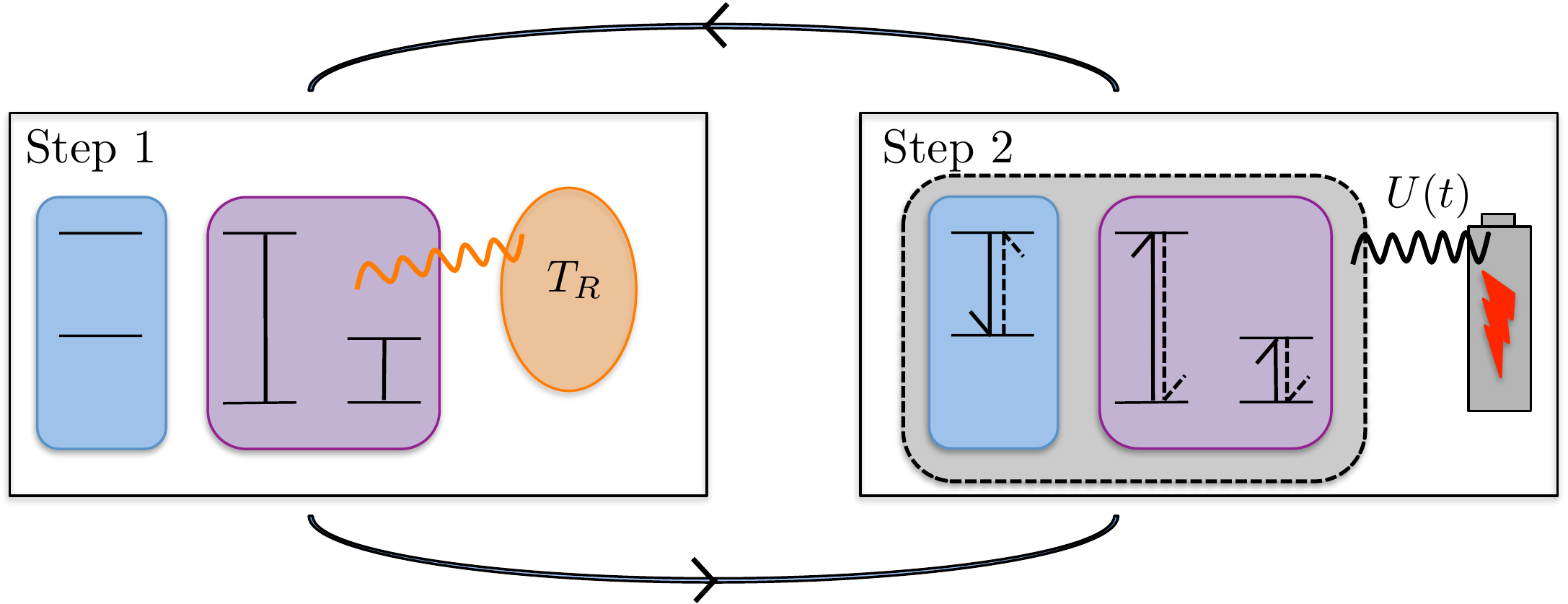}
	\caption{Scenario 2, coherent operations, in the regime of repeated operations. Each cycle comprises the steps of 1. the environment reset of the machine, and 2. cooling.\label{fig:repeat_coh}}
\end{figure}

When discussing single-cycle cooling via coherent operations in \secref{sec:single-cycle2}, we found that \fab{according to the relative size of \(E_C\) and \(E_A\), there were} two different sets of unitaries which lead to the lowest achievable temperature $T_{\text{coh}}^*$ of the target qubit. The first procedure involved only qubits A and B, and maximal cooling could be achieved with a single-qubit machine (i.e. without qubit C). \fab{This procedure was found to be optimal when \(E_C \leq E_A\).} However, \fab{although this procedure was also valid when \(E_C > E\)} we  showed that \fab{in this case} a different procedure, involving all three qubits, could reach the same temperature, but at a lower work cost.

In the present section we discuss cooling via repeated coherent operations. We find that \fab{after the first cycle a procedure similar to the second procedure in the single-cycle case must be applied in order to cool further}. In fact, one can immediately see that for a single-qubit machine, repetitions do not lower the temperature further beyond the single-cycle case. Since the single-qubit machine simply swaps qubits A and B, there is no unitary operation that can cool further, even after $B$ is re-thermalised to the ambient temperature $T_R$.

On the contrary, using a two-qubit machine one can enhance the cooling beyond the single-cycle case. This is achieved by repeating the following steps (see \figref{fig:repeat_coh}, and \appref{app:repeatcoherent} for more details):
\begin{enumerate}
    \item \emph{Environment reset - } Qubits B and C are brought back to the environment temperature $T_R$.
    \item \emph{Cooling step - } The unitary swapping the populations of the states $\{\ket{100},\ket{011}\}$ is applied.
\end{enumerate}

As qubit A is cooled by swapping with the subspace $\{\ket{00}_{BC},\ket{11}_{BC}\}$ of the machine, we identify this subspace as the relevant virtual qubit of the machine, and denote its norm as $N_{V,\text{coh}}$. Following calculations given in \appref{app:repeatcoherent}, one finds that in the asymptotic limit (infinite repetitions), the ground state population of the target goes to
\begin{align}
	r^*_{\text{coh},\infty}= \frac{1}{1+e^{-E/T^*_{\text{coh},\infty}}},
\end{align}
where the asymptotic temperature takes the simple form
\begin{align}
	T^*_{\text{coh},\infty} = T_R \frac{E}{E_B+E_C}. 
\end{align}

This recovers the result of our accompanying article \cite{OurLetter} and the results of heat bath algorithmic cooling with no compression qubit. Note that in the coherent case, the temperature of the virtual qubit is just $T_R$, since both the machine qubits are at $T_R$ after rethermalization. However, due to the swap, the final temperature of the target qubit is not simply the virtual temperature, but rather is modulated by the ratio of energies of the target and virtual qubits, see \appref{app:virtualqubit} for more detail. This is why maximal cooling in the asymptotic case is attained by picking the virtual qubit of the largest energy gap, which for the two qubit machine is $\{\ket{00}_{BC},\ket{11}_{BC}\}$.

For a finite number $n$ of repetitions, the ground state population of the target approaches its asymptotic value as
\begin{align}
\label{eq:coherentnstep}
    r_{\text{coh},n}^* &= r^*_{\text{coh},\infty} - \left( r^*_{\text{coh},\infty} - r \right) \left( 1 - N_{V,\text{coh}} \right)^n.
\end{align}

Thus we see that cooling is enhanced compared to the single-cycle case, i.e. $T^*_{\text{coh},n} < T_{\text{coh}}^*$. (Note that we use $*$ here to denote the lowest achievable temperature for a fixed number of repetitions.)

We proceed to discuss the work cost of this process. Note that the optimal work cost of the first coherent operation has already been discussed in \secref{sec:single-cycle2}, and is denoted by $\Delta F^*_{\text{coh}}$. For further repetitions of the steps presented above, free energy is needed to implement the unitary in step 2, as populations of states with different energies are swapped. (Step 1 is free as it involves thermalisation of the machine qubits to the environment temperature $T_R$). The work cost of $n$ full repetitions of the cycle is given by (details in \appref{app:repeatcoherent})
\begin{align}\label{eq:Fcohn}
	\Delta F_{\text{coh},n}^* &= \Delta F_{\text{coh}}^* + 2 E_C \left( r_{\text{coh},n}^* - r_B \right),
\end{align}
where $\Delta F_{\text{coh}}^*$ is the work cost in the single-cycle regime given by Eq.~\eqref{eq:deltaFmaxcohBig}. In the asymptotic regime, the work cost becomes
\begin{align}
	\Delta F^*_{\text{coh},\infty} &= \Delta F_{\text{coh}}^* + 2E_C \left( r^*_{\text{coh},\infty} - r_B \right),
\end{align}
where $r^*_{\text{coh},\infty}$ is the final ground-state population for the target qubit corresponding to $T^*_{\text{coh},\infty}$. Following the argument expanded in full detail in \appref{app:repeatcoherent}, the steps presented above are the only way to cool the target after the first (optimal) coherent operation, and thus $\Delta F^*_{\text{coh},n}$ represents the minimum work cost given the lowest achievable temperature after $n$ repetitions.

\subsection{Scenario 2: algorithmic cooling}

\begin{figure}[b]
	\centering
	\includegraphics[width=8cm]{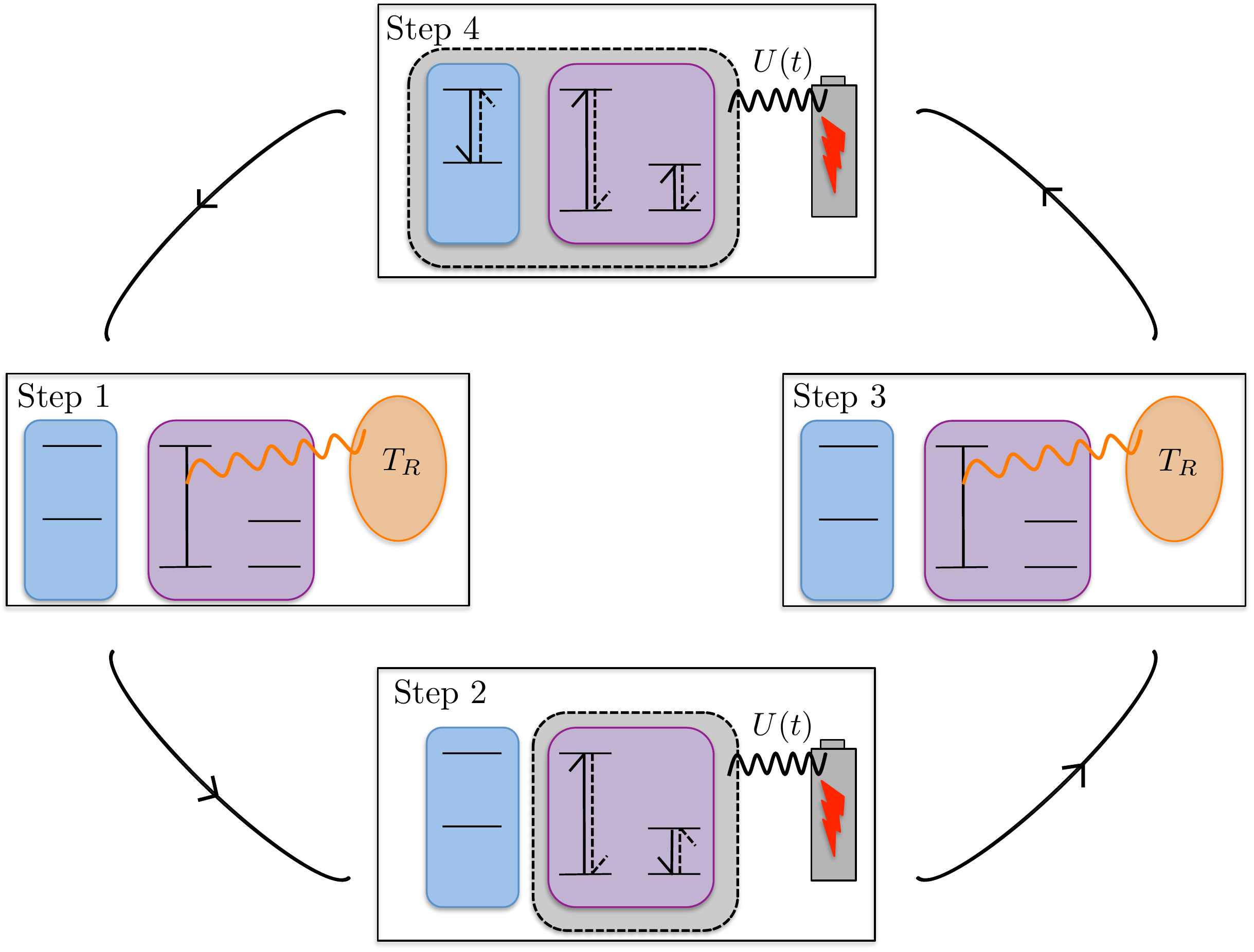}
	\caption{Scenario 2 in the regime of algorithmic cooling. Each cycle comprises the steps of 1. environment reset, 2. precooling, 3. environment reset, and 4. cooling.\label{fig:repeat_algo}}
\end{figure}

It turns out that even stronger cooling can be obtained, by increasing the level of control compared to the above model of repeated coherent operations, specifically, by allowing for individual rethermalisation of each machine qubit separately. This model is equivalent to heat bath algorithmic cooling, this time with a compression qubit, as we will demonstrate shortly. The procedure consists in repeating the following steps, shown schematically in \figref{fig:repeat_algo}:
\begin{enumerate}
    \item \emph{Environment reset - } Qubit $B$ is brought back to the environment temperature $T_R$.
    \item \emph{Precooling - } The states of qubits $B$ and $C$ are swapped.
    \item \emph{Environment reset - } Qubit $B$ is brought back to the environment temperature $T_R$.
    \item \emph{Cooling step - } The unitary swapping the populations of the states $\ket{100}\leftrightarrow\ket{011}$ is applied.
\end{enumerate}
As before, the target qubit is swapped with the qubit subspace of the machine that has the highest energy gap, spanned by $\ket{00}_{BC}$ and $\ket{11}_{BC}$. However, thanks to the precooling step, the virtual temperature of this coldest qubit subspace is decreased, from $T_R$ to
\begin{align}
	T_{V,\text{algo}} &= T_R \frac{E_B + E_C}{2E_B}.
\end{align}

The final temperature is again determined by the virtual temperature. Following calculations given in \appref{app:algo}, in the asymptotic limit of infinite repetitions, the ground state population of the target qubit tends to
\begin{align}
	r_{\text{algo},\infty}^* &= \frac{1}{1+e^{-E/T_{\text{algo},\infty}^*}},
\end{align}
where the aysmptotic temperature is given by
\begin{align}\label{eq:Talgo}
T_{\text{algo},\infty}^* = T_R \frac{E}{2E_B} = \frac{T_{\text{coh}}^*}{2}.
\end{align}

The final temperature is thus half the temperature achieved via single-cycle coherent operations. Note that it is also half of the minimal achievable temperature $T_{\text{auto}}^*$ in the asymptotic incoherent regime. Moreover, since $E_B >E_C$, we see that the lowest achievable temperature of algorithmic cooling is strictly colder than that of repeated coherent operations. It is worth noting that the expression for the minimal temperature of Eq.~\eqref{eq:Talgo} perfectly matches known results in algorithmic cooling: specifically Eq. (7) of Ref. \cite{raeisi} (for the case of two reset qubits), as well as Eq. (10) of Ref. \cite{algo3}.

For a finite number of repetitions of the above cycle of steps, one finds that the ground state population of the target approaches $r_{\text{algo},\infty}^*$ as
\begin{align}\label{eq:ralgon}
r_{\text{algo},n} & = r_{\text{algo},\infty}^* - \left( r_{\text{algo},\infty}^* - r_0 \right) \left( 1 - N_{V,\text{algo}} \right)^n,
\end{align}
where $r_0$ is the population of the ground state before the first application of the procedure, and $N_{V,\text{algo}}$ is the norm of the virtual qubit $\{\ket{00}_{BC},\ket{11}_{BC}\}$ right before step 4 (i.e. after qubit $C$ has been pre-cooled and qubit $B$ rethermalized).

Finally, we discuss the work cost of this process. Free energy is needed to implement the unitaries in steps 2 and 4, as populations of states with different energies are swapped. Steps 1 and 3 have zero cost, since they only involve the environment bath. As detailed in \appref{app:algo}, the work cost after $n$ full repetitions is given by
\begin{equation}\label{eq:Falgon}
\begin{aligned}
	\Delta F_{\text{algo},n} = &E ( r_B - r_C ) + 2E_C ( r_{\text{algo},n} - r_0 ) \\
	&+ E ( r_{\text{algo},n-1} - r_0 )
\end{aligned}
\end{equation}

Let us first remark that for cooling to a temperature that would be achievable with repeated coherent operations, algorithmic cooling has a higher work cost, as is argued in \appref{app:optimalcoherent}, and on comparison of Eqs.~\eqref{eq:Fcohn} and \eqref{eq:Falgon}. Thus, in order to minimize the work cost, a better strategy consists in first cooling using repeated coherent operations, until the temperature cannot be lowered any further, and only then switch to algorithmic cooling. A detailed discussion of this sequence of operations may be found in \appref{app:optimalcoherent}. In the asymptotic case of infinite repetitions, the work cost of this procedure (denoted by $\Delta F^*$) becomes
\begin{equation}
\begin{split}
	\Delta F^*_{\text{algo},\infty} &= \Delta F^*_{\text{coh},\infty} + E \left( r_B - r_C \right) \\
						&\quad + (2 E_C+E)  \left( r^*_{\text{algo},\infty} - r_{\text{coh},\infty}^* \right).
\end{split}
\end{equation}
This procedure turns out to be optimal with respect to the work cost if one is interested in reaching the lowest achievable temperature \(T_{\text{algo},\infty}^*\). If one is however interested in cooling the target to a temperature between \(T_{\text{algo},\infty}^*\) and \(T_{\text{coh},\infty}^*\), fully precooling qubit \(C\) is unnecessary and there exists a better manner of proceeding after repeated coherent operations, where given the desired final temperature of the target, one tunes the precooling of qubit $C$ to be a partial rather than a full swap.

In \figref{fig:plot_compare_coherent}, we compare the work cost of the optimal sequence of operations (first repeated coherent, then optimized algorithmic cooling) against that of using standard algorithmic cooling from the beginning.

\begin{figure}[t]
	\centering
	\includegraphics[width=9cm]{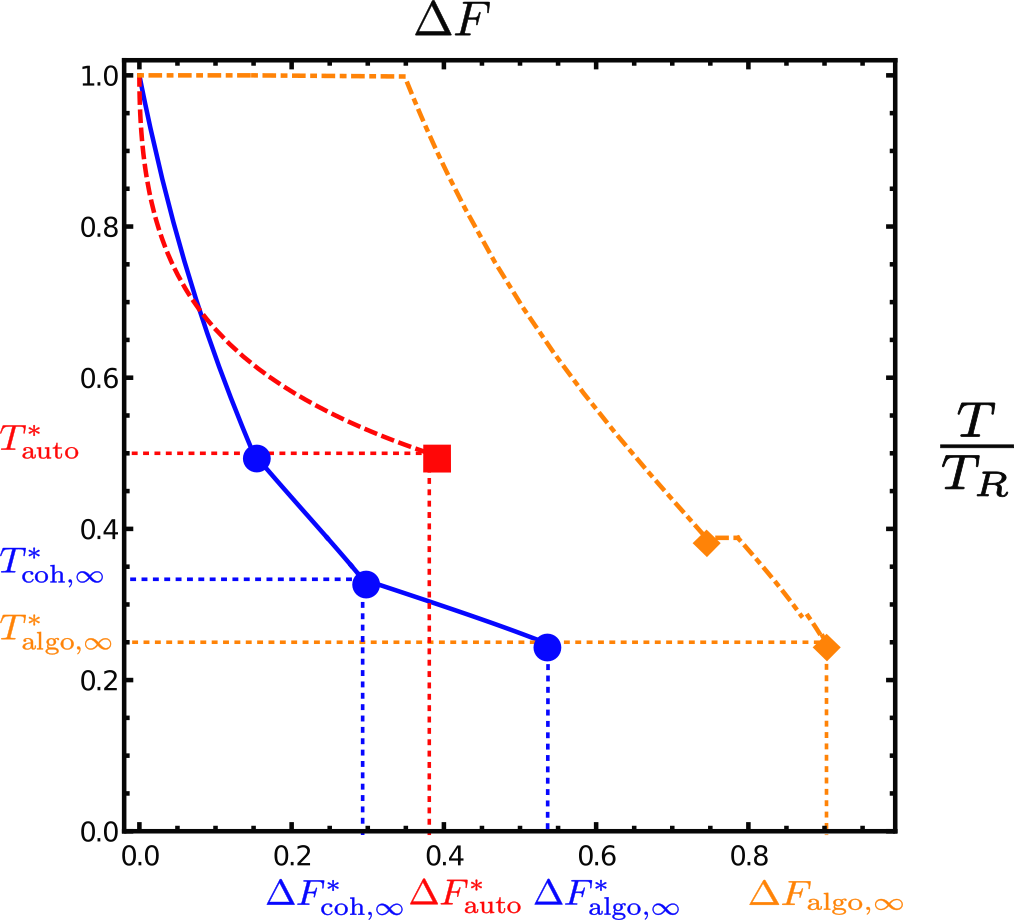}
	\caption{Comparison of the work cost of using algorithmic cooling from the beginning (orange dot-dashed), as opposed to the optimal sequence of coherent operations (blue solid line), and of an autonomous refrigerator (red dashed, parametrized w.r.t. $T_H$). \(E_C\), \(E\) and \(T_R\) are all set to \(1\).\label{fig:plot_compare_coherent}}
\end{figure}

Finally, it is worth noting that algorithmic cooling is rather expensive even when compared to autonomous cooling, for the same target temperature, see \figref{fig:plot_compare_coherent}. Thus, while algorithmic cooling can achieve the lowest temperatures, it may be the case, depending on the parameters of the problem, that an autonomous refrigerator can cool to any $T \geq T^*_{\text{auto}}$ more efficiently.

\jona{
\section{Saturating the second law \label{sec:secondlaw}}

Upon comparing the cooling performance of the minimal machines presented in this article with the ultimate performance bound set by each paradigm in our accompanying article \cite{OurLetter}, it is quite striking to notice that as simple as they are, the minimal machines already suffice to saturate the bound. The next natural question to ask is if these machines are also optimal in terms of the associated work cost. We in this section answer this question by the negative.

Clearly, fundamental limitations on the work cost arise from the second law. Specifically, the free energy change of the target qubit is a lower bound on the work cost. Here we present a family of $N$-qubit coherent machines which asymptotically saturate this bound. These machines have been introduced in Ref. \cite{paul} for demonstrating optimal work extraction from quantum states. Moreover, for any machine in the family, we construct an incoherent machine of $2N$ qubits achieving the same temperature. In the limit where the hot bath becomes infinite, the associated work cost is the same up to a constant offset \gh{that can be made arbitrarily small}.

As we have learned from section \ref{sec:onequbit}, a given temperature $T$ can be achieved via a single-qubit machine with energy gap $E_N = E \frac{T_R}{T}$. In order to minimize the work cost, we now introduce $N-1$ additional qubits with energy gaps (evenly) spaced between $E$ and $E_N$. The single swap is now replaced by a sequence of swaps between the target qubit and machine qubits in order of increasing energy gaps. This can be understood intuitively by noticing that the energy difference $\Delta E$ when swapping two qubits represents the work cost per unit of population transferred $\Delta r$ (see \appref{app:virtualqubit})
\begin{align}
	\Delta F &=  \Delta r \Delta E.
\end{align}
Hence, for a given final population transfer, replacing one single swap at large $\Delta E$ by $N$ swaps at smaller $\Delta E $ reduces the work cost. As shown in Ref. \cite{paul}, the total work cost of this procedure is given by
\begin{align}
	\Delta F &= \Delta F_{\text{target}} + O\left( \frac{1}{N} \right),
\end{align}
where $\Delta F_{\text{target}}$ is the increase in the free energy of the target qubit. In the limit $N \rightarrow \infty$, the work cost is exactly the free energy transferred to the system, which is the lower bound provided by the second law. See \appref{app:secondlaw} for details.

The next question is whether we can find an incoherent machine which also saturates the second law. 
A first possibility is to transform the above coherent machine into an incoherent one, using the same idea as discussed in our accompanying article \cite{OurLetter}. Specifically, each swap can be made energy preserving by adding an extra qubit to the machine. Therefore, the $N$-qubit coherent machine discussed above, can be made incoherent by adding $N$ extra qubits. The temperature achieved by the incoherent machine will match that of the coherent if either $T_H \rightarrow \infty$, or if the energies of the machine qubits are increased so as to match the desired temperature on the target qubit. In the former case, the work cost will diverge when $N \rightarrow \infty$, as each additional qubit must now be heated from $T_R$ to infinite temperature. Nevertheless, in the second case, this problem can be circumvented by noting that these $N$ additional qubits do not need to correspond to physical qubits, but can be taken as virtual qubits. For instance, one can consider a single evenly spaced $(N+1)$-level system providing all these virtual qubits. By embedding this $(N+1)$-level system into a larger system, the initial work cost can be made arbitrarily small, and we can thus approach the work cost of the corresponding coherent machine arbitrarily closely. Consequently, we have constructed an incoherent machine which also saturates the second law in the limit of $N \rightarrow \infty$. See \appref{app:incsecondlaw} for an explicit construction and proof.

}

\gh{
\section{Conclusion and outlook}
\label{sec:conclusion}
}
We have presented a unified view of quantum refrigeration, allowing us to compare various paradigms. In particular, our framework incorporates autonomous quantum thermal machines, algorithmic cooling, single-cycle refrigeration and the resource theory of thermodynamics.

\begin{boxfigure}{Lowest achievable temperature $T^*$ and associated free energy change of the resource $\Delta F^*$ for different cooling paradigms and machine sizes}{Tempi}
	\tcblower
	\center{\bf{One-qubit machines}}
	\begin{align}
T_{coh}^* = \frac{E}{E_B} T_R&&\Delta F_{coh}^* = (r_{coh} - r)(E_B - E)\nonumber
	 	\end{align} 
	\center{\bf{Two-qubit machines}}
	\begin{align}
	&T^*_{\text{inc}}=\frac{E}{\ln \left( \frac{1-r^*_{\text{inc}}}{r^*_{\text{inc}}}\right)} &&r^*_{\text{inc}}=\frac{1}{2} (r+r_B)	&\Delta F^*_{\text{inc}}= E_C (r_C- \frac{1}{2}) > \Delta F_{\text{coh}}^* \nonumber\\
	&T^*_{\text{auto}}=\frac{E}{E_B}T_R&& r^*_{\text{auto}}=r_B&\Delta F^*_{\text{auto}}= E_C (r_C- \frac{1}{2}+r_B-r)\nonumber\\
	&T^*_{\text{coh}}=T^*_{\text{auto}} && r_{\text{coh}}^*=r^*_{\text{auto}}&\Delta F^*_{\text{coh}}= \begin{cases} E_C (r_B- r),& E_C \leq E\nonumber\\
	E_C (r_B-r)-E(r_C-r),& E_C > E \end{cases}\nonumber\\
	&T^*_{\text{coh,}\infty}=\frac{E}{E_B+E_C} T_R && r^*_{\text{coh},\infty}= \frac{1}{1+e^{-\frac{E_B+E_C}{T_R}}}&\Delta F^*_{\text{coh,}\infty}=\Delta F^*_{\text{coh}}+2 E_C (r^*_{\text{coh},\infty}- r_B)\nonumber\\
	&T^*_{\text{algo,}\infty}=\frac{E}{2E_B} T_R && r^*_{\text{algo},\infty}= \frac{1}{1+e^{-\frac{2E_B}{T_R}}}&\Delta F^*_{\text{algo,}\infty}= \Delta F^*_{\text{coh,}\infty}+E(r_B-r_C)+\nonumber\\
	&\,&&\,&(2 E_C+E) (r^*_{\text{algo},\infty}-r^*_{\text{coh},\infty})\nonumber
	 	\end{align} 
		\center{\bf{Asymptotic machine size}}
		\begin{align}
		T^* =  \frac{E}{E_{max}}T_R&&\Delta F^* = \Delta F_{\text{target}} \nonumber
		\end{align}
		$T_R:$ Room temperature; $E:$ target qubit energy gap; $E_B:$ energy gap of first machine qubit; $E_C:$ energy gap of second machine qubit; $E_{max}:$ maximum energy gap of the machine; $r:$ initial target qubit ground state population; $r^*:$ final target qubit ground state population;
		
		The subscripts denote the different operational paradigms, i.e. \emph{coh} refers to the coherent scenario, \emph{inc} to the incoherent scenario, \emph{auto} to the autonomous scenario and \emph{algo} to algorithmic cooling.
\end{boxfigure}
 We characterize fundamental limits of cooling, in terms of achievable temperature and work cost, for both coherent and incoherent operations, in single-cycle, finite repetitions, and asymptotic regimes. The main formulas are summarized in the boxes shown.

\jona{We find that, contrary to classical thermodynamics, the fundamental limits crucially depend on the level of control available. In particular, this implies that the free energy does not uniquely determine the minimal achievable temperature. Moreover, the size of the machine represents an additional form of control, which also influences thermodynamic performance. On the one hand, for minimal machines, the difference between coherent and incoherent control is strongly pronounced. On the other, in the asymptotic limit, the two scenarios become mostly equivalent.}

\nico{While we focused here on the task of cooling a single qubit, it is natural to ask what the fundamental limits to cooling larger systems are. Understanding the qubit case already provides significant insight into the general case. For the task of increasing the ground-state population, we showed that qubit bounds apply in general. 
Repeating every scenario, for every possible notion of cooling would, while possible, not add much insight without a more physical motivation of the respective target Hamiltonian and setting. It would furthermore be interesting to characterize the work cost of cooling general systems, although this will also depend on the exact Hamiltonian structure of the target, and so one cannot expect to obtain a single-letter formula (as in the qubit case).
}


Finally, it would indeed be interesting to discuss different tasks than cooling, e.g. work extraction, and determine if similar conclusions can be drawn.

\textbf{Acknowledgements.}We are grateful to Flavien Hirsch, Patrick P. Hofer, Marc-Olivier Renou, and Tam\'as Krivachy for fruitful discussions. We would also like to acknowledge the referees of our initial submission for productive and challenging comments. Several authors acknowledge the support from the Swiss NFS and the NCCR QSIT: R.S. through the grants Nos. $200021\_169002$ and $200020\_165843$, N.B. through the Starting grant DIAQ and grant $200021\_169002$, F.C. through the AMBIZIONE grant PZ00P2$\_$161351, and G.H. through the PRIMA grant $PR00P2\_179748$ and Marie-Heim V\"ogtlin grant 164466. MH acknowledges support from the Austrian Science Fund (FWF) through the START project Y879-N27. JBB acknowledges support from the Independent Research Fund Denmark.

\appendix

\onecolumngrid
\section{Degeneracy condition for cooling\label{app:degeneracy}}

\fab{We want here to investigate the conditions for cooling to be possible in the incoherent scenario. First we will see that given an arbitrary machine, the system hamiltonian of target and machine together must have some degeneracy in that scenario for cooling to be possible at all with that machine. This is the content of Lemma \ref{lemma:degexist}. We will then move on to the specific case of the two-qubit machine and prove that given such a machine, cooling is only possible when \(E_B=E_A+E_C\). This is what Lemma \ref{lemma:twoqubitdeg} proves.}

\begin{lemma}
\label{lemma:degexist}
As \([U,H]=0\), \(U\) can only cool the target by acting on the degenerate eigenspaces of \(H\).
\end{lemma}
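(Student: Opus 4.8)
The plan is to exploit the block structure that the constraint $[U,H]=0$ forces on $U$. Since $H$ is diagonal in the product energy basis $\{\ket{i_A}\otimes\ket{\mathbf{j}}\}$, where $\mathbf{j}$ labels a machine energy eigenstate, every product basis vector is an eigenvector of $H$, and commutation with $H$ implies that $U$ preserves each eigenspace $\mathcal{H}_\lambda$ of $H$. Hence $U$ decomposes as a direct sum $U=\bigoplus_\lambda U_\lambda$, with each $U_\lambda$ a unitary supported on $\mathcal{H}_\lambda$. On a non-degenerate eigenspace ($\dim\mathcal{H}_\lambda=1$) the block $U_\lambda$ is merely a phase $e^{i\phi_\lambda}$, which acts trivially under conjugation.

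I would then write $\rho^{\text{fin}}=U\rho^{\text{in}}U^\dagger$ block-wise. The $1\times1$ blocks are untouched, so $\rho^{\text{fin}}$ and $\rho^{\text{in}}$ can differ only inside the degenerate eigenspaces of $H$; in particular $\bra{i_A,\mathbf{j}}\rho^{\text{fin}}\ket{i_A,\mathbf{j}}=\bra{i_A,\mathbf{j}}\rho^{\text{in}}\ket{i_A,\mathbf{j}}$ whenever $\ket{i_A,\mathbf{j}}$ lies in a one-dimensional eigenspace.

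The last step translates this into a statement about the target. Cooling the target qubit means increasing its ground-state population $p_0=\sum_{\mathbf{j}}\bra{0,\mathbf{j}}\rho^{\text{fin}}\ket{0,\mathbf{j}}$, a sum of diagonal elements indexed by product basis states. By the previous step, the only terms in this sum that can change are those for which $\ket{0,\mathbf{j}}$ sits in a degenerate eigenspace of $H$. Therefore, if $U$ acts only on non-degenerate eigenspaces — in particular if $H$ has no degeneracy — the population $p_0$, and hence the target temperature, is unchanged; any cooling must come from the action of $U$ on the degenerate eigenspaces.

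The argument is short, and I do not expect a real obstacle, but two points deserve care. First, one should state explicitly that acting on a degenerate eigenspace is \emph{necessary} but not sufficient for cooling, so the lemma is purely a necessity statement. Second, $\rho^{\text{in}}$ restricted to a degenerate block need not be proportional to the identity, so the conjugation there can genuinely be non-trivial; this is exactly the loophole the lemma leaves open, and which the subsequent analysis of specific degeneracies (e.g.\ $E=0$, $E_B=0$, $E=E_B$, or $E_B=E+E_C$ for the two-qubit machine) then exploits or rules out.
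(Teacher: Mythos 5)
Your proof is correct and follows essentially the same route as the paper's: use $[U,H]=0$ to decompose $U$ as a direct sum over the eigenspaces of $H$, note that on one-dimensional eigenspaces each block is a pure phase so the corresponding diagonal entries of $\rho$ are invariant under conjugation, and conclude that the target's ground-state population (a sum of such diagonal entries) can only change through the action on degenerate eigenspaces. The only cosmetic caveat is that your phrase ``$\rho^{\text{fin}}$ and $\rho^{\text{in}}$ can differ only inside the degenerate eigenspaces'' is slightly stronger than needed (off-diagonal elements coupling a non-degenerate level to other levels can acquire phases), but the diagonal-entry statement you actually use is exactly what the paper establishes.
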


\begin{proof}
\fab{Let \(\text{Eig}_{H}(E)\) be the eigenspace of \(H\) with eigenvalue \(E\). Let \(\ket{v} \in \text{Eig}_{H}(E)\). Per definition \(H \ket{v} = E \ket{v}\). Furthermore as \([U,H]=0\) we have

\begin{equation}
\label{equ:UonE}
\begin{aligned}
&U H \ket{v} = H U \ket{v}\\
\Leftrightarrow & E (U \ket{v})= H (U \ket{v}),
\end{aligned}
\end{equation}
which shows that \(U \ket{v} \in \text{Eig}_{H}(E)\). This means that every energy eigenspace is invariant under \(U\) and so as the whole vector space can be decomposed as a direct sum of \(\text{Eig}_H(E), U= \oplus_E U_E\). We now have left to prove that if \(\text{dim}(\text{Eig}_H(E))=1\), \(U_E\) does not affect the temperature of the target. for this, let \(E\) be an eigenvalue of \(H\) with \(\text{dim}(\text{Eig}_H(E))=1\). Let \(\ket{v} \in \text{Eig}_H(E)\). From Equation \eqref{equ:UonE} and \(\text{dim}(\text{Eig}_H(E))=1\), \(U_E \ket{v}=U \ket{v}=\lambda \ket{v}\), meaning that \(\ket{v}\) is an eigenvector of \(U_E\) with eigenvalue \(\lambda\). Since \(U_E\) is unitary, \(\lambda = e^{i \theta}\) and so \( U_E \ket{v} \bra{v} U_E^{\dagger}= \ket{v}\bra{v}\), which proves our assertion as only the diagonal elements of the density matrix \(U_E \rho U_E^{\dagger}\) contribute to the temperature of the target and that we can write any \(\rho\) as
\begin{equation}
\rho= \sum_{ij} a_{ij} \ket{v_i} \bra{v_j},
\end{equation}
with \((\ket{v_i})_i\) being an ONB of eigenvectors of H and \(\ket{v_1}=\ket{v}\). So
\begin{equation}
\begin{aligned}
U_E \rho U_E^{\dagger} &= U_E (\sum_{ij} a_{ij} \ket{v_i} \bra{v_j}) U_E^{\dagger}\\
&= U_E (\sum_{i,j \neq 1} a_{ij} \ket{v_i} \bra{v_j}+\sum_{j \neq 1} a_{1j} \ket{v} \bra{v_j}+\sum_{i \neq 1} a_{i1} \ket{v_i} \bra{v}+
a_{11} \ket{v} \bra{v}) U_E^{\dagger}\\
&= \sum_{i,j \neq 1} a_{ij} \ket{v_i} \bra{v_j}+\sum_{j \neq 1} \lambda a_{1j} \ket{v} \bra{v_j}+\sum_{i \neq 1} a_{i1} \bar{\lambda} \ket{v_i} \bra{v}+
a_{11} \ket{v} \bra{v}.
\end{aligned}
\end{equation}
implying that
\begin{equation}
(U_E \rho U_E^{\dagger})_{kk}=a_{kk} \ket{v_k} \bra{v_k}=\rho_{kk},
\end{equation}
, i.e. the diagonal elements of \(U_E \rho U_E^{\dagger}\) are the original ones.}
\end{proof}

Next we want to argue that 

\begin{lemma}
\label{lemma:twoqubitdeg}
Among all the possible degeneracies of \(H\), only \(E_B=E_A+E_C\) enables cooling of qubit A.
\end{lemma}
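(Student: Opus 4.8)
The plan is to reduce the statement, via Lemma~\ref{lemma:degexist}, to an elementary comparison of thermal populations. Since $H$ is diagonal in the product basis $\{\ket{ijk}\}_{i,j,k\in\{0,1\}}$ with eigenvalue $iE_A+jE_B+kE_C$, and the state to be acted on, $\rho^H=\tau_A\otimes\tau_B\otimes\tau_C^H$, is diagonal in that same basis, any energy-conserving $U$ decomposes as $\bigoplus_E U_E$ over the eigenspaces of $H$ and merely redistributes, within each eigenspace $S$, the populations $\{p_{ijk}:\ket{ijk}\in S\}$. Writing the target's ground-state population as the total weight carried by the ``qubit-$A$-ground'' states $\ket{0jk}$, the same Schur--Horn / majorization argument used in Sec.~\ref{sec:OneMco} shows that $U_E$ can raise the qubit-$A$-ground weight inside $S$ above its initial value if and only if $S$ contains some qubit-$A$-excited basis state $\ket{e}=\ket{1j_ek_e}$ and some qubit-$A$-ground basis state $\ket{g}=\ket{0j_gk_g}$ with $p_e>p_g$: if such a pair exists, swapping exactly those two states (identity elsewhere) is a valid $H$-preserving unitary that increases the ground population by $p_e-p_g$; if no such pair exists, the $A$-ground states of $S$ already hold the largest populations in $S$, so no rearrangement compatible with majorization can increase their total. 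Hence cooling of qubit $A$ is possible if and only if $H$ has a pair of degenerate basis states $\ket{1j_ek_e}$, $\ket{0j_gk_g}$ with $p_e>p_g$.

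It then remains to enumerate the candidate pairs. There are four $A$-excited and four $A$-ground basis states, hence sixteen pairs. For each, the equal-energy condition $E_A+j_eE_B+k_eE_C=j_gE_B+k_gE_C$ is either impossible for positive energies, or forces one of the vanishing-gap cases $E_A=0$, $E_B=0$, $E_C=0$ (in which the relevant qubit is maximally mixed and the affected blocks are $A$-monochromatic or have equal populations, so no cooling), or yields exactly one of $E_A=E_B$, $E_A=E_C$, $E_A=E_B+E_C$, $E_C=E_A+E_B$, $E_B=E_A+E_C$. For each surviving relation one reads off the population ratio $p_e/p_g$ as a single Boltzmann exponential in $1/T_R$ and $1/T_H$, remembering that qubit $C$ sits at $T_H>T_R$ while $A$ and $B$ sit at $T_R$: for $E_A=E_B$ one finds $p_e/p_g=1$ (the only admissible swap is trivial); for $E_A=E_C$, $E_A=E_B+E_C$ and $E_C=E_A+E_B$ one finds $p_e/p_g=e^{-x(1/T_R-1/T_H)}<1$ with some $x>0$ (so the admissible swap would heat the target); whereas for $E_B=E_A+E_C$ the relevant pair is $(\ket{101},\ket{010})$ and $p_e/p_g=e^{E_C(1/T_R-1/T_H)}>1$. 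Thus $E_B=E_A+E_C$ is the unique degeneracy enabling cooling, and the associated swap $\ket{101}\leftrightarrow\ket{010}$ is precisely the unitary of Eq.~\eqref{eq:u_incoh}.

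Because the enumeration ranges over all pairs of basis states, Hamiltonians satisfying several degeneracy relations at once need no separate treatment: such an $H$ has higher-dimensional eigenspaces, but by the reduction above it still cools qubit $A$ only if one eigenspace contains an $A$-excited/$A$-ground pair with $p_e>p_g$, and every such pair has already been ruled out except the one forcing $E_B=E_A+E_C$. The only genuinely delicate point is this first reduction step: one must apply the majorization argument to eigenspaces of arbitrary dimension and check carefully that ``optimally rearrange the populations within the degenerate block'' really does collapse to ``it suffices to inspect a single pair.'' Everything after that is the bookkeeping of sixteen Boltzmann ratios.
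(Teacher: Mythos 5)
Your proof is correct, and while it shares the paper's overall skeleton---reduce to the degenerate eigenspaces of $H$ via Lemma~\ref{lemma:degexist}, then compare thermal populations---it organizes the case analysis in a genuinely different and cleaner way. The paper enumerates the possible degeneracy \emph{relations} (the three types $E_i=E_j$, $E_i=0$, $E_i=E_j+E_k$), checks each by explicit computation, and then must separately work through every \emph{combination} of simultaneously imposed relations (its cases a) to e)) to rule out cooling arising from higher-dimensional eigenspaces. You instead distill a single necessary-and-sufficient criterion: cooling within an eigenspace is possible iff it contains a degenerate pair $\ket{1j_ek_e}$, $\ket{0j_gk_g}$ with $p_e>p_g$. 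This follows from Schur-Horn exactly as you argue, since if no such pair exists the $A$-ground populations are already the $|G|$ largest in the block and majorization caps their sum at its initial value; you correctly flag this reduction as the delicate step and correctly justify it for blocks of arbitrary dimension. You then enumerate the sixteen basis-state pairs, reading off the forced energy relation and the single Boltzmann ratio for each; your ratios ($=1$ for $E_A=E_B$, $<1$ for $E_A=E_C$, $E_A=E_B+E_C$, $E_C=E_A+E_B$, and $e^{E_C(1/T_R-1/T_H)}>1$ only for $E_B=E_A+E_C$ with the pair realizing Eq.~\eqref{eq:u_incoh}) agree with the paper's inequalities in every case. Because the criterion is pairwise and exhaustive over basis states, simultaneous degeneracies are subsumed automatically, which eliminates the paper's most laborious step; the trade-off is that your argument leans harder on the majorization reduction, whereas the paper's is more pedestrian but fully explicit.
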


\begin{proof}
 Going through all the possible eigenvalue degeneracies of \(H= H_A+H_B+H_C, \; H_i=E_i \ket{1} \bra{1}_i \otimes \mathds{1}_{\bar{i}}, i \in \{A,B,C\}\), we see that we can have 3 different types of degeneracies:

  \begin{enumerate}
  \item \(E_i = E_j , \quad i,j \in \{A,B,C\}\)
  \item\( E_i =0, \quad i \in \{A,B,C\}\)
  \item \(E_i = E_j + E_k, \quad i,j,k \in \{A,B,C\}\)
  \end{enumerate}
We first look at type 2. Imposing \(E_A=0\) we get 4 degenerate subspaces: \(A_{mn}:=\text{span}\{\ket{0}_A\ket{mn}_{BC}, \ket{1}_A\ket{mn}_{BC}\}\), where \(m,n \in \{0,1\}\). Our unitary can act within each \(A_{mn}\) subspace on \(\rho^H = \tau \otimes \tau_B \otimes \tau_C^H\). However note that as \(r_A= \frac{1}{1+e^{- \frac{E_A}{T_R}}}=\frac{1}{2} = 1-r_A\), we have 
\begin{equation}
\begin{aligned}
\rho^H_{0mn,0mn}&=r_A (m+ (-1)^m r_B) (n+ (-1)^n r_C^H)\\
&=(1-r_A) (m+ (-1)^m r_B) (n+ (-1)^n r_C^H)=\rho^H_{1mn,1mn}
\end{aligned}
\end{equation}
such that in each of the \(A_{mn}\) \(\rho\) is proportional to the identity and hence \(U \rho U^{\dagger}=\rho\) for all unitaries \(U\) acting only within those subspaces. Note that this argument actually holds for any permutation of A, B and C thus also treating the \(E_B=0\) and \(E_C=0\) cases and showing that Type 2 degeneracies do not enable cooling qubit A.\\

Turning to type 1, if \(E_B=E_C\), we have two 2-dim. degenerate subspaces \(\text{span}(\ket{001},\ket{010})\) and \(\text{span}(\ket{101},\ket{110})\). In order to cool qubit A, one should maximize \([\text{Tr}_{BC}(U \rho^H U^{\dagger})]_{1,1}= (U\rho^H U^{\dagger})_{000,000}+(U\rho^H U^{\dagger})_{001,001}+(U\rho^H U^{\dagger})_{010,010}+(U\rho^H U^{\dagger})_{011,011}\). As unitaries are trace preserving, acting with U within the first subspace leaves \(\rho^H_{001,001}+\rho^H_{010,010}\) unchanged. Acting with U within the second one does not alter any term in \([\text{Tr}_{BC}( \rho^H)]_{1,1}\), meaning that this degeneracy does not allow us to cool qubit A. For \(E_A=E_B\), the degenerate subspaces are \(\text{span}(\ket{010}, \ket{100})\) and \(\text{span}(\ket{011}, \ket{101})\). Doing the same analysis as before shows that in general the unitary doesn't leave \([\text{Tr}_{BC}( \rho )]_{1,1}\) invariant, unfortunately it does for our \(\rho^H\) since with this condition 
\begin{equation}
\rho^H_{010,010} = r_A (1-r_B) r^H_C= r_B(1-r_A)  r^H_C=\rho^H_{100,100}
\end{equation}
 and  similarly \(\rho_{011,011}=\rho_{101,101}\). Imposing \(E_A = E_C\) we have as degenerate subspaces \(\text{span}(\ket{001},\ket{100})\) and  \(\text{span}(\ket{011},\ket{110})\). As above the unitary does not in general leave \([\text{Tr}_{BC}(\rho )]_{1,1}\) invariant. For our \(\rho^H\) it is also the case but since \(T_H \geq T_R\), we have \(-\frac{E_C}{T_H}\geq -\frac{E_A}{T_R}\) meaning \(r_C^H \leq r_A\), such that 
 \begin{equation}
 \rho^H_{001,001}=r_A r_B (1-r_C^H) \geq r_C^H r_B (1-r_A) = \rho_{100,100}^H
 \end{equation}
  and 
  \begin{equation}
  \rho^H_{011,011}= r_A (1-r_B) (1-r_C^H)\geq r_C^H (1-r_B) (1-r_A)\geq \rho_{110,110}^H.
  \end{equation}
  The unitary that maximizes \([\text{Tr}_{BC}(U \rho^H U^{\dagger})]_{1,1}\) is therefore the trivial one. Indeed any 2 dimensional unitary can be written as
 
 \begin{equation}
U= \begin{pmatrix}
 a & b \\
 -b^* e^{i \theta} & a^* e^{i \theta}
 \end{pmatrix},
 \end{equation}
 with \(\lvert a \rvert ^2 + \lvert b \rvert ^2 =1\) and \(\theta \in [0,2\pi)\). And so
 
 \begin{equation}
 \begin{split}
\left[ U \begin{pmatrix} \rho^H_{001,001} &0 \\ 0 & \rho^H_{100,100} \end{pmatrix} U^{\dagger} \right]_{1,1} &= \begin{pmatrix} \lvert a \rvert^2 \rho^H_{001,001}  + \lvert b \rvert ^2 \rho^H_{100,100} &  a b e^{- i \theta} (\rho^H_{100,100}- \rho^H_{001,001} ) \\ a^* b^* e^{i \theta} (\rho^H_{100,100} - \rho^H_{001,001} ) & \lvert b \rvert ^2 \rho^H_{001,001}  + \lvert a \rvert ^2 \rho^H_{100,100} \end{pmatrix}_{1,1}\\
&=\lvert a \rvert^2 \rho^H_{001,001}  + \lvert b \rvert ^2 \rho^H_{100,100}
\end{split}
 \end{equation}
 is maximal for \(\lvert a \rvert =1\), \(b=0\) and any choice of \(\theta\), which exactly corresponds to the unitary of \(\text{span}(\ket{001}, \ket{100})\) acting trivially on our \(\rho^H\). The same obviously holds for the unitaries of \(\text{span}(\ket{011}, \ket{110})\).
  This type of degeneracy hence also does not allow any cooling to happen.\\
 
 We are left with the last type of degeneracy, type 3. Looking at \(E_A = E_B + E_C\) we have that the degenerate subspace is \(\text{span} (\ket{011}, \ket{100})\). As after heating we have that \(T_R\leq T_H\), which implies \(e^{-E_C/T_R} \leq e^{-E_C/T_H}\), we have, 
 \begin{equation}
 \begin{aligned}
 \rho^H_{011,011}&= r_A (1-r_B) (1-r_C^H)\\
 &= r_A e^{-\frac{E_B}{T_R}} e^{-\frac{E_C}{T_H}} r_B r_C^H\\
 &\geq r_A e^{-\frac{E_B+E_C}{T_R}} r_B r_C^H= (1-r_A) r_B r_C^H= \rho^H_{100,100}
 \end{aligned}
 \end{equation}
  meaning that our unitary can only decrease  \([\text{Tr}_{BC}(U \rho U^{\dagger})]_{1,1}\) by making use of this degeneracy (that corresponds to heating qubit A). Similarly for \(E_C = E_A + E_B\) (here the subspace is \(\text{span}(\ket{001}, \ket{110})\) and we have \(\rho^H_{001,001} \geq \rho^H_{110,110}\)).\\
 
 However, for \(E_B=E_A+E_C\), we have that our unitary can increase \([\text{Tr}_{BC}( \rho^H )]_{1,1}\) by acting in the degenerate subspace \(\text{span}(\ket{010}, \ket{101})\) since 
 \begin{equation}
 \begin{aligned}
 \rho^H_{010,010}&= r_A (1-r_B) r_C^H\\
 &= r_A e^{-\frac{E_A}{T_R}} e^{-\frac{E_C}{T_R}} r_B r_C^H\\
 &\leq r_A e^{-\frac{E_A}{T_R}} e^{-\frac{E_C}{T_H}} r_B r_C^H= (1-r_A) r_B r_C^H= \rho^H_{101,101}.
 \end{aligned}
 \end{equation}
  This shows that the only single degeneracy enabling cooling is \(E_B=E_A+E_C\). 
  
  To finish the proof one needs to prove that there is no way of selecting some of the above degeneracies and achieve cooling without also having to select the degeneracy \(E_B=E_A+E_C\). All the ways of selecting two degeneracies can be listed as
  
  \begin{enumerate}
  \item[a)] \(E_i=E_j=0, \quad i,j \in \{A,B,C\}\)
  \item[b)] \(E_i=E_j, E_k=0, \quad \{i,j,k\}=\{A,B,C\}\)
  \item[c)] \(E_A = E_B=E_C=0\)
  \item[d)] \(E_A=E_B=E_C\)
  \item[e)] \(E_i=E_j, E_k=2 E_i, \quad \{i,j,k\}=\{A,B,C\}\)
  \end{enumerate}
  
In case a), \(\rho \propto \mathds{1}\) within the degenerate subspaces and so no cooling can occur. In case b) the degenerate subspaces are \(\text{span}(\ket{00}_{ij} \ket{0}_k, \ket{00}_{ij} \ket{1}_k)\), \(\text{span}(\ket{11}_{ij} \ket{0}_k, \ket{11}_{ij} \ket{1}_k)\), and  \(\text{span}(\ket{01}_{ij} \ket{0}_k, \ket{01}_{ij} \ket{1}_k, \ket{10}_{ij} \ket{0}_k,\ket{10}_{ij} \ket{1}_k)\). In the first two subspaces \(\rho \propto \mathds{1} \). In the third if \((i,j,k)=(A,B,C)\) then \(\rho \propto \mathds{1}\), if \((i,j,k) = (B,C,A)\) then cooling is possible as \(\rho^H_{001,001}=\rho^H_{101,101} \geq \rho^H_{010,010}=\rho^H_{110,110}\), but this is no contradiction to our claim since in this case \(E_B= E_A+E_C\) holds, and if \((i,j,k)= (C,A,B)\), \(\rho^H_{100,100}= \rho^H_{110,110} \leq \rho^H_{001,001}= \rho^H_{011,011}\), meaning that no cooling is possible. In case c) \(\rho \propto \mathds{1}\) and so no cooling is possible. In case d) the degenerate subspaces are \(\text{span}(\ket{001},\ket{010},\ket{100})\) and \(\text{span}(\ket{011}, \ket{101}, \ket{110})\) and as \(\rho^H_{001,001} \geq \rho^H_{010,010}= \rho^H_{100,100}\) and \(\rho^H_{011,011} = \rho^H_{101,101} \geq \rho^H_{110,110}\), no cooling is possible. Finally in case e) the degenerate subspaces are \(\text{span}(\ket{01}_{ij} \ket{0}_k, \ket{10}_{ij} \ket{0}_k)\), \(\text{span}(\ket{00}_{ij} \ket{1}_k, \ket{11}_{ij} \ket{0}_k)\), and \(\text{span}(\ket{01}_{ij} \ket{1}_k, \ket{10}_{ij} \ket{1}_k)\). If \((i,j,k) = (A,B,C)\) then \(\rho^H_{010,010}=\rho^H_{100,100}\), \(\rho^H_{011,011}=\rho^H_{100,100}\), and \(\rho^H_{001,001}\geq \rho^H_{110,110}\) so that no cooling is possible. If \((i,j,k) = (B,C,A)\) then \(\rho^H_{001,001}\geq \rho^H_{010,010}\), \(\rho^H_{101,101}\geq \rho^H_{110,110}\), and \(\rho^H_{100,100}\leq \rho^H_{011,011}\) so that no cooling is possible either.  If \((i,j,k) = (C,A,B)\), \(\rho^H_{010,010}\leq \rho^H_{101,101}\) so that one can cool in that subspace but as in this case \(E_B=E_A+E_C\) also happens to hold; this again is no contradiction to our claim.

If one selects more than two different degeneracies from the list 1.,2., and 3., either three linearly independant degeneracies are selected, which results in \(E_A=E_B=E_C=0\) and leads to no cooling as shown above, or less than three of the selected degeneracies are linearly independant and the situation reduces to one of the above treated case. This ends the proof.
\end{proof}

\section{Optimal incoherent thermalisation} \label{app:thermalinc}

Here we want to argue that for the case of the two-qubit machine in order to cool the target qubit maximally, the best way to make use of both thermal baths at \(T_R\) and \(T_H\) respectively, is to thermalise qubit B at \(T_R\) and qubit C at \(T_H\).

To begin with, note that the only allowed unitaries are those within the degenerate subspace, as these are the only ones that preserve energy, see \secref{app:degeneracy}. Any unitary within this qubit subspace can be viewed as a partial swap between the populations of the two levels (up to a change in relative phase, which does not affect cooling). Thus the maximum cooling is achieved by either swapping the populations fully, or not at all, since these are the two extremes of the achievable populations.

Thus given two thermal baths, at temperatures $T_R$ and $T_H$, the optimal manner of cooling would be to thermalize qubits $B$ and $C$ in such a way as to maximize the difference in the populations of the two degenerate levels \(\ket{101}\) and \(\ket{010}\) before applying a full swap; i.e. maximize $p_{101} - p_{010}$, where \(p_{ijk}\) denotes the population of level \(\ket{ijk}\).

Consider that we thermalize $B$ and $C$ to temperatures between those of the environment and of the hot bath (these are the extremes of temperatures available to us, and thus any temperature in between is also attainable). It is straightforward to check that
\begin{align}
\frac{d}{d T_B} \Big( p_{101} - p_{010} \Big) &= - \frac{ E_B r_B (1-r_B)}{T_B^2} \left( r r_C + (1-r) (1-r_C) \right) < 0 \quad \forall T_B, \\
\text{and} \quad \frac{d}{d T_C} \Big( p_{101} - p_{010} \Big) &= + \frac{E_C r_C (1-r_C)}{T_C^2} \left( r (1-r_B) + (1-r) r_B \right) > 0 \quad \forall T_C.
\end{align}

Therefore, it is optimal to have qubit $B$ be as cold as possible (the environment temperature $T_R$), and qubit $C$ be as hot as possible (the temperature of the hot bath $T_H$). Thus, although the whole machine has access to a hotter thermal bath at temperature \(T_H\), it is best to only put qubit \(C\) in contact with it, leaving \(B\) at the room temperature \(T_R\).

Note that the above argument also holds if the population on qubit A is set to be some other value than \(r\), meaning that in the repeated incoherent operations one should also rethermalise qubit B to \(T_R\) and qubit C to \(T_H\) before applying the swap operation in order to maximally cool the target qubit.

\section{Single-cycle coherent machines}\label{app:Uopt}

We want here to discuss the solution of the single-cycle coherent machines presented in the main text. More precisely, we are interested in finding the unitary \(U_{\text{opt}}\) (or equivalently the state \(\rho_{\text{opt}}= U_{\text{opt}} \rho^{\text{in}} U_{\text{opt}}^{\dagger}\)) that enables us to cool the target to a given temperature, i.e. ground state, \(r_{\text{coh}} \in [r^{\text{in}},r_{\text{coh}}^*]\) at a minimal work cost. From the discussion of the main text we know that using the Schur-Horn theorem, finding \(\rho_{\text{opt}}\) for a system comprised of a target qubit and a machine of \(n/2\) energy gaps amounts to solving
\begin{equation}
\label{equ:genqubitprob}
\min_{\vv{\rho} \prec \vv{\rho^{\text{in}}}} \vv{\rho} \cdot \vv{H}, \quad \text{s.t. } \sum_{i=1}^{n/2} \rho_i=r_{\text{coh}}.
\end{equation}
Indeed, the solution of Equation \eqref{equ:genqubitprob} gives us \(\vv{\rho_{\text{coh}}}\) from which we can easily reconstruct \(\rho_{\text{coh}}\) and \(U_{\text{opt}}\).

We in the following solve the problem for the one qubit machine (\(n=4\)) and the two qubit machine (\(n=8\)). We then show that , given a general machine, it is sufficient to solve two marginal problems in order to find the optimal unitary cooling the target to the lowest temperature \(r_{\text{coh}}^*\). We also provide \(\vv{\rho_{\text{coh}}^*}\).

\subsection{Coherent One-Qubit Machine\label{sec:Uopt1}}
We want to solve
\begin{equation}
\label{equ:onequbitprob}
\min_{\vv{\rho} \prec \vv{\rho^{\text{in}}}} \vv{\rho} \cdot \vv{H}, \quad \text{s.t. }  \rho_1+\rho_2=r_{\text{coh}},
\end{equation}
where the majorization conditions are simply given by
\begin{equation}
\label{equ:onequbitMaj}
\sum_{i=1}^l \rho_i^{\downarrow} \leq \sum_{i=1}^l \rho_i^{\text{in},\downarrow}, \quad \forall l=1,\dots,4,
\end{equation}
with equality for \(l=4\). First note that \(\rho_1+\rho_2=r\) with the trace condition implies that \(\rho_3+\rho_4=1-r\) such that

\begin{equation}
\begin{aligned}
\vec{\rho} \cdot \vec{H} &= \rho_2 E_B + \rho_3 E_A + \rho_4 (E_A+E_B)\\
&= (r-\rho_1 +\rho_4) E_B+ \underbrace{(1-r)E_A}_{=\text{cste}}
\end{aligned}
\end{equation}
such that in order to minimise \(\vec{\rho} \cdot \vec{H}\), one should minimise \(\rho_4-\rho_1\). This means that \(\rho_1\) should be the greatest component of \(\vec{\rho}\), namely \(\rho_1= \rho_1^{\downarrow}\) and \(\rho_4\) the smallest, namely \(\rho_4=\rho_4^{\downarrow}\). From equation \eqref{equ:onequbitMaj} with \(l=1\) we have \(\rho_1=\rho_1^{\downarrow} \leq \rho_1^{\text{in}}\) and combining equation \eqref{equ:onequbitMaj} with \(l=3\) with the trace condition we get

\begin{equation}
\begin{aligned}
\rho_1^{\text{in},\downarrow}+\rho_2^{\text{in},\downarrow}+\rho_3^{\text{in},\downarrow}+\rho_4^{\text{in},\downarrow}&=\rho_1^{\downarrow}+\rho_2^{\downarrow}+\rho_3^{\downarrow}+\rho_4^{\downarrow}\\
& \leq \rho_1^{\text{in},\downarrow}+\rho_2^{\text{in},\downarrow}+\rho_3^{\text{in},\downarrow}+\rho_4^{\downarrow},
\end{aligned}
\end{equation}
such that \(\rho_4 = \rho_4^{\downarrow} \geq \rho_4^{\text{in},\downarrow}\). In order to minimise \(\rho_4-\rho_1\) we therefore have to choose \(\vec{\rho}\) such that 
\begin{equation}
\begin{aligned}
\rho_1&=\rho_1^{\text{in}}\\
\rho_4&=\rho_4^{\text{in}}.
\end{aligned}
\end{equation}
Plugging these values in the majorization conditions (equation \eqref{equ:onequbitMaj}), we are left with
\begin{equation}
\label{eq:majT}
\begin{aligned}
\rho_2^{\downarrow} &\leq \rho_3^{\text{in}}\\
\rho_2^{\downarrow}+\rho_3^{\downarrow}&= \rho_2^{\text{in}}+\rho_3^{\text{in}}.
\end{aligned}
\end{equation}

As \(\rho_2^{\downarrow} = \max \{\rho_2,\rho_3 \}\) and \(\rho_2^{\downarrow}+\rho_3^{\downarrow} = \rho_2+\rho_3\), these are exactly the conditions for
\begin{equation}
(\rho_2,\rho_3) \prec (\rho_2^{\text{in}},\rho_3^{\text{in}}),
\end{equation}
which means that one can get the majorized vector by applying a T-transform to the initial vector. That is, for some \(t \in [0,1]\),
\begin{equation}
\label{eq:Ttrafo}
\begin{pmatrix}
\rho_2\\
\rho_3
\end{pmatrix} = T \begin{pmatrix}
\rho_2^{\text{in}}\\
\rho_3^{\text{in}}
\end{pmatrix}, \quad T= \begin{pmatrix} t & 1-t \\
1-t & t
\end{pmatrix}.
\end{equation}
This simply follows from the fact that in general \(r \prec s\) iff there exists some doubly stochastic matrix \(D\) such that \(r=Ds\), and that the most general \(2 \times 2\) doubly stochastic matrices are T-tranforms. Now we just have to choose t such that \(\rho_1+\rho_2=r\), that is
\begin{equation}
t=\frac{\rho_1^{\text{in}}+\rho_3^{\text{in}}-r}{\rho_3^{\text{in}}-\rho_2^{\text{in}}},
\end{equation}	
or equivalently

\begin{equation}
1-t = \frac{r-r^{\text{in}}}{r_B-r^{\text{in}}}
\end{equation}
or
\begin{equation}
r=r^{\text{in}}+(1-t) (r_B-r^{\text{in}}).
\end{equation}
The associated work cost is

\begin{equation}
\begin{aligned}
\Delta F &= (\vec{\rho} - \vv{\rho^{\text{in}}}) \cdot \vec{H} \\
&=(1-t) (\rho_3^{\text{in}}-\rho_2^{\text{in}}) (E_B-E_A)\\
&= (1-t) (r_B-r_A) (E_B-E_A).
\end{aligned}
\end{equation}				

A unitary U such that  \(\vec{\rho}=\vv{\text{Diag}}(U\rho^{\text{in}}U^{\dagger})\) is for example given by

\begin{equation}
U = \begin{pmatrix}
1&0&0&0\\
0& \sqrt{1-\mu}& \sqrt{\mu}&0\\
0& - \sqrt{\mu} & \sqrt{1-\mu} &0\\
0&0&0&1
\end{pmatrix},
\end{equation}	
where \(\mu = 1-t\) and can be written more compactly as

\begin{equation}
U= e^{-i \arcsin(\sqrt{\mu}) L_{AB}},
\end{equation}
with \(L_{AB}= i \ket{01}\bra{10} - i \ket{10} \bra{01}\).	
\subsection{Coherent Two-Qubit Machine}
We want to solve
\begin{equation}
\label{equ:twoqubitprob}
\min_{\vv{\rho} \prec \vv{\rho^{\text{in}}}} \vv{\rho} \cdot \vv{H}, \quad \text{s.t. }  \sum_{i=1}^4\rho_i=r_{\text{coh}},
\end{equation}
where the majorization conditions are simply given by
\begin{equation}
\label{equ:majcond}
\begin{aligned}
\sum_{i=1}^k \rho^{\downarrow}_i &\leq \sum_{i=1}^k \rho^{\text{in},\downarrow}_i, \quad \forall k=1,\dots, 7\\
\sum_{i=1}^8 \rho^{\downarrow}_i &= \sum_{i=1}^8 \rho^{\text{in},\downarrow}_i.
\end{aligned}
\end{equation}

The ordering of the original entries is crucial to the solving of the problem. There are hence two regimes that one needs to investigate, namely \(E_C \leq E_A\) and \(E_C > E_A\). We begin with the \(E_C \leq E_A\) regime.

\paragraph{The \(E_C \leq E_A\) regime.}

In this regime the ordering of the original diagonal entries is given by
\begin{equation}
\label{equ:smallorder}
\rho_1^{\text{in}} >\rho_2^{\text{in}}>\rho_5^{\text{in}}>\rho_3^{\text{in}}=\rho_6^{\text{in}}>\rho_4^{\text{in}}>\rho_7^{\text{in}}>\rho_8^{\text{in}}.
\end{equation}

We furthermore have:
\begin{equation}
\label{equ:rhoHsmall}
\begin{aligned}
\vv{\rho} \cdot \vv{H} &= \rho_2 E_C + \rho_3 E_B+ \rho_4 (E_B+E_C)+ \rho_5 E_A \\
& \quad +\rho_6 (E_A+E_C) + \rho_7 (E_A+E_B) \\
& \quad + \rho_8 (E_A+E_B+E_C).
\end{aligned}
\end{equation}

We will next rewrite equation \eqref{equ:rhoHsmall}, keeping in mind the ordering of the original state of equation \eqref{equ:smallorder}, in a way that the majorization conditions of equation \eqref{equ:majcond} can easily be applied. First we use that \(\rho_5+\rho_6+\rho_7+\rho_8=1-r\) and get

\begin{equation}
\label{eq:reformsmall}
\begin{aligned}
\vv{\rho} \cdot \vv{H} &= \underbrace{(\rho_2+\rho_4)}_{r-\rho_1-\rho_3} E_C + \underbrace{(\rho_3+\rho_4)}_{r-\rho_1-\rho_2} E_B \\
& \quad + (1-r) E_A + \underbrace{(\rho_6+\rho_8)}_{1-r-\rho_5-\rho_7} E_C + (\rho_7+\rho_8) E_B\\
&=(r-\rho_1) E_C - \rho_3 E_C + (r-\rho_1-\rho_2) E_C \\
&\quad + (r-\rho_1-\rho_2)E_A+ (1-r) E_A + (1-r-\rho_5) E_C \\
&\quad - \rho_7 E_C + (\rho_7+\rho_8) E_C + (\rho_7+\rho_8) E_A\\
&= (r-\rho_1) E_C + (1-(\rho_1+\rho_2+\rho_5+\rho_3)) E_C\\
&\quad + (1-(\rho_1+\rho_2)) E_A+\rho_8 E_C + (\rho_7+\rho_8) E_A,
\end{aligned}
\end{equation}
where in the second step we used that \(E_B=E_A+E_C\). Note that the sum of the the minima of each summand of \(\vv{\rho} \cdot \vv{H}\) is for sure a lower bound to the minimum of \(\vv{\rho} \cdot \vv{H}\), such that if one can pick a \(\rho\) achieving this lower bound, we will have reached the minimum of \(\vv{\rho} \cdot \vv{H}\). Using the last reformulation of \(\vv{\rho} \cdot \vv{H}\), this is luckily possible, indeed:

\begin{equation}
\label{equ:mincond}
\begin{aligned}
\rho_1 &\leq \rho_1^{\downarrow} \leq \rho_1^{\text{in}, \downarrow} =\rho_1^{\text{in}}\\
\rho_1+\rho_2+\rho_5+\rho_3 &\leq \sum_{i=1}^4 \rho_i^{\downarrow}\leq \sum_{i=1}^4 \rho_{i}^{\text{in},\downarrow}=\rho_1^{\text{in}}+\rho_2^{\text{in}}+\rho_5^{\text{in}}+\rho_3^{\text{in}}\\
\rho_1+\rho_2 &\leq \rho_1^{\downarrow} + \rho_2^{\downarrow} \leq \rho_1^{\text{in},\downarrow}+\rho_2^{\text{in},\downarrow}=\rho_1^{\text{in}}+\rho_2^{\text{in}}\\
\rho_8& \geq \rho_8^{\downarrow} \geq \rho_8^{\text{in},\downarrow} = \rho_8^{\text{in}}\\
\rho_7+\rho_8 &\geq \rho_7^{\downarrow}+\rho_8^{\downarrow} \geq\rho_7^{\text{in},\downarrow}+\rho_8^{\text{in},\downarrow}=\rho_7^{\text{in}}+\rho_8^{\text{in}}.
\end{aligned}
\end{equation}
To minimise the first summand we hence have to choose \(\rho_1=\rho_1^{\text{in}}\). To minimise the third summand, since \(\rho_1=\rho_1^{\text{in}}\), we have to pick \(\rho_2=\rho_2^{\text{in}}\). To minimise the fourth summand we have to choose \(\rho_8=\rho_8^{\text{in}}\) which forces us to choose \(\rho_7=\rho_7^{\text{in}}\) in order to minimise the last summand. We are hence left with the minimisation of the second summand that is achieved if 
\begin{equation}
\label{equ:equal35}
\rho_5+\rho_3=\rho_5^{\text{in}}+\rho_3^{\text{in}}
\end{equation}
 is satisfied.

Now note that we have
\begin{equation}
\begin{aligned}
\rho_1 + \rho_2+ \max\{\rho_3, \rho_5\} &\leq\rho_1^{\downarrow} + \rho_2^{\downarrow} + \rho_3^{\downarrow} \\
&\leq \rho_1^{\text{in},\downarrow} + \rho_2^{\text{in},\downarrow} + \rho_3^{\text{in},\downarrow} \\
&= \rho_1^{\text{in}}+\rho_2^{\text{in}}+\rho_5^{\text{in}}\\
&= \rho_1+\rho_2+\rho_5^{\text{in}}
\end{aligned}
\end{equation}

such that
\begin{equation}
\label{equ:bound3}
\max\{\rho_3,\rho_5\} \leq \rho_5^{\text{in}}.
\end{equation}

Equation \eqref{equ:bound3} and \eqref{equ:equal35} together mean that \((\rho_3,\rho_5) \prec (\rho_5^{\text{in}},\rho_3^{\text{in}})\), which we know from \secref{sec:Uopt1} to be equivalent to
\begin{equation}
\begin{pmatrix} \rho_3 \\ \rho_5 \end{pmatrix} = T_1 \begin{pmatrix} \rho_3^{\text{in}} \\ \rho_5^{\text{in}} \end{pmatrix}, \quad T_1  \begin{pmatrix} t_1 & 1-t_1 \\ 1-t_1 & t_1 \end{pmatrix},
\end{equation}
for some \(t_1 \in [0,1]\). Similarly, as \(\sum_{i=1}^8 \rho_i = \sum_{i=1}^8 \rho_i^{\text{in}}\) we find that 
\begin{equation}
\label{eq:equal46}
\rho_4+\rho_6 = \rho_4^{\text{in}}+ \rho_6^{\text{in}}
\end{equation}

and using that the second line of equation \eqref{equ:mincond} is satisfied with equality we find that
\begin{equation}
\sum_{i=1}^4 \rho_i^{\downarrow} +\max\{\rho_4,\rho_6\} \leq \sum_{i=1}^5 \rho_i^{\downarrow} = \sum_{i=1}^5 \rho_i^{\text{in},\downarrow} = \sum_{i=1}^4 \rho_i^{\downarrow} + \rho_6^{\text{in}}
\end{equation}
such that 
\begin{equation}
\label{eq:maj46}
\max\{\rho_4,\rho_6\} \leq \rho_6^{\text{in}}.
\end{equation}
Now equations \eqref{eq:equal46} and \eqref{eq:maj46} together mean that \((\rho_4 , \rho_6) \prec (\rho_4^{\text{in}}, \rho_6^{\text{in}})\), which as before is equivalent to
\begin{equation}
\begin{pmatrix} \rho_4 \\ \rho_6 \end{pmatrix} = T_2 \begin{pmatrix} \rho_4^{\text{in}} \\ \rho_6^{\text{in}} \end{pmatrix}, \quad T_2  \begin{pmatrix} t_2 & 1-t_2 \\ 1-t_2 & t_2 \end{pmatrix},
\end{equation}
for some \(t_2 \in [0,1]\). This means that for any \(t_1\) and \(t_2\), we have found a vector \(\rho\) that achieves the minimum of each summands in \eqref{eq:reformsmall} and that therefore is the solution of our problem. Of course, for a given \(r\), only some \(t_1\) and \(t_2\) will solve our problem, namely the ones satisfying
\begin{equation}
\begin{aligned}
r&= \rho_1+\rho_2 +\rho_3+\rho_4\\
&= \rho_1^{\text{in}}+\rho_2^{\text{in}}+t_1 \rho_3^{\text{in}}+(1-t_1) \rho_5^{\text{in}}+ t_2\rho_4^{\text{in}}+ (1-t_2) \rho_6^{\text{in}}\\
&= \sum_{i=1}^4 \rho_i^{\text{in}} + (t_1-1) \rho_3^{\text{in}} + (1-t_1) \rho_5^{\text{in}} \\
& \quad + (t_2-1) \rho_4^{\text{in}}+ (1-t_2) \rho_6^{\text{in}}\\
&= r^{\text{in}} + (1-t_1) (\rho_5^{\text{in}}-\rho_3^{\text{in}}) + (1-t_1) (\rho_6^{\text{in}} - \rho_4^{\text{in}}).
\end{aligned}
\end{equation}

Next note that
\begin{equation}
\begin{aligned}
\rho_5^{\text{in}} - \rho_3^{\text{in}}&= (1-r_A) r_B r_C - r_A (1-r_B) r_C \\
&= r_B r_C - r_A r_B r_C - r_A r_C + r_A r_B r_C \\
&= (r_B-r_A) r_C\\
\rho_6^{\text{in}}-\rho_4^{\text{in}}&=(1-r_A) r_B (1-r_C) - r_A (1-r_B) (1-r_C)\\
&= (1-r_C) ( r_B-r_A r_B -r_A + r_A r_B)\\
&= (1-r_C) (r_B-r_A)
\end{aligned}
\end{equation}
such that 
\begin{equation}
r= r^{\text{in}} + [(1-t_1) r_C + (1-t_2) (1-r_C)] (r_B-r_A).
\end{equation}

If we were to choose \(t_1=t_2 = t\) then we would have
\begin{equation}
r= r^{\text{in}} + (1-t) (r_B-r_A).
\end{equation}
Now the work cost of carrying this process is
\begin{equation}
\begin{aligned}
\Delta F &= (\vv{\rho}- \vv{\rho^{\text{in}}}) \cdot \vv{H}\\
&= (t_1 \rho_3^{\text{in}}+ (1-t_1) \rho_5^{\text{in}}-\rho_3^{\text{in}}) E_B\\
& \quad + (t_2 \rho_4^{\text{in}} + (1-t_2) \rho_6^{\text{in}}-\rho_4^{\text{in}}) (E_B+E_C)\\
&\quad + ((1-t_1) \rho_3^{\text{in}}+ t_1 \rho_5^{\text{in}}-\rho_5^{\text{in}}) E_A\\
& \quad +((1-t_2) \rho_4^{\text{in}}+ t_2 \rho_6^{\text{in}}-\rho_6^{\text{in}}) (E_A+E_C)\\
&= (1-t_1) (\rho_5^{\text{in}}-\rho_3^{\text{in}}) E_B\\
&\quad + (1-t_2) (\rho_6^{\text{in}} -\rho_4^{\text{in}}) (E_B+E_C)\\
& \quad + (1-t_1) (\rho_3^{\text{in}}-\rho_5^{\text{in}}) E_A\\
& \quad + (1-t_2) (\rho_4^{\text{in}}-\rho_6^{\text{in}}) (E_A+ E_C)\\
&=[(1-t_1) (\rho_5^{\text{in}}-\rho_3^{\text{in}}) + (1-t_2) (\rho_6^{\text{in}}-\rho_4^{\text{in}}) ] E_C\\
&=[(1-t_1) r_C + (1-t_2) (1-r_C) ] (r_B-r_A) E_C.
\end{aligned}
\end{equation}
If we choose \(t_1=t_2=t\) then
\begin{equation}
\Delta F = (1-t) (r_B-r_A) E_C.
\end{equation}

A unitary \(U\) such that \(\vv{\rho} = \vv{\text{Diag}} (U \rho^{\text{in}} U^{\dagger})\) is for example given by
\begin{equation}
U=\begin{pmatrix}
1&0&0&0&0&0&0&0\\
0&1&0&0&0&0&0&0\\
0&0&\sqrt{1-\mu_1}&0&\sqrt{\mu_1}&0&0&0\\
0&0&0&\sqrt{1-\mu_2}&0&\sqrt{\mu_2}&0&0\\
0&0&- \sqrt{\mu_1}&0&\sqrt{1-\mu_1}&0&0&0\\
0&0&0&-\sqrt{\mu_2}&0&\sqrt{1-\mu_2}&0&0\\
1&0&0&0&0&0&0&0\\
\end{pmatrix},
\end{equation}

where \(\mu_1=1-t_1\) and \(\mu_2= 1-t_2\). If we choose \(t_1=t_2=t\) then \(\mu_1=\mu_2=\mu\) and
\begin{equation}
U=\begin{pmatrix}
1&0&0&0&0&0&0&0\\
0&1&0&0&0&0&0&0\\
0&0&\sqrt{1-\mu}&0&\sqrt{\mu}&0&0&0\\
0&0&0&\sqrt{1-\mu}&0&\sqrt{\mu}&0&0\\
0&0&- \sqrt{\mu}&0&\sqrt{1-\mu}&0&0&0\\
0&0&0&-\sqrt{\mu}&0&\sqrt{1-\mu}&0&0\\
1&0&0&0&0&0&0&0\\
\end{pmatrix}
\end{equation}
can be compactly written as 
\begin{equation}
U=e^{-i \arcsin(\sqrt{\mu}) L_{AB}}, \quad L_{AB}- i \ket{01} \bra{10}_{AB} - i \ket{10}\bra{01}_{AB}.
\end{equation}

\paragraph{The \(E_C > E_A\) regime}

In this regime the ordering of the original diagonal entries is given by
\begin{equation}
\label{equ:bigorder}
\rho_1^{\text{in}} >\rho_5^{\text{in}}>\rho_2^{\text{in}}>\rho_3^{\text{in}}=\rho_6^{\text{in}}>\rho_7^{\text{in}}>\rho_4^{\text{in}}>\rho_8^{\text{in}}.
\end{equation}

As before, we would like to reshuffle the terms of
\begin{equation}
\label{equ:rhoHbig}
\begin{aligned}
\vv{\rho} \cdot \vv{H} &= \rho_2 E_C + \rho_3 E_B+ \rho_4 (E_B+E_C)+ \rho_5 E_A \\
& \quad +\rho_6 (E_A+E_C) + \rho_7 (E_A+E_B) \\
&\quad + \rho_8 (E_A+E_B+E_C)
\end{aligned}
\end{equation}
 such that each summand can be minimised. So we get

\begin{equation}
\begin{aligned}
\vv{\rho} \cdot \vv{H} &= (\rho_2+\rho_4) E_C+ (r-\rho_1-\rho_2) \underbrace{E_B}_{=E_A+E_C}+\rho_5 E_A\\
&\quad +\rho_6 (E_A+E_C)+(1-r-\rho_5-\rho_6) (E_A+E_B)\\
& \quad +\rho_8 E_C\\
&= -\rho_1 E_C+ (-\rho_1-\rho_2) E_A +(-\rho_5) (E_A+E_C)\\
& \quad +\rho_4 E_C +(-\rho_6) E_A+\rho_8 E_C\\
&\quad +\underbrace{(1-r) (E_A+E_B)+r E_B}_c\\
&= -\rho_1 E_C+ (-\rho_1-\rho_5-\rho_2) E_A +\\
&\quad (r-1+\rho_6+\rho_7+\rho_8) E_C+ (\rho_4+\rho_8) E_C \\
& \quad + (-\rho_6) E_A+c\\
&= -\rho_1 E_C+ (-\rho_1-\rho_5-\rho_2) E_A\\
&\quad +  \rho_6 (E_C-E_A)+(\rho_7+\rho_4+\rho_8) E_C +\rho_8 E_C \\
& \quad +\underbrace{c+(r-1)E_C}_d\\
&= -\rho_1 E_C+ (-\rho_1-\rho_5-\rho_2) E_A + d+ \rho_8 E_C\\
& \quad+(\rho_6+\rho_7+\rho_4+\rho_8) (E_C-E_A) \\
& \quad +(\rho_7+\rho_4+\rho_8) E_A. 
\end{aligned}
\end{equation}
Now looking at each summand and applying the majorization conditions with the order of the original vector that we know we get:

\begin{equation}
\label{equ:fixbigEC}
\begin{alignedat}{3}
&\rho_1 \leq \rho_1^{\text{in}} &&\Rightarrow \rho_1=\rho_1^{\text{in}}\\
&\rho_1+\rho_5+\rho_2 \leq \rho_1^{\text{in}}+\rho_5^{\text{in}}+\rho_2^{\text{in}} && \hspace{-0.76cm}\Rightarrow \rho_5+\rho_2=\rho_5^{\text{in}}+\rho_2^{\text{in}}\\
&\rho_8 \geq \rho_8^{\text{in}}&& \Rightarrow \rho_8= \rho_8^{\text{in}}\\
&\rho_7+\rho_4+\rho_8 \geq \rho_7^{\text{in}}+\rho_4^{\text{in}}+\rho_8^{\text{in}} &&\hspace{-0.76cm}\Rightarrow \rho_7+\rho_4=\rho_7^{\text{in}}+\rho_4^{\text{in}}\\
&\rho_6+\rho_7+\rho_4+\rho_8 \geq \rho_6^{\text{in}}+\rho_7^{\text{in}}+\rho_4^{\text{in}}+\rho_8^{\text{in}} &&\Rightarrow \rho_6=\rho_6^{\text{in}}.
\end{alignedat}
\end{equation}

Furthermore, note that out of \(\sum_{i=1}^8 \rho_i=\sum_{i=1}^8 \rho_i^{\text{in}}\) and the above fixed values we get
\begin{equation}
\rho_3=\rho_3^{\text{in}}.
\end{equation}
Also using the majorization conditions, we have
\begin{equation}
\begin{aligned}
\rho_1+\max\{\rho_5,\rho_2\}\leq \rho_1^{\text{in}}+\rho_5^{\text{in}} \Rightarrow \max\{\rho_5,\rho_2\}\leq \rho_5^{\text{in}}\\
\min\{\rho_5,\rho_2\}+\rho_8 \geq \rho_4^{\text{in}}+\rho_8^{\text{in}}\Rightarrow \min\{\rho_5,\rho_2\} \geq \rho_4^{\text{in}},\\
\end{aligned}
\end{equation}
which together with \eqref{equ:fixbigEC} means that \((\rho_5 , \rho_2) \prec (\rho_5^{\text{in}}, \rho_2^{\text{in}})\) and \((\rho_4 , \rho_7) \prec (\rho_4^{\text{in}}, \rho_7^{\text{in}})\) which is equivalent to
\begin{equation}
\begin{pmatrix} \rho_5 \\ \rho_2 \end{pmatrix} = T_1 \begin{pmatrix} \rho_5^{\text{in}} \\ \rho_2^{\text{in}} \end{pmatrix}, \quad T_1  \begin{pmatrix} t_1 & 1-t_1 \\ 1-t_1 & t_1 \end{pmatrix}
\end{equation}
and
\begin{equation}
\begin{pmatrix} \rho_4 \\ \rho_7 \end{pmatrix} = T_2 \begin{pmatrix} \rho_4^{\text{in}} \\ \rho_7^{\text{in}} \end{pmatrix}, \quad T_2  \begin{pmatrix} t_2 & 1-t_2 \\ 1-t_2 & t_2 \end{pmatrix},
\end{equation}
for some \(t_1 \in [0,1]\) and \(t_2 \in [0,1]\). This means that for any \(t_1\) and \(t_2\), the vector \(\rho\) is the solution of our problem. Of course, for a given \(r\), only some \(t_1\) and \(t_2\) will solve our problem, namely the ones satisfying
\begin{equation}
\begin{aligned}
r&= \rho_1+\rho_2 +\rho_3+\rho_4\\
&= r^{\text{in}} + (1-t_1) (\rho_5^{\text{in}}-\rho_2^{\text{in}}) + (1-t_1) (\rho_7^{\text{in}} - \rho_4^{\text{in}})\\
&= r^{\text{in}} + [(1-t_1) r_B + (1-t_1) (1-r_B)] (r_C-r_A).
\end{aligned}
\end{equation}

If we were to choose \(t_1=t_2 = t\) then we would have
\begin{equation}
r= r^{\text{in}} + (1-t) (r_C-r_A).
\end{equation}
Now the work cost of carrying this process is
\begin{equation}
\begin{aligned}
\Delta F &= (\vv{\rho}- \vv{\rho^{\text{in}}}) \cdot \vv{H}\\
&=[(1-t_1) r_B + (1-t_2) (1-r_B) ]\\
&\quad \cdot (r_B-r_A) (E_C-E_A).
\end{aligned}
\end{equation}
If we choose \(t_1=t_2=t\) then
\begin{equation}
\Delta F = (1-t) (r_C-r_A) (E_C-E_A).
\end{equation}

A unitary \(U\) such that \(\vv{\rho} = \vv{\text{Diag}} (U \rho^{\text{in}} U^{\dagger})\) is for example given by
\begin{equation}
U=\begin{pmatrix}
1&0&0&0&0&0&0&0\\
0&\sqrt{1-\mu_1}&0&0&\sqrt{\mu_1}&0&0&0\\
0&0&1&0&0&0&0&0\\
0&0&0&\sqrt{1-\mu_2}&0&0&\sqrt{\mu_2}&0\\
0&- \sqrt{\mu_1}&0&0&\sqrt{1-\mu_1}&0&0&0\\
0&0&0&0&0&1&0&0\\
0&0&0&-\sqrt{\mu_2}&0&0&\sqrt{1-\mu_2}&0\\
0&0&0&0&0&0&0&1\\
\end{pmatrix},
\end{equation}

where \(\mu_1=1-t_1\) and \(\mu_2= 1-t_2\). If we choose \(t_1=t_2=t\) then \(\mu_1=\mu_2=\mu\) then U
can be compactly written as 
\begin{equation}
U=e^{-i \arcsin(\sqrt{\mu}) L_{AC}}, \quad L_{AC}=- i \ket{01} \bra{10}_{AC} - i \ket{10}\bra{01}_{AC}.
\end{equation}
Note however that upon applying this procedure one only finds the solution of our problem for
\begin{equation}
r_A \leq r \leq \rho_1^{\text{in}}+\rho_5^{\text{in}}+\rho_3^{\text{in}}+\rho_7^{\text{in}}.
\end{equation}
To find the solution for
\begin{equation}
\rho_1^{\text{in}}+\rho_5^{\text{in}}+\rho_3^{\text{in}}+\rho_7^{\text{in}} \leq r \leq \rho_1^{\text{in}}+\rho_5^{\text{in}}+\rho_2^{\text{in}}+\rho_3^{\text{in}}
\end{equation}
we use another rearrangement of terms of \(\vv{\rho} \cdot \vv{H}\), namely
\begin{equation}
\begin{aligned}
\vv{\rho} \cdot \vv{H} &= \rho_2 E_C + (r-\rho_1-\rho_2) E_B+\rho_4 E_C\\
& \quad +(1-r-\rho_7-\rho_8) E_A + \rho_6 E_C \\
&\quad + (\rho_7+\rho_8) (E_A+E_B) +\rho_8 E_C\\
&= -\rho_1 E_C + (-\rho_1-\rho_2) E_A + r E_B + (1-r) E_A\\
&\quad (\rho_4+\rho_6+\rho_7+\rho_8) E_C + (\rho_7+ \rho_8) E_A + \rho_8 E_C.
\end{aligned}
\end{equation}

By looking at each summand individually we find that
\begin{equation}
\begin{alignedat}{3}
&\rho_1 \leq \rho_1^{\text{in}} &&\Rightarrow \rho_1=\rho_1^{\text{in}}\\
&\rho_1+\rho_2 \leq \rho_1^{\text{in}}+\rho_5^{\text{in}} &&\Rightarrow \rho_2=\rho_5^{\text{in}}\\
&\rho_8 \geq \rho_8^{\text{in}}&& \Rightarrow \rho_8= \rho_8^{\text{in}}\\
&\rho_7+\rho_8 \geq \rho_4^{\text{in}}+\rho_8^{\text{in}} &&\Rightarrow \rho_7=\rho_4^{\text{in}}\\
&\rho_6+\rho_7+\rho_4+\rho_8 \geq \rho_6^{\text{in}}+\rho_7^{\text{in}}+\rho_4^{\text{in}}+\rho_8^{\text{in}} && \\
& &&\hspace{-1.6 cm}\Rightarrow \rho_4+\rho_6=\rho_6^{\text{in}}+\rho_7^{\text{in}}.
\end{alignedat}
\end{equation}
Using the trace condition we find that 
\begin{equation}
\rho_3+\rho_5=\rho_2^{\text{in}}+\rho_3^{\text{in}}.
\end{equation}
As before this leads to  \((\rho_5 , \rho_3) \prec (\rho_2^{\text{in}}, \rho_3^{\text{in}})\)and \((\rho_6 , \rho_4) \prec (\rho_6^{\text{in}}, \rho_7^{\text{in}})\) which is equivalent to
\begin{equation}
\begin{pmatrix} \rho_5 \\ \rho_3 \end{pmatrix} = T_1 \begin{pmatrix} \rho_2^{\text{in}} \\ \rho_3^{\text{in}} \end{pmatrix}, \quad T_1  \begin{pmatrix} t_1 & 1-t_1 \\ 1-t_1 & t_1 \end{pmatrix}
\end{equation}
and
\begin{equation}
\begin{pmatrix} \rho_6 \\ \rho_4 \end{pmatrix} = T_2 \begin{pmatrix} \rho_6^{\text{in}} \\ \rho_7^{\text{in}} \end{pmatrix}, \quad T_2  \begin{pmatrix} t_2 & 1-t_2 \\ 1-t_2 & t_2 \end{pmatrix},
\end{equation}
for some \(t_1 \in [0,1]\) and \(t_2 \in [0,1]\). This means that for any \(t_1\) and \(t_2\), the vector \(\rho\) is the solution of our problem. Of course, for a given \(r\), only some \(t_1\) and \(t_2\) will solve our problem, namely the ones satisfying
\begin{equation}
\begin{aligned}
r&= \rho_1+\rho_2 +\rho_3+\rho_4\\
&= r_C + [(1-t_1) r_A + (1-t_1) (1-r_A)] (r_B-r_C).
\end{aligned}
\end{equation}

If we were to choose \(t_1=t_2 = t\) then we would have
\begin{equation}
r= r_C+ (1-t) (r_B-r_C).
\end{equation}
Now the work cost of carrying this process is
\begin{equation}
\begin{aligned}
\Delta F &= (\vv{\rho}- \vv{\rho^{\text{in}}}) \cdot \vv{H}\\
&=(r_C-r_A) (E_C-E_A)\\
&\quad+[(1-t_1) r_A + (1-t_2) (1-r_A) ] (r_B-r_C) E_C.
\end{aligned}
\end{equation}
If we choose \(t_1=t_2=t\) then
\begin{equation}
\Delta F = (r_C-r_A) (E_C-E_A)+(1-t) (r_B-r_C) E_C.
\end{equation}
A unitary U such that \(\vv{\rho}=\vv{\text{Diag}} (U \rho U^{\dagger})\) is given by
\begin{equation}
U=U_{35}(\mu_1) U_{46}(\mu_2) U_{25}(1) U_{57}(1),
\end{equation}

where \(\mu_1=1-t_1\) and \(\mu_2= 1-t_2\) and \(\mu \in [0,1]\)
\begin{equation}
U_{ij}(\mu):= \begin{pmatrix}
\sqrt{1-\mu}&\sqrt{\mu}\\
-\sqrt{\mu}&\sqrt{1-\mu}
\end{pmatrix}_{ij} \oplus \mathds{1}_{\bar{ij}}.
\end{equation}
If we choose \(t_1=t_2=t\) then \(\mu_1=\mu_2=\mu\) then U can be written as 
\begin{equation}
U=e^{-i \arcsin(\sqrt{\mu}) L_{AB}} e^{-i \pi/2 L_{AC}}, 
\end{equation}
with \(L_{AC}=- i \ket{01} \bra{10}_{AC} - i \ket{10}\bra{01}_{AC}\) and \(L_{AB}=- i \ket{01} \bra{10}_{AB} - i \ket{10}\bra{01}_{AB}\). We can also summarise both parts of the solution in one unitary. Then U looks like
\begin{equation}
U=U_{35}(\mu_2) U_{46}(\mu_2) U_{25}(\mu_1) U_{57}(\mu_1),
\end{equation}
with \(\mu_1= \min(2 \mu,1), \mu_2= \max(2 \mu-1,0)\), and \(\mu \in [0,1]\). Or

\begin{equation}
U=e^{-i \arcsin(\sqrt{\mu_2}) L_{AB}} e^{-i \arcsin(\sqrt{\mu_1}) L_{AC}}. 
\end{equation}
\subsection{Endpoint of Arbitrary Single-Cycle Machines\label{subsec:endpoint}}

We here want to find the solution of the problem of Equation \eqref{equ:genqubitprob} when r is chosen to be the maximally allowed r. We set \(k=n/2\). We know that r is at most \(r_{\text{coh}}^*=\sum_{i=1}^k \rho_i^{\text{in},\downarrow}\) since 
\begin{equation}
r= \sum_{i=1}^k \rho_i \leq \sum_{i=1}^k \rho_i^{\downarrow} \leq \sum_{i=1}^k \rho_i^{\text{in},\downarrow}
\end{equation}
and choosing \(\rho_i=\rho_i^{\text{in},\downarrow},\, i=1,\dots,n\) achieves this upper bound, i.e. then \(\vv{\rho} \prec \vv{\rho^{\text{in}}}\) and \(r=r_{\text{coh}}^*\). We next want to show that

\begin{lemma} \label{lemma:biggest_entries}
For any state \(\rho\) such that \(r_{\rho}=r_{\text{coh}}^*\), the first k entries of the state are its biggest entries.
\end{lemma}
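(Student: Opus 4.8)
The plan is to combine the Schur--Horn majorization relation $\vv{\rho} \prec \vv{\rho^{\text{in}}}$ --- which holds because $\rho$ is unitarily equivalent to the diagonal matrix $\rho^{\text{in}}$, so its spectrum is $\vv{\rho^{\text{in}}}$ --- with the hypothesis $r_{\rho}=\sum_{i=1}^{k}\rho_i = r_{\text{coh}}^{*}$ and the identification $r_{\text{coh}}^{*}=\sum_{i=1}^{k}\rho_i^{\text{in},\downarrow}$ established just above, and then squeeze a chain of inequalities until it collapses into the desired entrywise statement.

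First I would write
\[
\sum_{i=1}^{k}\rho_i^{\text{in},\downarrow} \;=\; \sum_{i=1}^{k}\rho_i \;\le\; \sum_{i=1}^{k}\rho_i^{\downarrow} \;\le\; \sum_{i=1}^{k}\rho_i^{\text{in},\downarrow},
\]
where the first equality is the hypothesis, the middle inequality holds trivially since the sum of the $k$ largest entries of $\vv{\rho}$ dominates the sum of any $k$ of them, and the last inequality is the $k$-th majorization relation coming from $\vv{\rho}\prec\vv{\rho^{\text{in}}}$. Hence every inequality is an equality; in particular $\sum_{i=1}^{k}\rho_i = \sum_{i=1}^{k}\rho_i^{\downarrow}$.

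Next I would promote this sum equality to an entrywise one. Let $\sigma$ be a permutation with $\rho_{\sigma(1)}\ge\cdots\ge\rho_{\sigma(k)}$. For each $i\le k$ the value $\rho_{\sigma(i)}$ is the $i$-th largest among the $k$-element set $\{\rho_1,\dots,\rho_k\}$, while $\rho_i^{\downarrow}$ is the $i$-th largest among all $n$ entries, so $\rho_{\sigma(i)}\le\rho_i^{\downarrow}$. Summing these $k$ inequalities and comparing with the equality just obtained forces $\rho_{\sigma(i)}=\rho_i^{\downarrow}$ for every $i\le k$. Consequently $\min_{1\le i\le k}\rho_i = \rho_{\sigma(k)} = \rho_k^{\downarrow}$, whereas the remaining entries $\rho_{k+1},\dots,\rho_n$ are exactly $\rho_{k+1}^{\downarrow},\dots,\rho_n^{\downarrow}$, each $\le\rho_k^{\downarrow}$. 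Thus every diagonal entry outside the first $k$ positions is no larger than every entry among the first $k$, i.e. $\{\rho_1,\dots,\rho_k\}$ are the $k$ biggest entries of $\rho$, as claimed.

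I do not expect a genuine obstacle: the argument is a short squeeze followed by a sorting estimate. The only points requiring minor care are the handling of ties (which is why the conclusion should be read as ``the $k$ biggest entries'' viewed as a multiset, not a strict ordering) and invoking Schur--Horn in the correct direction (diagonal majorized by spectrum, not the converse). An equivalent and perhaps slicker finish is the contrapositive: were some $\rho_l$ with $l>k$ strictly larger than some $\rho_j$ with $j\le k$, then conjugating $\rho$ by the transposition of coordinates $j$ and $l$ (a unitary, hence spectrum-preserving) would produce a state with ground-state population $r_{\text{coh}}^{*}+(\rho_l-\rho_j) > r_{\text{coh}}^{*}$, contradicting $\sum_{i=1}^{k}\rho_i^{\downarrow}\le\sum_{i=1}^{k}\rho_i^{\text{in},\downarrow}$.
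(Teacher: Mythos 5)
Your argument is correct. The squeeze
\(\sum_{i=1}^{k}\rho_i^{\text{in},\downarrow}=\sum_{i=1}^{k}\rho_i\le\sum_{i=1}^{k}\rho_i^{\downarrow}\le\sum_{i=1}^{k}\rho_i^{\text{in},\downarrow}\)
forces equality throughout, and the termwise bound \(\rho_{\sigma(i)}\le\rho_i^{\downarrow}\) (the \(i\)-th largest of a \(k\)-element sub-multiset cannot exceed the \(i\)-th largest of the full multiset) then upgrades the sum equality to \(\rho_{\sigma(i)}=\rho_i^{\downarrow}\) for all \(i\le k\), which is the claim. The paper instead argues by contradiction: if some \(\rho_j\) with \(j>k\) strictly exceeded some \(\rho_i\) with \(i\le k\), applying the transposition \(P_{ij}\) to \(\vv{\rho}\) yields a vector still majorized by \(\vv{\rho^{\text{in}}}\) but with first-\(k\) sum strictly above \(r_{\text{coh}}^{*}\), contradicting maximality. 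That is precisely the ``equivalent and perhaps slicker finish'' you sketch in your last sentence, so you have in effect given both proofs; the exchange argument is shorter, while your direct version has the minor advantage of explicitly identifying \(\{\rho_1,\dots,\rho_k\}\) with \(\{\rho_1^{\downarrow},\dots,\rho_k^{\downarrow}\}\) as multisets, which is the form the paper actually uses immediately afterwards when it writes \(v_{\rho}^{\downarrow}=(\rho_1^{\downarrow},\dots,\rho_k^{\downarrow})\). Your caveat about ties is well taken and consistent with how the paper reads the statement.
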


\begin{proof}
Suppose not, i.e. there exists a \(\rho\) as in the statement for which there exist \( i \leq k, \text{ and } j >k \text{ such that } \rho_i < \rho_j\). Then \(\vv{\rho'}:= P_{ij} \vv{\rho} \prec \vv{\rho} \prec \vv{\rho^{\text{in}}}\) and \(r' = \sum_{l=1}^k \rho_l' = \sum_{l=1, l \neq i}^k \rho_l+\rho_j > \sum_{l=1, l \neq i}^k \rho_l+\rho_i= \sum_{l=1}^k \rho_l=r_{\rho}\). As \(\vv{\rho'} \prec \vv{\rho^{\text{in}}}, \, r' \leq r_{\text{coh}}^*\) so \(r_{\rho} < r_{\text{coh}}^*\) in contractiction with the assumption.
\end{proof}

Writing \(\vv{\rho}\) as 
\begin{equation}
\vv{\rho} =(v_{\rho}, w_{\rho}),
\end{equation}
 with 
 \begin{equation}
 \begin{aligned}
 v_{\rho}&=((v_{\rho})_1,\dots,(v_{\rho})_k) :=(\rho_1,\dots, \rho_k)\\
 w_{\rho}&=((w_{\rho})_{1},\dots,(w_{\rho})_{n-k}) :=(\rho_{k+1},\dots, \rho_n),
 \end{aligned}
 \end{equation}
 the above lemma can be reformulated as
 \begin{equation}
 v_{\rho}^{\downarrow} = (\rho_1^{\downarrow}, \dots, 
 \rho_k^{\downarrow}).
 \end{equation}
 What makes the above equation non trivial is that on the left hand side only the first k entries of \(\rho\) are reordered whereas on the right hand side all the entries of \(\rho\) are reordered. Using that \(\vv{\rho} \prec \vv{\rho^{\text{in}}}\) we have for all \(l=1,\dots, k\) that
 \begin{equation}
 \sum_{i=1}^l (v_{\rho}^{\downarrow})_i = \sum_{i=1}^l \rho_i^{\downarrow} \leq \sum_{i=1}^l \rho_i^{\text{in},\downarrow},
 \end{equation}
 with equality for \(l=k\). This is equivalent to \(v_{\rho} \prec v_{\rho^{\text{in},\downarrow}}\). Also note that the lemma implies that the last \(n-k\) entries of \(\vv{\rho}\) are the smallest \(n-k\) ones, that is
 \begin{equation}
 w_{\rho}^{\downarrow} =(\rho_{k+1}^{\downarrow}, \dots, \rho_{n}^{\downarrow}).
 \end{equation}
 Again using that \(\vv{\rho} \prec \vv{\rho^{\text{in}}}\) we find that for all \(l=1,\dots,n-k\),
 \begin{equation}
 \sum_{i=1}^{l} (w_{\rho}^{\downarrow})_i= \sum_{i=1}^{l} \rho_{k+i}^{\downarrow} \leq \sum_{i=1}^{l} \rho_{k+i}^{\text{in},\downarrow}
 \end{equation}
 with equality for \(l=n-k\). This is equivalent to \(w_{\rho} \prec w_{\rho^{\text{in},\downarrow}}\). So we have proven that 
 \begin{lemma}
 If \(\vv{\rho}\) satisfies \(r_{\rho} = r_{\text{coh}}^*\), then
 \begin{equation}
 \vv{\rho} \prec \vv{\rho^{\text{in}}} \Leftrightarrow v_{\rho} \prec v_{\rho^{\text{in},\downarrow}} \text{ and } w_{\rho} \prec w_{\rho^{\text{in},\downarrow}},
 \end{equation}
 where \(\vv{\rho} = (\underbrace{\rho_1,\dots \rho_k}_{v_{\rho}},\underbrace{\rho_{k+1},\dots \rho_n}_{ w_{\rho}})\).
 \end{lemma}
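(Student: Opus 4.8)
The plan is to prove the two implications separately; the forward one is essentially the partial-sum computation already displayed just above the statement, so I would mostly record it, while the reverse one needs one extra observation. For the forward direction, assume $\vv{\rho}\prec\vv{\rho^{\text{in}}}$ and $r_\rho=r_{\text{coh}}^*$. By Lemma~\ref{lemma:biggest_entries} the first $k$ entries of $\vv{\rho}$ are its $k$ largest, so $(v_\rho^\downarrow,w_\rho^\downarrow)$ is the decreasing rearrangement of $\vv{\rho}$; its partial sums up to $l\le k$ are bounded by $\sum_{i=1}^l\rho_i^{\text{in},\downarrow}$ directly from $\vv{\rho}\prec\vv{\rho^{\text{in}}}$, with equality at $l=k$ since both sides equal $r_{\text{coh}}^*$, which is exactly $v_\rho\prec v_{\rho^{\text{in},\downarrow}}$. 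For the tail, $\sum_{i=1}^{l}(w_\rho^\downarrow)_i=\sum_{i=1}^{k+l}\rho_i^\downarrow-r_{\text{coh}}^*\le\sum_{i=1}^{k+l}\rho_i^{\text{in},\downarrow}-r_{\text{coh}}^*=\sum_{i=1}^{l}(w_{\rho^{\text{in},\downarrow}})_i$, with equality at $l=n-k$ because $\sum_i\rho_i=1$, which is $w_\rho\prec w_{\rho^{\text{in},\downarrow}}$.

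For the reverse direction, assume $v_\rho\prec v_{\rho^{\text{in},\downarrow}}$, $w_\rho\prec w_{\rho^{\text{in},\downarrow}}$ and $r_\rho=r_{\text{coh}}^*$. The step I would establish first is that every coordinate of $v_\rho$ dominates every coordinate of $w_\rho$: since majorisation between vectors with equal sums can only raise the minimum and lower the maximum, and since $\sum v_\rho=r_{\text{coh}}^*=\sum v_{\rho^{\text{in},\downarrow}}$ while $\sum w_\rho=1-r_{\text{coh}}^*=\sum w_{\rho^{\text{in},\downarrow}}$, one gets $\min v_\rho\ge\min v_{\rho^{\text{in},\downarrow}}=\rho_k^{\text{in},\downarrow}\ge\rho_{k+1}^{\text{in},\downarrow}=\max w_{\rho^{\text{in},\downarrow}}\ge\max w_\rho$. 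Hence $(v_\rho^\downarrow,w_\rho^\downarrow)$ is again the decreasing rearrangement of $\vv{\rho}$, and the majorisation $\vv{\rho}\prec\vv{\rho^{\text{in}}}$ splits into the two ranges: for $l\le k$ the partial sum equals $\sum_{i=1}^l(v_\rho^\downarrow)_i\le\sum_{i=1}^l\rho_i^{\text{in},\downarrow}$ by $v_\rho\prec v_{\rho^{\text{in},\downarrow}}$, and for $l=k+m$ it equals $r_{\text{coh}}^*+\sum_{i=1}^m(w_\rho^\downarrow)_i\le r_{\text{coh}}^*+\sum_{i=1}^m(w_{\rho^{\text{in},\downarrow}})_i=\sum_{i=1}^{k+m}\rho_i^{\text{in},\downarrow}$ by $w_\rho\prec w_{\rho^{\text{in},\downarrow}}$; the case $l=n$ is trace normalisation.

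The part I expect to require the most care is the domination claim $\min v_\rho\ge\max w_\rho$ in the reverse direction, since it is exactly what justifies splicing the two individually sorted halves into the globally sorted vector, without which the two marginal majorisations would not reassemble into the global one. One must also be attentive that only the non-strict inequality $\rho_k^{\text{in},\downarrow}\ge\rho_{k+1}^{\text{in},\downarrow}$ is available (and it is in fact an equality for the product-state diagonals $\vv{\rho^{\text{in}}}$ of interest), but a non-strict inequality already suffices. Everything else reduces to bookkeeping with partial sums of sorted vectors, using only the definition of majorisation, the normalisations $\sum_i\rho_i=\sum_i\rho_i^{\text{in}}=1$, and the identity $r_{\text{coh}}^*=\sum_{i=1}^{k}\rho_i^{\text{in},\downarrow}$.
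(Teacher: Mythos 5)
Your proof is correct and, for the forward implication, follows the paper's own route exactly: invoke the ``biggest entries'' lemma to identify $(v_\rho^{\downarrow},w_\rho^{\downarrow})$ with the global decreasing rearrangement, then split the partial sums at $l=k$ using $\sum_{i=1}^{k}\rho_i^{\downarrow}=r_{\text{coh}}^*=\sum_{i=1}^{k}\rho_i^{\text{in},\downarrow}$. The only divergence is the converse, which the paper dismisses as ``trivially satisfied'' while you supply an explicit argument via the domination $\min v_\rho \ge \rho_k^{\text{in},\downarrow}\ge\rho_{k+1}^{\text{in},\downarrow}\ge\max w_\rho$ so that the two sorted halves splice into the global rearrangement; this is a sound (and arguably more careful) way of making precise the standard fact that concatenation of majorisations preserves majorisation, so it adds rigor rather than a genuinely different method.
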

 Indeed, the reverse implication is trivially satisfied. This means that \(\rho\) is the solution of 
 \begin{equation}
 \label{equ:endprob}
 \min_{\rho \prec \rho^{\text{in}}} \vv{\rho}\cdot \vv{H} , \text{ s.t. } \sum_{i=1}^k \rho_i = r_{\text{coh}}^*
 \end{equation}
  iff \(v_{\rho}\) and \(w_{\rho}\) are solutions of
 \begin{equation}
 \begin{aligned}
 \min_{v_{\rho} \prec v_{\rho^{\text{in},\downarrow}}}& v_{H} \cdot v_{\rho}\\
 \min_{w_{\rho} \prec w_{\rho^{\text{in},\downarrow}}}& w_{H} \cdot w_{\rho},
 \end{aligned}
 \end{equation}
where we split \(H\) in the same way as \(\rho\) in \(H=(v_H,w_H)\). That is, we have reformulated the original constaint problem into two marginal unconstraint problems. The minimums of \(v_H \cdot v_{\rho}\) and \(w_H \cdot w_{\rho}\) are attained by \(\sum_{i=1}^k (v_H^{\downarrow})_i (v_{\rho}^{\uparrow})_i\) and \(\sum_{i=1}^{n-k} (w_H^{\downarrow})_i (w_{\rho}^{\uparrow})_i\), that is when the entries of \(v_{\rho}\) and \(w_{\rho}\) are inversely ordered with respect to the ones of \(v_H\) respectively \(w_H\). This uniquely defines the \(\vv{\rho}\) that minimises \eqref{equ:endprob} and solves the endpoint problem.

\section{single-cycle endpoint free energy}\label{app:endpoint}

We want here to argue that in the two-qubit machine one always needs less free energy to reach the endpoint in the single-cycle coherent scenario than in the single-cycle incoherent scenario. This is formulated in the following

\begin{claim}
\(\Delta F_{\text{coh}}^* \leq \Delta F^*_{\text{inc}}\) with equality iff \(T_R \rightarrow +\infty\), \(E \rightarrow 0\), \(E_C \rightarrow 0\), or \(E_C \rightarrow +\infty\).
\end{claim}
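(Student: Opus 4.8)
\emph{Proof proposal.} The plan is to first use the degeneracy constraint \eqref{eq:deg}, i.e. $E_B=E+E_C$, to write every population as a value of the single logistic weight $f(x):=r(x,T_R)=1/(1+e^{-x/T_R})$, so that $r=f(E)$, $r_C=f(E_C)$, $r_B=f(E+E_C)$ and $f(0)=\tfrac12$. The only analytic input needed is that $f$ is strictly increasing and strictly concave on $[0,\infty)$ whenever $0<T_R<\infty$: indeed $f'(x)=T_R^{-1}f(x)\bigl(1-f(x)\bigr)$ is strictly decreasing there because $f(x)>\tfrac12$ for $x>0$. Equivalently, $f$ has \emph{decreasing increments}: for $0\le a\le b$ and $h\ge 0$,
\[
f(a+h)-f(a)\ \ge\ f(b+h)-f(b),
\]
with equality (for finite positive $T_R$) only when $h=0$ or $a=b$.

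Next I would bound $\Delta F_{\text{coh}}^*$ from above by the single quantity $E_C(r_B-r)$, uniformly in both regimes. If $E_C\le E$ this is an equality by \eqref{eq:deltaFmaxcohBig}. If $E_C> E$, expanding \eqref{eq:deltaFmaxcohSmall} gives $\Delta F_{\text{coh}}^*=E_C(r_B-r)+E(r-r_C)$, and since $E<E_C$ implies $r=f(E)\le f(E_C)=r_C$, the extra term is non-positive; hence $\Delta F_{\text{coh}}^*\le E_C(r_B-r)$ in all cases.

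Then I would compare $E_C(r_B-r)$ with $\Delta F_{\text{inc}}^*=E_C(r_C-\tfrac12)$ from \eqref{eq:deltaFmaxincoh}. Pulling out the factor $E_C\ge 0$, the claim reduces to $r_B-r\le r_C-\tfrac12$, that is $f(E+E_C)-f(E)\le f(E_C)-f(0)$, which is exactly the decreasing-increments property with $a=0$, $b=E$, $h=E_C$. Chaining the two steps yields $\Delta F_{\text{coh}}^*\le E_C(r_B-r)\le E_C(r_C-\tfrac12)=\Delta F_{\text{inc}}^*$, proving the inequality.

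For the equality statement I would track when both estimates are simultaneously tight: the first forces $E_C\le E$ or $E=0$; the second forces $E_C=0$, or the chord slope of the concave $f$ over $[0,E_C]$ to equal that over $[E,E+E_C]$, which for $0<T_R<\infty$ and finite gaps means $E=0$, while the remaining degenerate limits $T_R\to+\infty$ and $E_C\to+\infty$ flatten (or push to the flat part of) $f$ on the relevant interval. Collecting these reproduces the four limiting regimes in the statement. The inequality itself is short; the main obstacle is the bookkeeping in this last step — verifying that these degenerations are the \emph{only} ways to make both estimates tight at once, and carefully reconciling the $E_C>E$ branch of $\Delta F_{\text{coh}}^*$ with the common upper bound $E_C(r_B-r)$ used for the $E_C\le E$ branch.
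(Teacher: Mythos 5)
Your proof of the inequality is correct, and the first step coincides with the paper's: both arguments reduce to showing $E_C(r_B-r)\le E_C(r_C-\tfrac12)$ after observing that in the $E_C>E$ branch $\Delta F_{\text{coh}}^*=E_C(r_B-r)-E(r_C-r)\le E_C(r_B-r)$. Where you genuinely diverge is in how that reduced inequality is established. The paper fixes the energies, defines $f(T_R)=r_C+r-r_B-\tfrac12$, and shows $f$ decreases from $\tfrac12$ at $T_R=0$ to $0$ at $T_R=\infty$ by computing $f'(T_R)<0$; you instead fix $T_R$ and invoke strict concavity of the Fermi function $x\mapsto r(x,T_R)$ on $[0,\infty)$, so that the increment over $[E,E+E_C]$ is strictly smaller than over $[0,E_C]$. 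Your route is more elementary (no differentiation in $T_R$, no boundary-value argument) and it delivers strictness directly: equality in the decreasing-increments inequality forces $E=0$ or $E_C=0$ for finite positive $T_R$, which is exactly the paper's conclusion obtained by a different mechanism. The one place where your write-up falls short of a complete proof is the ``iff'' direction: you explicitly defer the verification that the four listed degenerations are the only ways to make both estimates simultaneously tight, and your heuristic that the limits ``flatten $f$'' does not actually describe the $E_C\to+\infty$ case correctly — there $r_C-\tfrac12\to\tfrac12$ while $r_B-r\to 1-r<\tfrac12$, so the increments do not coincide and the two free energies agree only in the loose sense that both diverge (which is also all the paper establishes, with its remark that both scale as $\mathcal{O}(E_C)$). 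So the core inequality is proved by a valid alternative argument; the equality characterization is sketched at roughly the same level of rigor as the paper's own treatment, but should be finished by checking each limit explicitly.
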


\begin{proof}
If \(T_R \rightarrow +\infty\) one sees directly that for both cases of \(E_C \leq E\) and \(E > E_C\), \(\Delta F_{\text{coh}}^* =0= \Delta F^*_{\text{inc}}\). If \(E \rightarrow 0\), then also in both cases \(\Delta F_{\text{coh}}^*=\Delta F^*_{\text{inc}}\). If \(E_C \rightarrow 0\) we also trivially have \(\Delta F_{\text{coh}}^* =0= \Delta F^*_{\text{inc}}\). If \(E_C \rightarrow +\infty\) both terms go to infinity as \(\mathcal{O}(E_C)\) and are in that sense equal. Similarly one sees that if \(T_R \rightarrow 0\) or \(E \rightarrow + \infty\), \(E_C (r_B-r) < \frac{1}{2}= \Delta F^*_{\text{inc}}.\) Else, assuming \(E_C, E, T_R \notin \{0, + \infty\}\), note that as for \(E_C >E\) we have
\begin{equation}
\Delta F_{\text{coh}}^*= E_C (r_B-r)-E (r_C-r),
\end{equation}
the work cost in the coherent scenario is always bounded by \(E_C (r_B-r)\). In order to prove our point we hence only need to prove that \(E_C (r_B-r) < \Delta F^*_{\text{inc}}\). To do so we look at 
\begin{equation}
f(T_R) = r_C+r-r_B-\frac{1}{2}
\end{equation}
 for fixed \(E_C, E \in ]0, + \infty[\). As

\begin{equation}
f(0)=\frac{1}{2}, \quad f(+\infty)=0,
\end{equation}

if \(f'(T_R) <0\) our point is proven. We hence calculate

\begin{equation}
\begin{aligned}
f'(T_R)&=  -\frac{1}{T_R^2} \left[ E_C r_C (1-r_C) + E r (1-r)-E_B r_B (1-r_B) \right]\\
&= -\frac{1}{T_R^2} \left[ E_C \underbrace{[ r_C (1-r_C)- r_B (1-r_B)]}_{>0}+ E \underbrace{ [ r (1-r)- r_B (1-r_B)]}_{>0}\right]<0,
\end{aligned}
\end{equation}
where in the second step we used that \(g(r)= r (1-r)\) is stricly decreasing on \(]\frac{1}{2}. 1[\) as well as \(\frac{1}{2} < r_C < r_B <1\) and \(\frac{1}{2} < r < r_B <1\). This ends the proof.
\end{proof}

\section{Crossing point}\label{app:cross}

In Sec.~VI~C  of the main text we demonstrate the existence of a critical point \((\Delta F_{\text{crit}},T_{\text{crit}})\) beyond which the coherent scenario outperforms the incoherent one in the single-cycle regime. Note that as both curves start at the same point, this critical point is not the only crossing point of both curves. Our numerical results though strongly suggest that those are the only two. We want here to study the behavior of the more interesting crossing point, \((\Delta F_{\text{crit}},T_{\text{crit}})\), when one varies the environment temperature \(T_R\) and the energy gap \(E_C\). In \figref{fig:crossingFT} we analyse the behaviour of \(\Delta F_{\text{crit}}\) as a function of \(T_R\) for fixed \(E_C\). Apart from the fact that the curves seem smooth, it is interesting to note that they all exhibit a maximum for some environmental temperature. This point corresponds to the environmental temperature for which the crossing between coherent and incoherent
occurs at the lowest temperature of the target qubit (i.e. at maximum cooling).

\begin{figure}[!tbp]
  \centering
    \includegraphics[width=0.7 \textwidth]{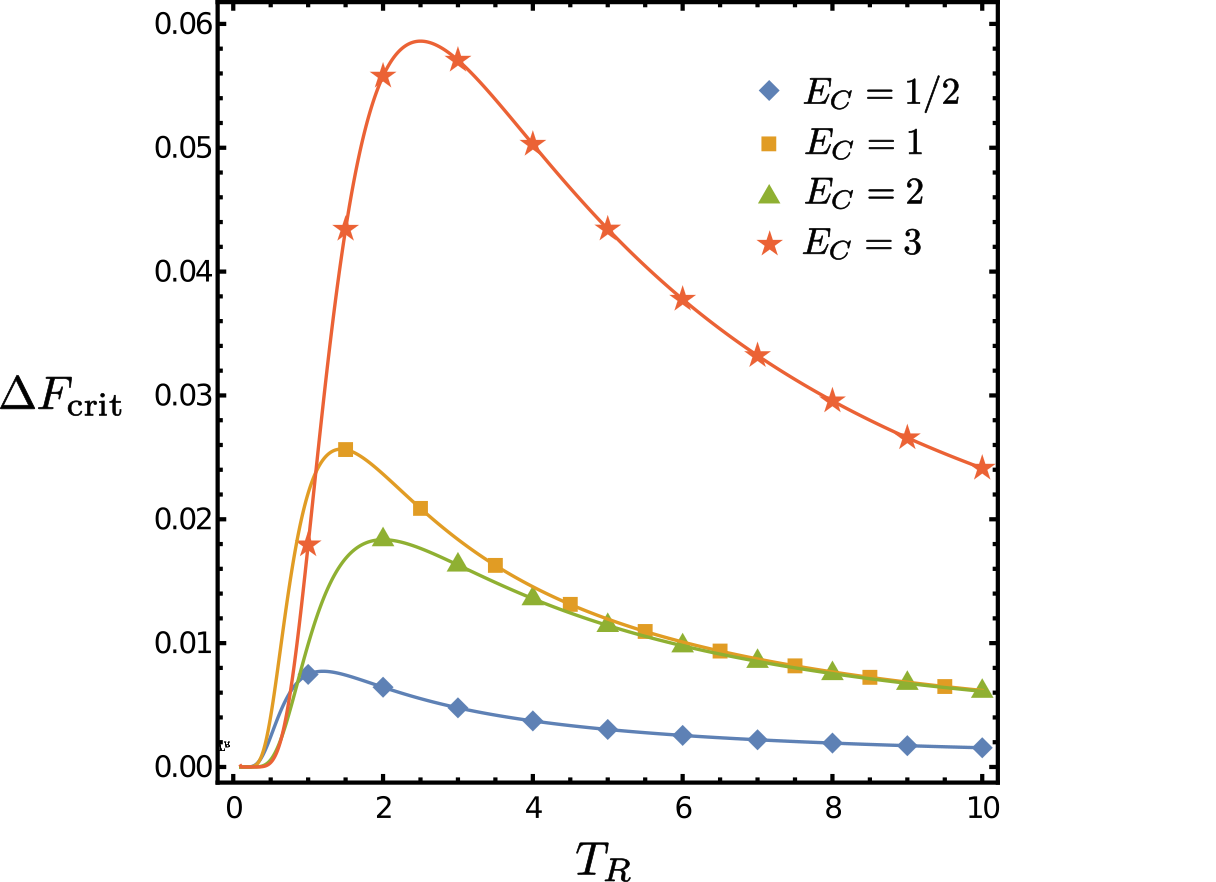}
    \caption{\label{fig:crossingFT}\(\Delta F_{\text{crit}}\) is plotted as a function of \(T_R\) for various fixed \(E_C\).}
\end{figure}

\section{Treating the resource internally}\label{app:internal}

Instead of treating the resource as an external supply, \fab{one can instead consider part of the machine to be the resource itself. We showcase here what such a standpoint would lead to for the two-qubit machine when considering qubit C to be the resource.} One can then ask the same question, namely how do the fully entropic (incoherent) vs. the non-entropic (coherent) supply of free energy scenarios compare in terms of 

\begin{itemize}
\item reachable temperatures
\item reachable temperatures for a given work cost.
\end{itemize}
The incoherent scenario translates to exchanging qubit C with a qubit at a hotter temperature \(T_H\) and then perfoming the energy conserving unitary in the subspace \(\text{span}(\ket{010},\ket{101})\). The free energy difference is now calculated in terms of the system state since the state itself is the resource. We hence have for the final free energy

\begin{align*}
F^{\text{fin}} &= \langle  H\rangle_{\rho^{\text{fin}}} -T_R S_{\rho^{\text{fin}}}\\
&=\text{Tr}(\rho^{\text{fin}} H) + T_R \; \text{Tr}[\rho^{\text{fin}} \ln(\rho^{\text{fin}})]\\
&=\text{Tr}(\rho^{\text{fin}} [H + T_R \ln(\rho^{\text{fin}})])\\
&=T_R \ln (r r_B r^H_C) + E_C ( 1- \frac{T_R}{T_H}) (1-r^H_C) .
        \end{align*}
To calculate the initial free energy note that the initial state \(\rho^{\text{in}} = \tau \otimes \tau_B \otimes \tau_C\) is the same as \(\rho^H = \tau \otimes \tau_B \otimes \tau^H_C\) with \(T_H = T_R\). Hence by setting \(T_H=T_R\) in the above result

\begin{equation}
F^{\text{in}} = T_R \ln(r r_B r_C).
\end{equation}

Therefore 

\begin{equation}
\Delta F_{\text{inc,int}} = F^{\text{fin}} - F^{\text{in}}= E_C \frac{T_H-T_R}{T_H} (1-r^H_C) + T_R \ln(\frac{r^H_C}{r_C})
\end{equation}
The temperature achieved on the target qubit is the same as in the single-cycle incoherent scenario of Sec.~VI~A of the main text and reads 

\begin{align}
r_{\text{inc,int}}&= r r_B+ (1-r_C^H)((1-r) r_B+r (1-r_B))\\
T_{\text{inc,int}}&= \frac{E}{\ln{\frac{r_{\text{inc,int}}}{1-r_{\text{inc,int}}}}}.
\end{align}

The coherent scenario allows one to implement any unitary on qubit C and then perfoming the energy conserving unitary on the 3 qubit system in the subspace \(\text{span }(\ket{010},\ket{101})\). After applying the unitary to qubit C the state looks like

\begin{equation}
\rho^U= \tau \otimes \tau_B \otimes U \tau_C U^{\dagger},
\end{equation}

where

\begin{equation}
U=\begin{pmatrix}
a&b\\
-b^* e^{i \theta}&a^* e^{i \theta}
\end{pmatrix},
\end{equation}
with \(\theta \in [0, 2 \pi]\) and \(\lvert a \rvert ^2+ \lvert b \rvert ^2 =1, \, a,b \in \mathbb{C}\).
And hence

\begin{equation}
\begin{split}
U \tau_C U^{\dagger} &= \begin{pmatrix}
(1-\lvert b \rvert^2) r_C+ \lvert b \rvert^2 (1-r_C)& ab e^{-i \theta} (1-2r_C)\\
a^* b^* e^{i \theta} (1-2r_C) & \lvert b\rvert^2 r_C+ (1-\lvert b \rvert^2) (1-r_C)
\end{pmatrix}\\
&=\begin{pmatrix}
(1-\mu) r_C+\mu(1-r_C)& \sqrt{\mu (1-\mu)} (1-2r_C)\\
\sqrt{\mu (1-\mu)} (1-2r_C)&\mu r_C+(1-\mu) (1-r_C)
\end{pmatrix}\\
&=:\begin{pmatrix}
r_C^U&z\\
z& 1-r_C^U
\end{pmatrix}
\end{split},
\end{equation}
where in the second step we made the choice of a and b being real, \(\theta=0\), and \(b^2=\mu\). Note that making this choice does not influence the perfomance of U since for this only the value of \(r_C^U\), which is not altered by the choice, matters. In any case, to maximally cool the target qubit for a given state of qubit C, one notices that the energy conserving unitary \(U_{\text{cons}}\) need be chosen as \(U_{\text{cons}}= \begin{pmatrix}
0&1\\
1&0
\end{pmatrix}\) in the \(\text{span}(\ket{010}, \ket{101})\) subspace and as identity elsewhere, such that for the final state \(\rho^{\text{fin}}:= U_{\text{cons}} \rho^U U_{\text{cons}}^{\dagger}\) we have

\begin{equation}
\text{Tr}_{BC}(\rho^{\text{fin}})=\begin{pmatrix}
r_{\text{coh,int}}&0\\
0&1-r_{\text{coh,int}}^f
\end{pmatrix},
\end{equation}

with \(r_{\text{coh,int}}:=r r_B+ (1-r_C^U) [(1-r) r_B+ r (1-r_B)]\). And so the final temperature is obtained as usual by

\begin{equation}
T_{\text{coh,int}}= \frac{E}{\ln{\frac{r_{\text{coh,int}}}{1-r_{\text{coh,int}}}}}.
\end{equation}

The free energy cost is obtained as

\begin{equation}
\Delta F_{\text{coh,int}}= \Delta \langle  H\rangle_{\rho} - T_R \Delta S_{\rho}
\end{equation}

Note that as the transformations are all unitaries, \(\Delta S_{\rho}=0\) and so we have

\begin{equation}
\begin{aligned}
\Delta F_{\text{coh,int}}&= \Delta \langle H\rangle_{\rho}\\
&=\text{Tr}\left((\rho^U-\rho^{\text{in}}) H\right)\\
&=(r_C-r_C^U) E_C.
\end{aligned}
\end{equation}

We are now in a position to map out the amount of cooling vs. the associated work cost for both scenarios and compare them. This is displayed in \figref{fig:cohVSincohinternal}. 

\begin{figure}[!tbp]
  \centering
 \includegraphics[width=0.7 \textwidth]{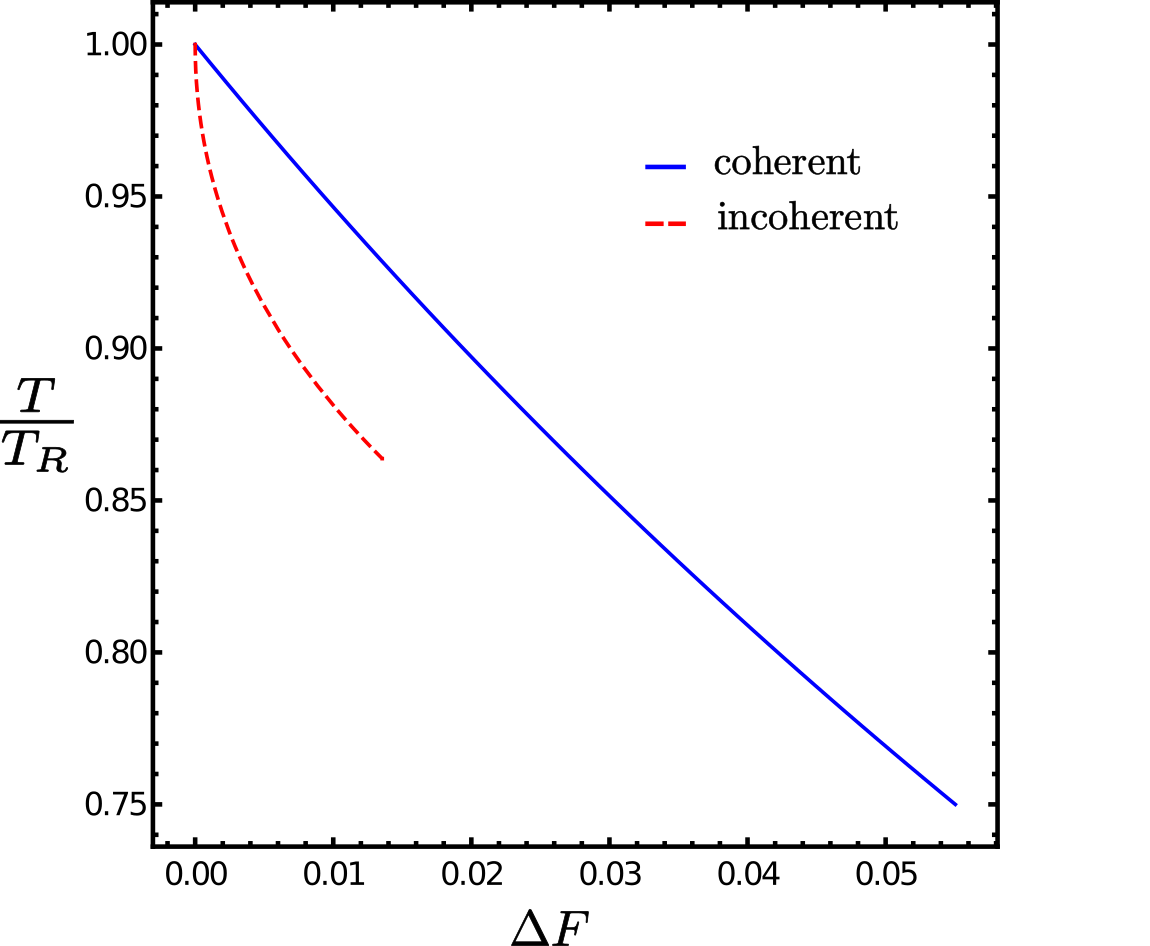}
  \caption{\label{fig:cohVSincohinternal}The internal resource versions of the coherent (solid blue) and the incoherent (dashed red) scenarios are compared. The energy gaps are fixed to \(E=1\) and \(E_C=\frac{1}{3}\) and the environment temperature to \(T_R=1\). One sees that the incoherent scenario always outperforms the coherent one for temperatures that are reachable to both scenarios.}
\end{figure}

Note however that those plots will never cross. Indeed by choosing the same cooling in both scenarios, i.e. \(T_{\text{inc,int}}=T_{\text{coh,int}}\), we have
\begin{equation}
\begin{aligned}
T_{\text{coh,int}}=T_{\text{inc,int}} &\Leftrightarrow r_{\text{coh,int}}=r_{\text{inc,int}}\\
&\Leftrightarrow r_C^U=r_C^H\\
&\Rightarrow \langle H \rangle_{\rho^U}=\langle H \rangle_{\rho^H}\\
&\Rightarrow \Delta F_{\text{coh,int}}=\langle H \rangle_{\rho^U} > \Delta F_{\text{inc,int}}=\langle H \rangle_{\rho^H}-T_R\Delta S_{\rho^H};
\end{aligned}
\end{equation}

meaning that for each temperature that both the incoherent and the coherent scenarios can reach, the incoherent scenario outperforms the coherent one. However, the coherent scenario can always reach lower temperatures than the incoherent one,  that is \(T_{\text{coh,int}}^* < T_{\text{inc,int}}^*\). This hence settles the comparison of both scenarios in a much more trivial way than in the external resource case.

\section{The swap operation and the virtual qubit as a basis for cooling operations}\label{app:virtualqubit}

In all of the paradigms discussed in this work, the operation that causes the target qubit to be cooled down is a swap operation between the target qubit and a qubit subspace in the joint system of the machine qubits. The latter can, but need not be, either one of the machine qubits. The effect of this swap operation can be very simply understood in terms of the ``virtual qubit" subspace of the machine qubits. This section presents the cooling effect of the swap in terms of the virtual qubit, as was done in \cite{silva16}. All of the results in the case of repeated operations (and some of those in the single-cycle regime) follow from this argument. For a proof of the statement, see \cite{silva16}, Appendix A.

Let $A$ be a real (target) qubit system that begins in a state that is diagonal w.r.t. the energy eigenbasis (denoted by $\{\ket{0},\ket{1}\}$, with the population of its ground state (i.e. the corresponding diagonal element in the density matrix) denoted by $r$. Denote the energy difference between the two levels by $E$. In addition, consider another system $M$ (representing the machine), that has in particular a two-dimensional subspace spanned by the energy eigenstates $\{\ket{E_g},\ket{E_e}\}$, this subspace is referred to as the ``virtual qubit". We denote by $E_V = E_e - E_g$ the energy gap of the virtual qubit. The initial state of the machine, expressed as a density matrix in the energy eigenbasis, is assumed to have no coherence w.r.t. the eigenstates of the virtual qubit, i.e. the coefficients of $\ket{E_g}\bra{E_i}$ are zero for all $i$ (except the diagonal element $i=g$), and similarly for $\ket{E_e}\bra{E_i}$.

Let the population in the $\ket{E_g}$ state (the coefficient of $\ket{E_g}\bra{E_g}$ in the density matrix) be denoted as $p_g$, and that in the $\ket{E_e}$ state be denoted by $p_e$. We label by
\begin{itemize}
	\item $N_V$ (the ``norm" of the virtual qubit), the total population in the virtual qubit, $N_V = p_g + p_e$.
	\item $r_V$ the normalized ground state population of the virtual qubit, $r_V = p_g/N_V$, i.e. the population if the virtual qubit was normalized to have $N_V = 1$,
	\item $Z_V$ the bias of the virtual qubit, also normalized, $Z_V = (p_g - p_e)/N_V$.
	\item $T_V$ the virtual temperature of the virtual qubit, calculated by inverting its Gibb's ratio,
	\begin{align}
	\frac{p_e}{p_g} &= e^{-E_V/T_V}.
	\end{align}
	Alternatively, the virtual temperature can also be expressed in terms of the bias, via the relation
	\begin{align}
	Z_V &= \tanh\left( \frac{E_V}{2T_V} \right).
	\end{align}
\end{itemize}

Let a swap operation be performed between the real and virtual qubits, described by the unitary
\begin{align}
U = \openone_{AM} - \ket{0,E_e}_{AM}\!\bra{0,E_e} - \ket{1,E_g}_{AM}\!\bra{1,E_g} + \ket{1,E_g}_{AM}\!\bra{0,E_e} + \ket{0,E_e}_{AM}\!\bra{1,E_g},
\end{align}

Then the final reduced state of the target qubit will have a new ground state population given by
\begin{equation}\label{app:virtualsingler}
\begin{aligned}
	r^\prime &= r + (1-r) p_g- r p_e\\
	&= N_V r_V + \left( 1 - N_V \right) r,
\end{aligned}
\end{equation}
i.e. with probability $N_V$, the new populations of the target qubit are those of the normalized virtual qubit, and with probability $1-N_V$, there is no change. We assume $N_V \neq 0$, as this corresponds to the virtual qubit being empty. 

One can also express the above in the form
\begin{align}\label{app:virtualrecursiver}
\frac{r_V - r^\prime}{r_V - r} &= 1 - N_V.
\end{align}

Thus, if after a single swap, the machine is restored to its state before the unitary, and then the swap is repeated, the recursive relation between $r$ and $r^\prime$ will hold for the new population $r^{\prime\prime}$ in terms of $r^\prime$. In general, if the reset of the machine and the swap are repeated in turn $n$ times, then the ground state population of the target qubit after the $n^{th}$ step will be
\begin{subequations}\label{app:virtualrepeatedr}
	\begin{align}
	\frac{r_V - r^{(n)}}{r_V - r} &= \left( 1 - N_V \right)^n, \\
	\text{Equivalently,} \quad r^{(n)} &= r_V - \left( r_V - r \right) \left( 1 - N_V \right)^n.
	\end{align}
\end{subequations}
In the asymptotic limit of infinite swaps, $r \rightarrow r_V$. This is equivalent to the Gibbs ratio of the target qubit approaching that of the virtual qubit, and the bias of the target qubit approaching $Z_V$.

In terms of temperature, if the target qubit and virtual qubit have the same energy gap ($E=E_V$), then the temperature of the target qubit approaches the virtual temperature with each swap, and in the asymptotic limit, $T\rightarrow T_V$. However, if the energies are unequal, then
\begin{align}\label{app:virtualtemperaturefinal}
T \longrightarrow T_V \frac{E}{E_V},
\end{align}
since it is the Gibbs ratio that equilibrates to that of the virtual qubit.

Finally, one can calculate the work cost of the swap operation. Since it is unitary, the energy difference and free energy difference are the same, and given by
\begin{align}
	\Delta F &= Tr \left( \rho^\prime H \right) - Tr \left( \rho H \right),
\end{align}
where $\{\rho,\rho^\prime\}$ are the initial and final states of the system and machine, and $H$ is the Hamiltonian of the system and machine.

For the degenerate case, i.e. $E=E_V$, one finds the work cost to be zero. For the non-degenerate case, the work cost of a single swap is given by
\begin{align}\label{app:virtualworkcost}
	\Delta F = \left( r^\prime - r \right) \left( E_V - E \right),
\end{align}

To end this section, we list the relevant virtual qubits for each of the paradigms used in this work: (see further sections for details)
\begin{itemize}
	\item For single shot and repeated incoherent operations, the virtual qubit is spanned by the two levels $\{\ket{01}_{BC},\ket{10}_{BC}\}$ of the machine qubits, with the energy gap of the virtual qubit equal to that of the target qubit $E_B-E_C=E$.
	\item For repeated coherent operations and algorithmic cooling, the virtual qubit is spanned by the levels $\{\ket{00}_{BC},\ket{11}_{BC}\}$, with the energy gap being $E_B + E_C$.
	\item For single shot coherent operations, one requires the swap between the target qubit $A$ and the machine qubit $B$, which also falls under the above analysis, here the virtual qubit is simply the machine qubit $B$ (thus $N_V = 1$). The energy gap is thus $E_B$. If $E_C > E$, one also requires the swap between qubits $A$ and $C$, where $C$ can be treated as a virtual qubit.
\end{itemize}

\section{Repeated incoherent operations}\label{app:incohop}

\subsection{The rate of cooling with repeated incoherent operations}\label{app:coolingincoherent}

In the case of incoherent operations, the relevant virtual qubit (see \secref{app:virtualqubit}) is the subspace $\{\ket{01}_{BC},\ket{10}_{BC}\}$ of the machine qubits. When qubit $B$ is at the environment temperature $T_R$ and qubit $C$ at the hot temperature $T_H$, one can calculate the populations and variables of the virtual qubit:
\begin{align}
p_{01} &= r_B \left( 1 - r_C^H \right) \\
p_{10} &= \left( 1 - r_B \right) r_C^H \\
N_{V,\text{inc}} = p_{01} + p_{10} &= r_B \left( 1 - r_C^H \right) + \left( 1 - r_B \right) r_C^H \\
r_{V,\text{inc}} \; (=r_{\text{inc},\infty}) \; &= \frac{ r_B \left( 1 - r_C^H \right)}{ r_B \left( 1 - r_C^H \right) + \left( 1 - r_B \right) r_C^H },
\end{align}
where the labelling of $r_{V,\text{inc}}$ as $r_{\text{inc},\infty}$ will become clear shortly. Equivalently, $r_{V,\text{inc}}$ can be expressed in terms of the virtual temperature of the virtual qubit,
\begin{align}\label{app:incohrepfinalT}
r_{V,\text{inc}} &= \frac{1}{1 + e^{-E/T_V}}, & &\text{where} \quad T_{V,\text{inc}} (=T_{\text{inc},\infty}) = \frac{E}{\frac{E_B}{T_R} - \frac{E_C}{T_H}}.
\end{align}

Thus following the argument in \secref{app:virtualqubit}, the ground state population after $n$ repetitions of the incoherent cycle will be given by
\begin{align}
r_{\text{inc},n} &= r_{V,\text{inc}} - \left( r_{V,\text{inc}} - r \right) \left( 1 - N_{V,\text{inc}} \right)^n.
\end{align}

Thus in the asymptotic limit of infinite repetitions, as $0<N_V\leq 1$, we recover $r_{\text{inc},\infty} = r_{V,\text{inc}}$, and the temperature of the target qubit in this limit is the virtual temperature, i.e. $T_{\text{inc},\infty}=T_{V,\text{inc}}$.

In particular, in the limit that the hot bath is at infinite temperature, $T_H \rightarrow \infty$,
\begin{align}
N_{V,\text{inc}}^* &= \frac{1}{2}, \\
r_{\text{inc},\infty}^* &= r_B, \\
T_{\text{inc},\infty}^* &= T_R \frac{E}{E_B}, \\
r_{\text{inc},n}^* &= r_B - \frac{(r_B - r)}{2^n}.
\end{align}

\subsection{The free energy cost of repeated incoherent operations}\label{app:costincoherent}

Here we calculate the free energy cost of repeating the incoherent operations a finite number of times. Since the resource is the hot bath, we will account for $Q_h$, the heat drawn from it. Among all of the steps involved, only the thermalization of qubit $C$ involves the hot bath, and so it is sufficient to keep track of the populations of the reduced state of qubit $C$ in order to calculate $Q^H$.

We can divide the total heat current into two parts, first off, the amount required to heat up qubit $C$ from the environment temperature $T_R$ to the temperature of the hot bath $T_H$. Following that, there is the repeated heat current required to bring back qubit $C$ to $T_H$ after a cooling swap has been performed.

The first heat current is trivial to calculate, from the difference in the ground state population of $C$ due to heating,
\begin{align}
Q^H_1 &= E_C \left( r_C - r_C^H \right).
\end{align}

For the second part, we have to determine the population change, specifically the reduction in the excited state population of qubit $C$, every time that the cooling swap is performed. However, since the swap is between the levels $\ket{010}$ and $\ket{101}$, we see that whatever the change in the reduced state populations of $C$, the change in the corresponding reduced state populations of $A$ is exactly the same. More precisely, the heat required to reset qubit $C$ before the $n^{th}$ swap (i.e. after the $(n-1)^{th}$ cooling swap) is
\begin{align}
Q^H_n &= E_C \left( r_{\text{inc},n-1} - r_{\text{inc},n-2} \right),
\end{align}
which holds for $n\geq 2$. From the above two expressions, we thus have the cumulative heat current required for $n$ cooling steps,
\begin{align}\label{app:heatcurrentrepeatedincoherent}
Q^H_n &= E_C \left( r_C - r_C^H \right) + E_C \left( r_{\text{inc},n-1} - r \right).
\end{align}
In the asymptotic limit of infinite repetitions, $r_{\text{inc},n-1}$ goes to \(r_{\text{inc},\infty}\), and the resultant expression demonstrates that the total heat current is asymptotically finite.

The work cost is given by the decrease in the free energy of the hot bath w.r.t. the temperature of the environment, which is defined as
\begin{align}
\Delta F_{\text{inc},n} &= Q^H_n - T_R \Delta S_{H,n},
\end{align}
where $\Delta S_{H,n}$ is the \emph{decrease} in the entropy of the bath after $n$ repetitions of the swap. For a thermal bath that stays at equilibrium, as we assume throughout, $\Delta S_{H,n} = Q^H_n/T_H$, leading to
\begin{align}
\Delta F_{\text{inc},n} &= Q^H_n \left( 1 - \frac{T_R}{T_H} \right).
\end{align}

In particular, for the case that $T_H \rightarrow \infty$, in the asymptotic limit of infinite repetitions of the swap,
\begin{align}
\Delta F_{\text{inc},\infty}^* &= E_C \left( r_C - \frac{1}{2} + r_B - r \right).
\end{align}

\section{Asymptotic equivalence of incoherent operations and autonomous refrigerator}\label{app:incoherentauto}

In this section, we show that in the two-qubit machine the final state, and hence the final temperature of the target, as well as the total work cost, are the same as if we had run an autonomous refrigerator between the 3 qubits and waited for the steady state. In other words, since the autonomous refrigerator runs continuously, repeated incoherent operations can be understood as a discretized version of the continuous process. For a discussion on the connection between continuous and discretized versions of quantum thermal machines, see \cite{Raam}. Here we simply review the autonomous 3-qubit fridge introduced in \cite{auto0}, and the equivalence of its steady state parameters with the asymptotic end point of repeated incoherent operations.

In the case of the autonomous fridge, rather than having repeated unitary operations, there is a time-independent interaction Hamiltonian between the three qubits given by
\begin{equation}\label{eq:intHauto}
H_{\text{int}} = g \left( \ket{010}_{ABC}\bra{101} + h.c. \right),
\end{equation}
that acts on the degenerate subspace. Note that this Hamiltonian is a generator of the unitary that swaps the population of the degenerate levels, specifically, $U = \text{exp} \left(-i \frac{\pi}{2g} H_{\text{int}} \right)$.

At the same time, each qubit is coupled to a thermal bath, qubit $B$ to the environment, qubit $C$ to the hot bath. For completeness one could also consider qubit $A$ to be coupled to its own environment, but for simplicity we ignore this effect here. This is to be consistent with the repeated incoherent operations picture, where we did not take into account any coupling between qubit $A$ and an environment in between the cooling operations.

As proven in \cite{auto0}, the three qubits approach a steady state, that is particularly simple in the case that qubit $A$ has no coupling to a bath,
\begin{equation}\label{eq:finalautostate}
\tau_{\text{auto}} \otimes \tau_B \otimes \tau_C^H.
\end{equation}
That is, the steady state is a tensor product state, with qubits $B$ and $C$ thermal at the temperatures of the baths they are respectively coupled to, and qubit $A$ in Gibbs state with temperature
\begin{equation}
T_{\text{auto}} = \frac{E}{\frac{E_B}{T_R} - \frac{E_C}{T_H}}.
\end{equation}

This is the same as $T_{\text{inc},\infty}$, see Eq.~\eqref{app:incohrepfinalT}, that is the asymptotic limit of repeated incoherent operations. Furthermore, it is clear that in the repeated operations, when the number of operations approaches infinite, the cooling swaps stop having an effect, and thus the final states of qubits $B$ and $C$ are Gibbs states at $T_R$ and $T_H$ respectively as these are the temperatures they are reset to after each cooling cycle. Thus the final state of all three qubits is the same in both the autonomous and repeated operations scenario.

\subsection{Free energy equivalence.}\label{app:costauto}

Here we calculate the free energy consumed by the autonomous fridge to go from the initial state to the final state. As the resource is the hot bath, we will calculate the free energy from $Q^H_{\text{auto}}$, the heat drawn from the hot bath. The initial state is that of all three qubits being at the environment temperature $T_R$, while the final state is the tensor product of Gibbs states derived above, see Eq.~\eqref{eq:finalautostate}.

Consider the entire system to be comprised of three parts. Each part consists one of the qubits and the bath that it is attached to (in the case of qubits $B$ and $C$). The only way that energy is exchanged between the different parts is via the energy-preserving interaction Hamiltonian given by Eq.~\eqref{eq:intHauto}. This swaps the populations of the two energy eigenstates $\ket{010}$ and $\ket{101}$, and thus the change in population of qubit $A$ due to the interaction is exactly the same as that in qubit $C$. Since the energy change is given by the population times the energy gap this implies that the energy change of the three parts (at all times during the operation of the machine) must be in proportion to $E:-E_B:E_C$, from the form of $H_{\text{int}}$.

Since part $A$ consists only of the target qubit, the total energy change is simply the difference in energy from the initial to the final state, $E(r - r_{\text{auto}})$. For part $C$, the total energy change is the sum of that of qubit $C$, and that of the hot bath, $E_C (r_C - r_C^H) - Q^H_{\text{auto}}$. Via the preceding argument,
\begin{align}
	\frac{E(r-r_{\text{auto}})}{E} &= \frac{E_C (r_C - r_C^H) - Q^H_{\text{auto}}}{E_C}.
\end{align}

Solving for $Q^H_{\text{auto}}$, we find that
\begin{equation}
Q_{\text{auto}}^H = E_C \left( r_C - r_C^H + r_{\text{auto}} - r \right).
\end{equation}
As $r_{\text{auto}} = r_{\text{inc},\infty}$, this is the same heat current as in the asymptotic limit of infinite repetitions of the incoherent operation, see Eq.~\eqref{app:heatcurrentrepeatedincoherent}.

\section{Repeated coherent operations}\label{app:repeatcoherent}

\subsection{Choosing the best virtual qubit from the machine}\label{app:repeatedcoherentproof}

In this section we investigate the effect and optimal strategy for repeated coherent operations. Here we are allowed to repeatedly perform arbitrary unitary operations on the joint system of the target and machine qubits, with the machine qubits being reset to the temperature of the environment in between (see \figref{fig:repeatedcoherentoperations}). To begin with, we demonstrate that in terms of asymptotic cooling, the best virtual qubit of the machine to choose is that spanned by $\{\ket{00}_{BC},\ket{11}_{BC}\}$.

\begin{figure}[h]
	\centering
	\includegraphics[width=10cm]{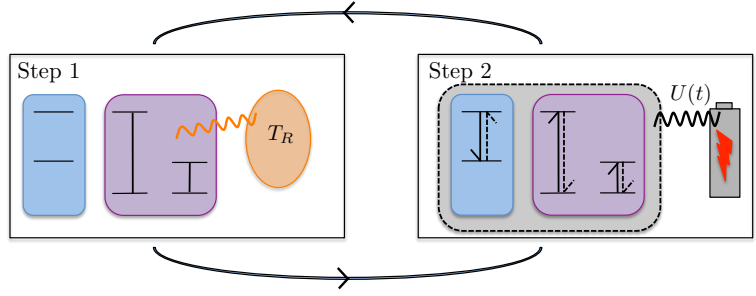}
	\caption{Cooling via repeated coherent operations after the first coherent operation is completed. First the machine qubits $B$ and $C$ are thermalized to the environment temperature $T_R$, following which one performs a unitary that swaps the populations of the levels $\ket{011}$ and $\ket{100}$.\label{fig:repeatedcoherentoperations}}
\end{figure}

First off, w.r.t. the virtual qubit picture, here we can choose any qubit subspace of the machine qubits to swap with the target qubit, unlike in the incoherent case, where we were forced to choose the subspace $\{\ket{01}_{BC},\ket{10}_{BC}\}$, so as to be degenerate ($E_V = E$) with the target system.

However, in the coherent case, there is only a single temperature available ($T_R$), thus the state of the machine after it is rethermalized to the environment will simply be the thermal state of qubits $B$ and $C$ at $T_R$. Given that the entire state of $B$ and $C$ is thermal, every qubit subspace of the machine has the same virtual temperature, $T_V = T_R$.

From \secref{app:virtualqubit}, Eq.~\eqref{app:virtualtemperaturefinal}, we conclude that if we pick a virtual qubit from the machine with energy gap $E_V$, then the temperature of the target system after many repetitions of the swap between the virtual qubit and the target (with the reset of the machine in between) will tend to
\begin{align}
T \longrightarrow T_R \frac{E}{E_V}.
\end{align}
Thus to cooling the maximum amount in the asymptotic case of infinite repetitions, we should pick the largest energy gap, i.e. the qubit subspace $\{\ket{00}_{BC},\ket{11}_{BC}\}$. In what follows, we show that in fact, after the first coherent operation (that was dealt with in \secref{sec:single-cycle2}), this is the \emph{only} virtual qubit that allows for cooling the target.

\subsection{The target qubit after $n$ repetitions of coherent operations}\label{app:repeatedcoherentcooling}

Consider the state of the three qubits at the end of the single coherent operation. The initial state before the operation was the thermal state of all three qubits at the environment temperature $T$. If the energies satisfy $E \geq E_C$, then the optimal coherent operation is simply to swap the states of $A$ and $B$, leaving the three qubits in the state
\begin{align}
\rho^\prime &= \tau_B \otimes \tau_A \otimes \tau_C,
\end{align}
whereas if $E < E_C$, then the optimal coherent operation is to first swap the states of $A$ and $C$, and then proceed by swapping $A$ with $B$, leading to the final state of
\begin{align}
\rho^\prime &= \tau_B \otimes \tau_C \otimes \tau_A.
\end{align}

In either case, there is no further cooling on qubit $A$ possible by any unitary operation. Thus the only option to continue is to bring the machine back to the environment temperature. At this point, the state is now given by
\begin{align}
\tilde{\rho} &= \tau_B \otimes \tau_B \otimes \tau_C\ \\
&=\begin{pmatrix}
r_B^2 r_C &  \multicolumn{7}{c}{\multirow{4}{*}{\Large0}}\\
& r_B^2 \bar{r_C} & & & & & &\\
& & r_B \bar{r}_B r_C & & & & &\\
& & & r_B \bar{r}_B \bar{r}_C & & & & \\
& & & & r_B \bar{r}_B r_C &  & & \\
\multicolumn{5}{c}{\multirow{2}{*}{\Large0}} & r_B \bar{r}_B \bar{r}_C & &\\
& & & &   & & \bar{r}_B^2 r_C &\\
& & & & & & & \bar{r}_B^2 \bar{r}_C \\
\end{pmatrix}
\end{align}

Recall that the first four populations (i.e. eigenvalues) are those in the ground state of qubit $A$. Labelling all of the populations from $p_{000}$ to $p_{111}$, one can verify (using $E_B > E_C$) that
\begin{align}
p_{000} > p_{001} > p_{010} = p_{100} > p_{011} = p_{101} > p_{110} > p_{111}.
\end{align}

Thus from the perspective of maximizing the ground state population of $A$, the only two populations that are not in the optimal location are $p_{011}$ and $p_{100}$, which should be swapped, corresponding to unitarily swapping the two energy levels $\ket{011}$ and $\ket{100}$. This is unlike the initial state before the first coherent operation, where there were a number of possible level swaps that achieved cooling. There one had to optimize over all possible swap operations to minimize the work cost, whereas here there is only one possible cooling swap.

Thus the second coherent operation continues with the $\ket{100}\leftrightarrow\ket{011}$ swap, cooling down qubit $A$ further, followed by bringing back the machine qubits $B$ and $C$ to the environment temperature. One can verify that after resetting the machine qubits, the populations once again satisfy $p_{011} < p_{100}$, allowing cooling to continue by repetition of this cycle of steps. In the same manner as for repeated incoherent operations, from the arguments of \secref{app:virtualqubit}, one can identify the properties of the relevant virtual qubit in this case, the states $\ket{00}_{BC}$ and $\ket{11}_{BC}$ of the machine,
\begin{align}
p_{00} &= r_B r_C \\
p_{11} &= \left( 1 - r_B \right) \left( 1 - r_C \right) \\
N_{V,\text{coh}} = p_{00} + p_{11} &= r_B r_C + \left( 1 - r_B \right) \left( 1 - r_C \right) \\
r_{V,\text{coh}} \; (=r_{\text{coh},\infty}) \; &= \frac{ r_B r_C }{ r_B r_C + \left( 1 - r_B \right) \left( 1 - r_C \right) },
\end{align}
where the labelling of $r_{V,\text{coh}}$ as $r_{\text{coh},\infty}$ will become clear shortly. Equivalently, $r_{V,\text{coh}}$ can be expressed in terms of the virtual temperature of the virtual qubit,
\begin{align}\label{app:icohrepfinalT}
r_{V,\text{coh}} &= \frac{1}{1 + E^{-E_V/T_{V,\text{coh}}}}, & &\text{where} \quad E_V = E_B + E_C \quad \text{and} \quad T_{V,\text{coh}} = T_R.
\end{align}

Thus following the argument in \secref{app:virtualqubit}, the ground state population after $n$ repetitions of the incoherent cycle will be given by
\begin{align}
r_{\text{coh},n}^* &= r_{V,\text{coh}} - \left( r_{V,\text{coh}} - r \right) \left( 1 - N_{V,\text{coh}} \right)^n.
\end{align}

Thus in the asymptotic limit of infinite repetitions, $r^*_{\text{coh},\infty} = r_{V,\text{inc}}$, and the temperature of the target qubit in this limit is
\begin{align}
T^*_{\text{coh},\infty} &= T_{V,\text{coh}} \frac{E}{E_{V,\text{coh}}} = T_R \frac{E}{E_B + E_C}.
\end{align}

\subsection{The free energy cost of cooling with repeated coherent operations}\label{app:costcoherent}

In the case of repeated coherent operations, the work cost is only calculated from the unitary swap operations, as the other step is the thermalization of the machine to the environment temperature, which comes at no cost. To calculate the work cost of the unitary operations, we follow the argument in \secref{app:virtualqubit}. From the argument therein (Eq.~\eqref{app:virtualworkcost}), the free energy input in each repeated coherent operation is given by
\begin{align}
\Delta F_{\text{coh},n}^* - \Delta F_{\text{coh},n-1}^* &= \left( r_{\text{coh},n}^* - r_{\text{coh},n-1}^* \right) \left( E_B + E_C - E \right) \\
&= 2 E_C \left( r_{\text{coh},n} - r_{\text{coh},n-1} \right).
\end{align}

This only applies for $n \geq 2$ since the first coherent operation is different, and the optimal work cost of the latter ($\Delta F^*_{\text{coh}}$) has been calculated in Sec.~VI~B of the main text. Recalling that the ground state population of the target qubit after a single coherent operation is $r_B$, we can calculate the work cost of $n$ repetitions of coherent operations,
\begin{align}
\Delta F_{\text{coh},n}^* &= \Delta F_{\text{coh}}^* + 2 E_C \left( r_{\text{coh},n}^* - r_B \right), \\
\text{where} \quad \Delta F_{\text{coh}}^* &= \begin{cases}
E_C \left( r_B - r \right) & \text{if $E_C \leq E$,} \\
\left( E_C - E \right) \left( r_C - r \right) + E_C \left( r_B - r_C \right) & \text{if $E_C \geq E$.}
\end{cases}
\end{align} 

\section{Algorithmic cooling}\label{app:algo}

In the case of repeated coherent operations, the minimum temperature achievable by the target qubit is bound by the maximum bias $Z_V$ (see \secref{app:virtualqubit})  that can be engineered on any qubit subspace of the machine qubits $B$ and $C$. When the qubits are both thermalized to the environment temperature $T_R$, the maximum bias is on the virtual qubit of $\{\ket{00}_{BC},\ket{11}_{BC}\}$.

However, if one is allowed to thermalize the machine qubits separately, then an even higher bias can be engineered on the same subspace, by pre-cooling qubit $C$. Specifically, after the cooling swap of the target qubit with the virtual qubit $\{\ket{00}_{BC},\ket{11}_{BC}\}$, only qubit $B$ is rethermalized to the environment temperature, and then its state is swapped with that of qubit $C$, thus cooling the state of $C$. Qubit $B$ is then rethermalized to $T_R$, and then the cooling swap involving all three qubits is repeated.

\begin{figure}[h]
	\centering
	\includegraphics[width=10cm]{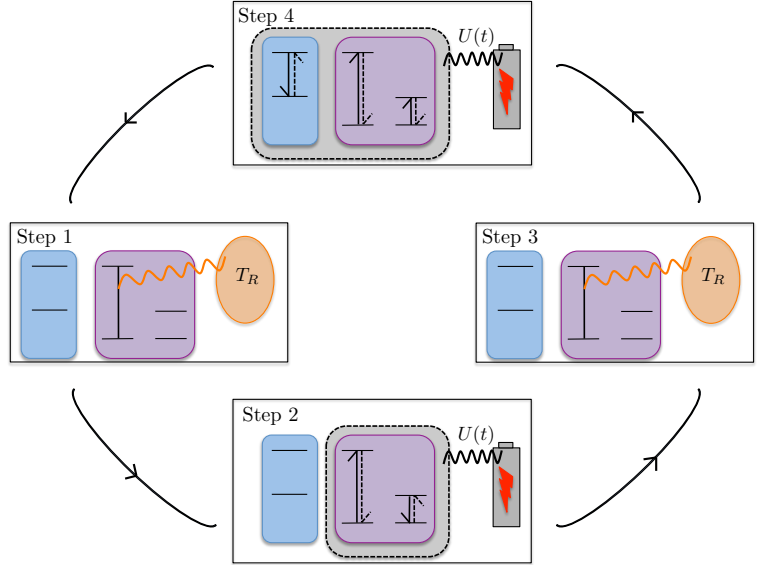}
	\caption{The cycle of steps corresponding to algorithmic cooling. Steps 1 and 3 thermalize qubit $B$ to the environment. Step 2 is the pre-cooling of qubit $C$ by a swap with $B$. Step 4 is the cooling of the target qubit via the usual coherent operation. In the case of optimizing algorithmic cooling w.r.t. the work cost (see \secref{app:algopara}), Step 2 is replaced by a partial rather than full swap.\label{fig:algorithmiccooling}}
\end{figure}

The state of the machine qubits prior to the swap is now a tensor product of two copies of the thermal state of qubit $B$ w.r.t. $T_R$, and so the virtual qubit $\{\ket{00}_{BC},\ket{11}_{BC}\}$ has the following properties
\begin{align}
p_{00} &= r_B^2 \\
p_{11} &= \left( 1 - r_B \right)^2 \\
N_{V,\text{algo}} = p_{00} + p_{11} &= r_B^2 + \left( 1 - r_B \right)^2 \\
r_{V,\text{algo}} \; (=r_{\text{algo},\infty}^*) \; &= \frac{ r_B^2}{ r_B^2 + \left( 1 - r_B \right)^2 },
\end{align}
where the labelling of $r_{V,\text{algo}}$ as $r_{\text{algo},\infty}^*$ will become clear shortly. Equivalently, $r_{V,\text{algo}}$ can be expressed in terms of the virtual temperature of the virtual qubit,
\begin{align}\label{app:algofinalT}
r_{V,\text{algo}} &= \frac{1}{1 + e^{-E_{V,\text{algo}} /T_{V,\text{algo}}}}, & &\text{where} \quad E_{V,\text{algo}} = E_B + E_C \quad \text{and} \quad T_{V,\text{algo}} = T_R \frac{E_B+E_C}{2 E_B}.
\end{align}

Thus following the argument in \secref{app:virtualqubit}, the ground state population after $n$ repetitions of algorithmic cooling will be given by
\begin{align}
r_{\text{algo},n} &= r_{V,\text{algo}} - \left( r_{V,\text{algo}} - r_0 \right) \left( 1 - N_{V,\text{algo}} \right)^n,
\end{align}
where \(r_0\) is the ground state population of the target before starting the algorithmic cooling procedure. \(r_0\) can be \(r\), in the case that we begin with algorithmic cooling from the initial state, but can also be anything else, in particular some population greater than \(r\), corresponding to the endpoint of a different type of cooling operation. Finally note that in the asymptotic limit of infinite repetitions, $r_{\text{algo},\infty}^* = r_{V,\text{algo}}$, and the temperature of the target qubit in this limit is given by
\begin{align}
T_{\text{algo},\infty}^* &= T_{V,\text{algo}} \frac{E}{E_{V,\text{algo}}} = T_R \frac{E}{2E_B},
\end{align}
which is independant of \(r_0\), the initial ground state population of the target.

\subsection{The free energy cost of algorithmic cooling}\label{app:costalgo}

Analogous to the case of repeated coherent operations, here the work cost is invested during the unitary operations. However, in addition to the cooling swap involving all three qubits, whose cost is calculated in exactly the same way as in the repeated coherent case, see \secref{app:costcoherent}, there is also the pre-cooling of qubit $C$, which is a non-energy preserving unitary operation. Since this is effected by a swap between qubits $B$ and $C$, the work cost per population swapped (in the direction of cooling $C$) is $E_B - E_C = E$.

The work cost of pre-cooling $C$ can be split into two contributions: first, the initial cost of cooling $C$ from the environment temperature $T_R$ to the state that has the same populations as that of the initial state of $B$, that costs $E (r_B - r_C)$, and then the work cost of returning it to the pre-cooled state after every successive three-qubit swap. Since the three qubit swap is between the states $\ket{011}$ and $\ket{100}$, we see that whatever the change in the population of the ground state of the target qubit, there is exactly the same decrease in the ground state population of qubit $C$.

Adding up all of these contributions, one finds that the free energy cost of algorithmic cooling is given by
\begin{align}
\Delta F_{\text{algo},n} &= 2 E_C \left( r_{\text{algo},n} - r_0 \right) + E \left( r_B - r_C \right) + E \left( r_{\text{algo},n-1} - r_0 \right),
\end{align}
where the first term is the total work cost of the cooling swap on all three qubits, the second term is the cost of pre-cooling qubit $C$ from its initial state thermal at $T_R$, and the third represents the cost of returning qubit $C$ to the pre-cooled state prior to the $n^{th}$ cooling swap. As before, \(r_0\) is the ground state population of the target before starting the algorithmic cooling procedure.

\section{Optimizing the repetition of coherent operations w.r.t. the work cost}\label{app:optimalcoherent}

In the case of coherent operations, we now have a number of different procedures for cooling. Recall that in the single-cycle case, we found that we could cool by simply swapping the target qubit $A$ with $B$. Furthermore, if it is the case that $E < E_C$, then a lower work-cost can be achieved by swapping the target qubit with $C$ to begin with. For repeated coherent operations, we have to swap the target qubit with the virtual qubit $\{\ket{00}_{BC},\ket{11}_{BC}\}$. And finally, to cool the maximum we should precool qubit $C$ (which is a swap between qubits $B$ and $C$) prior to the same cooling swap.

Each of these processes has a different work cost, and it is illuminating to construct the optimal manner of combining them to have the minimum work cost. Following the argument in \secref{app:virtualqubit}, Eq.~\eqref{app:virtualworkcost}, we understand that to optimize the work cost, we should always seek to swap the target qubit with a virtual (or real) qubit of as small an energy gap as we can find, given it has a greater normalized ground state population \(r_V\) than the ground state population of the target. This way we minimize the energy gradient over which we move population, and thus minimize the work cost.

At the beginning, when the target and machine qubits are at the environment temperature, if $E_C > E$, then one can verify from the machine state that among all the virtual qubits of the machine with greater normalized ground state population \(r_V\) than r, qubit $C$ (here it is a real qubit, rather than virtual) is the one that has the smallest energy difference with E, \(E_V-E\). Thus the minimal cost of cooling is to swap these states, taking $r\rightarrow r_C$ at a gradient of $E_C - E$.

Once this procedure is exhausted and the ground state population of qubit A has become \(r_C\), we find that among the above virtual qubits of the machine, qubit B has the second smallest energy difference with \(E\), and so one proceeds by swapping the target qubit with qubit $B$, taking $r\rightarrow r_B$, at a gradient of $E_B - E$. One then rethermalises the machine qubits to \(T_R\). Note that qubit C could have equivalently been rethermalised at any point between the end of the first swap and now without affecting the cooling and the work-cost of the procedure.

At this point, after resetting the machine qubits, we find that the only virtual or real qubit in the machine that allows for cooling is the virtual qubit $\{\ket{00}_{BC},\ket{11}_{BC}\}$, and one proceeds by repeatedly swapping the target qubit with this virtual qubit, until $r\rightarrow r_{V,\text{coh}}$. This is performed at a gradient of $E_B + E_C - E$.

Finally, one proceeds via algorithmic cooling, where one precools qubit $C$, at a gradient of $E_B - E_C$, before applying the same cooling swap as in the case of repeated coherent operations. The reason one exhausts the repeated coherent operations procedure before proceeding with algorithmic cooling is that precooling qubit \(C\) has a work-cost that arguably enables one to cool more but still at the same energy rate, \(2 E_C\). Thus, as long as cooling without this extra work-cost is possible, it is more efficient to do so.

The work cost at an intermediate stage in this process can be simply calculated from the above, we present here the total work cost of the entire procedure,
\begin{align}\label{eq:finalworkcost}
\Delta F_{\text{algo},\infty}^* &= \quad \left( r_C - r \right) \left( E_C - E \right) +  \left( r_B - r_C \right) \left( E_B - E \right) \nonumber\\
&\quad + \left( r_{\text{coh},\infty}^* - r_B \right) 2E_C \nonumber\\
&\quad + \left( r_B - r_C \right) \left( E_B - E_C \right) + \left( r_{\text{algo},\infty}^* - r_{\text{coh},\infty}^* \right) \left( \left( E_B - E_C \right) + 2 E_C \right),
\end{align}
where the first, second and third lines correspond to the work cost of the single-cycle, repeated and algorithmic sections of the protocol repsectively. In the case of $E_C \leq E$, the single shot case simplifies to directly swapping the target qubit with qubit $B$, and thus the first line of the work cost becomes $\left( r_B - r \right) \left( E_B - E \right)$.

It is interesting to observe that subdividing the entire procedure in this manner, the temperature of the target qubit evolves due to each subsection as
\begin{align}
T \xrightarrow{E<E_C} T_R \frac{E}{E_C}  \quad\longrightarrow\quad T_R \frac{E}{E_B} \quad\longrightarrow\quad T_R \frac{E}{E_B+E_C} \quad\longrightarrow\quad T_R \frac{E}{2E_B}.
\end{align}

\subsection{An optimal cooling sequence in the regime of algortihmic cooling}\label{app:algopara}

In the analysis above, we noted that algorithmic cooling is more expensive as it requires the pre-cooling of qubit $C$. Furthermore, if one pre-cools $C$ via a full swap with $B$, as presented above, this represents an initial work cost which does not cool down the target at all, representing a discontinuity in the curve of cooling vs work cost. This is especially relevant if the desired final temperature is not that corresponding to algorithmic cooling, but is rather somewhere in-between algorithmic cooling and the endpoint of repeated coherent operations.

In this case, one can optimize the work cost by using the same cycle of steps as in Fig. \ref{fig:algorithmiccooling}, but only \emph{partially} pre-cooling qubit $C$ in Step 2, to exactly the temperature required to achieve the desired final temperature on the target.

More precisely, consider that during Step 2, one performs a \emph{partial} swap between qubits $B$ and $C$, such that the final ground state population of qubit $C$ is given by
\begin{align}
r_C(\nu) &= r_C + \nu \left( r_B - r_C \right),
\end{align}
where $\nu \in [0,1]$. On inspection of the virtual qubit $\{\ket{00}_{BC},\ket{11}_{BC}\}$, we can calculate the normalized ground state population $r_V (\nu)$,
\begin{align}\label{eq:algopara}
r_{V,\nu \text{algo}} &= \frac{ r_B \cdot r_C(\nu)}{r_B \cdot r_C(\nu) + (1-r_B) (1-r_C(\nu))}.
\end{align}
Note that $r^*_{\text{coh},\infty} < r_{V,\nu \text{algo}} < r^*_{\text{algo},\infty}$, with $r_{V,0 \text{algo}} = r_{\text{coh},\infty}^*$ and $r_{V,1 \text{algo}} = r^*_{\text{algo},\infty}$, and thus $\nu$ parametrizes the entire regime of cooling between the endpoint repeated coherent operations, and full algorithmic cooling.

In the limit of infinite repetitions of the cycle of steps, the ground state population of the target becomes $r_{V,\nu \text{algo}}$, such that, given the desired final temperature of the target, \(T_{\nu \text{algo},\infty}^*\), the swapping parameter \(\nu\) need be chosen such that
\begin{equation}
r_{V,\nu \text{algo}}= \frac{1}{1+e^{-\frac{E}{T_{\nu \text{algo},\infty}^*}}}.
\end{equation}

The work cost of cooling the target to $r_{V,\nu \text{algo}}$, given that we began with the target ground state population of $r_0$, is found by adding up the cost of pre-cooling qubit $C$, the cost of returning it the pre-cooled state, and the cost of the repeated cooling swaps on the target,
\begin{align}\label{eq:algoparawork}
\Delta F_{\nu \text{algo},\infty} &= E \left( r_C (\nu) - r_C \right) + E \left( r_{V,\nu \text{algo}} - r_0 \right) + 2 E_C \left( r_{V,\nu \text{algo}} - r_0 \right).
\end{align}

Thus given the endpoint of repeated coherent operations, (where $r_0 = r^*_{\text{coh},\infty}$), the above expression represents the optimal extra work cost for cooling the target to a ground state population (Eq. \ref{eq:algopara}) that is between the end points of repeated coherent operations and algorithmic cooling. The total work cost of the optimal sequence is in this case therfore given by
\begin{align}\label{eq:nufinalworkcost}
\Delta F_{\nu \text{algo},\infty}^* &= \quad \left( r_C - r \right) \left( E_C - E \right) +  \left( r_B - r_C \right) \left( E_B - E \right) \nonumber\\
&\quad + \left( r_{\text{coh},\infty}^* - r_B \right) 2E_C \nonumber\\
&\quad + \left( r_C(\nu) - r_C \right) \left( E_B - E_C \right) + \left( r_{V,\nu \text{algo}} - r_{\text{coh},\infty}^* \right) \left( \left( E_B - E_C \right) + 2 E_C \right).
\end{align}
 Note that for $\nu=1$, we recover the previously discussed total work cost of the optimal sequence of coherent operations of Eq. \ref{eq:finalworkcost}.

\section{Optimizing the work cost}\label{app:secondlaw}

\subsection{The N qubit coherent machine}

In this section we review a result demonstrated in \cite{paul} (within a different context), that given a final cold temperature, there exists a family of coherent machines, each member of increasing size, that can attain the final temperature, and that saturate the second law of thermodynamics in the limit of infinite size. We do this for coherent machines first, and prove the same for incoherent machines in the next section.

Consider the system to be a qubit of energy $E$ (the result may be generalized by cooling individual qubit subspaces of a more complicated system), that begins at the environment temperature $T_R$. The final temperature that we would like to attain is labelled $T_C$, where $T_C < T_R$.

The simplest machine to do so would be a single qubit of energy
\begin{align}
	E_{coh,max} &= E \frac{T_R}{T_C},
\end{align}
as in Sec. IV B of the main text, and perform a swap in the energy eigenbasis. Note that the machine is assumed, as always, to begin at $T_R$.

As discussed in the main text and in \secref{app:virtualqubit} on the virtual qubit, the work cost of this protocol involves pushing population against the energy gradient between the machine and system, $E_{coh,max} - E$,
\begin{align}
	W &= (r_{max} - r)(E_{coh,max} - E) = (r_{max} - r)E \left( \frac{T_R}{T_C} - 1 \right).
\end{align}
where $r$ and $r_{max}$ are the initial and final ground state populations of the target.

One can reduce this work cost by splitting the protocol into a number of steps. Consider that the machine is constructed out of a sequence of $N$ qubits, with linearly increasing energy,
\begin{align}
	E_{coh,i} &= E + \frac{i}{N} (E_{coh,max} - E) = E \left( 1 + \frac{i}{N} \left( \frac{T_R}{T_C} - 1 \right) \right), \quad i \in \{1,2,...,N\}.
\end{align}

The protocol now consists in performing swap operations between the target qubit and each of the machine qubits in sequence. The final temperature is the same as before, since the final qubit has energy $E_{coh,max}$. At each intermediate step, the temperature attained by the target is given by
\begin{align}
	\frac{E}{T_i} &= \frac{E_{coh,i}}{T_R} = \frac{E}{T_R} + \frac{i}{N} \left( \frac{E_{coh,max} - E}{T_R} \right) \\
	\therefore \frac{1}{T_i} &= \frac{1}{T_R} + \frac{i}{N} \left( \frac{1}{T_C} - \frac{1}{T_R} \right).\label{eq:cohseqtemp}
\end{align}

Correspondingly, the ground state population of the target after the $i^{th}$ step is given by
\begin{align}
	r_i &= \frac{1}{1 + e^{-E/T_i}}.
\end{align}

The work cost of the $i^{th}$ step is now
\begin{align}
	W_{coh,i} &= (r_i - r_{i-1}) (E_{coh,i} - E),
\end{align}
from which the total cost follows as
\begin{align}
	W_{coh} = \sum_i W_{coh,i} &= \sum_{i=1}^N \left( r_i - r_{i-1} \right) \left( E_{coh,i} - E \right) \\
		&= E\sum_{i=1}^N \left( r_i - r_{i-1} \right) \frac{i}{N} \left( \frac{T_R}{T_C} - 1 \right).\label{eq:workcostcohseq}
\end{align}

In \cite{paul}, this protocol was studied, and it was shown that the total work cost was equal to
\begin{align}
	W_{coh} &= \Delta F + O \left( \frac{1}{N} \right),
\end{align}
where $\Delta F$ is the increase in free energy of the system from its initial to final temperature, and where the free energy is defined w.r.t. the environment temperature,
\begin{align}
	F &= \langle{E}\rangle - T_R S,
\end{align}
$\langle{E}\rangle$ and $S$ being the average energy and entropy of the system. Thus one can get arbitrarily close to saturating the second law of thermodynamics by increasing the number of steps involved in the protocol.

Note that the qubits in the coherent machine need not be real qubits, but qubit subspaces (virtual qubits) embedded in a larger space. In this case, rather than a single swap for each of the machine qubits, one would require repeated swaps (inter-spaced with rethermalization of the machine) to approach the asymptotic temperature corresponding to that qubit. This does not however change the work cost of the procedure, since the cost is always given by the amount of population changed multiplied by the energy gradient, so repeating the swap with the same virtual qubit a number of times to achieve the same population difference as with a real qubit of the same energy gap results in the same work cost.

\subsection{The 2N qubit incoherent machine}\label{app:incsecondlaw}

Consider as before that we wish to cool a target qubit of energy $E$ from the environment temperature $T_R$ to $T_C$, but only using incoherent operations, that include energy-preserving unitaries, and heating up parts of our machine to a given hot temperature $T_H$.

The simplest manner of achieving this temperature is via the simplest possible incoherent machine, comprised of two qubits (as in VII.A of the main text), of energies $E_B = E_{inc,max}$ and $E_C = E_{inc,max} - E$. The machine may be run in the repeated operations regime, where the energy preserving swap operation between the states $\ket{010}_{ABC}$ and $\ket{101}_{ABC}$ is repeatedly applied, inter-spaced by re-thermalising qubits $B$ and $C$ to the environment $T_R$ and hot bath $T_H$ respectively, or in the autonomous mode, where the Hamiltonion that generates the swap is left running continuously, while the qubits are kept coupled to their respective baths.

The final temperature achieved by such a machine is given by
\begin{align}
	\frac{E}{T^f} &= \frac{E_{inc,max}}{T_R} - \frac{E_{inc,max} - E}{T_H},
\end{align}
and so we choose $E_{inc,max}$ such that $T^f = T_C$, the final desired cold temperature, resulting in
\begin{align}\label{eq:cohincmatchtemp}
	E_{inc,max} &= E \left( \frac{ \frac{1}{T_C} - \frac{1}{T_H} }{ \frac{1}{T_R} - \frac{1}{T_H} } \right) \quad\quad \Leftrightarrow \quad\quad E_{inc,max} - E = E \left( \frac{ \frac{1}{T_C} - \frac{1}{T_R} }{ \frac{1}{T_R} - \frac{1}{T_H} } \right).
\end{align}

The work cost of this protocol was discussed in  \secref{app:incoherentauto}, and may be calculated from the heat drawn from the hot bath during the protocol. The heat is comprised of two parts, the preheating of qubit $C$, that we label $Q_{init}$, followed by the heat required to keep it at the hot temperature after repeated incoherent operations. In the limit of infinite repetitions (or the steady state of the autonomous machine), this heat is given by
\begin{align}
	Q_H &= (r_{max} - r) E_C = \left( r_{max} - r \right) \left( E_{inc,max} - E \right).
\end{align}

Note that as the final temperature of the target is the same as in the coherent case, the final ground state population $r_{max}$ is also identical.

Note that from the two heat contributions, the initial heat cost to bring qubit $C$ to $T_H$ from $T_R$ depends on whether it is a real or virtual qubit, and in the latter case, depends on the spectrum of the larger space in which the virtual qubit is embedded. However, the heat required to keep it at $T_H$ remains the same, as it only depends on the population flow between the system and the machine, which in the limit of infinite repetitions or the autonomous steady state, only depends on the Gibb's ratio of qubit $C$.

In a similar manner as in the coherent case, one can decrease the work cost by using a machine made out of a sequence of $N$ two-qubit systems, with linearly increasing energies given by
\begin{subequations}\label{eq:incseqenergies}
\begin{align}
	E_{B,i} &= E + \frac{i}{N} \left( E_{inc,max} - E \right) \\
	E_{C,i} &= E_{B,i} - E = \frac{i}{N} \left( E_{inc,max} - E \right).
\end{align}
\end{subequations}

For each of the two-qubit systems, one runs the same protocol as before, and thus the temperature attained by the $i^{th}$ step is given by
\begin{align}
	\frac{E}{T_{inc,i}} &= \frac{E_{B,i}}{T_R} - \frac{E_{C,i}}{T_H} \\
		&= \frac{E}{T_R} + \left( \frac{1}{T_R} - \frac{1}{T_H} \right) \frac{i}{N} \left( E_{inc,max} - E \right) \\
		&= \frac{E}{T_R} + \frac{i}{N} E \left( \frac{1}{T_C} - \frac{1}{T_R} \right),
\end{align}
by using \eqref{eq:cohincmatchtemp} for $E_{inc,max}$. Thus $T_{inc,i} = T_i$ from the coherent machine (see Eq. \ref{eq:cohseqtemp}), and we keep the notation $T_i$. This implies that the ground state population of the target after the $i^{th}$ step is also the same as in the coherent machine, and we keep the notation $r_i$.

The heat drawn in each step is again comprised of the two contributions of pre-heating and maintenance of the $i^{th}$ qubit $C$,
\begin{align}
	Q_i &= Q_{init,i} + \left( r_i - r_{i-1} \right) E_{C,i} \\
	&= Q_{init,i} + \left( r_i - r_{i-1} \right) \frac{i}{N} \left( E_{inc,max} - E \right).
\end{align}
where we label the initial heat drawn for pre-heating as $Q_{init,i}$. Simplifying the rest of the expression using \eqref{eq:cohincmatchtemp}, and summing up for the total heat,
\begin{align}
	Q &= \sum_{i=1}^N Q_{init,i} + E \sum_{i=1}^N \left( r_i - r_{i-1} \right) \frac{i}{N} \left( \frac{ \frac{1}{T_C} - \frac{1}{T_R} }{ \frac{1}{T_R} - \frac{1}{T_H} } \right) \\
	&= Q_{init} + \frac{W_{coh}}{ 1 - \frac{T_R}{T_H} },
\end{align}
using \eqref{eq:workcostcohseq} for the total work cost in the coherent case. We have denoted by $Q_{init}$ the total cost of pre-heating each of the qubits $\{C,i\}$.

The above is the heat drawn from the hot bath. To compare with the coherent case we take the work cost instead, which is the decrease in free energy of the hot bath,
\begin{align}
	W = \Delta F_H = Q - T_R \Delta S = Q \left( 1 - \frac{T_R}{T_H} \right).
\end{align}

Thus the work cost in the incoherent case is
\begin{align}
	W_{inc} &= Q_{init} \left( 1 - \frac{T_R}{T_H} \right) + W_{coh}.
\end{align}

Thus the incoherent cost is very similar to the coherent cost, with the sole addition of bringing the additional qubits from $T_R$ to $T_H$. At first glance, this may appear to be a finite disadvantage, however it is possible to make this additional cost as small as possible, as we now demonstrate.

The $N$ qubits $\{C,i\}$ need not be real qubits, but virtual ones. Consider for instance, a system with Hamiltonian
\begin{align}
H_C &= - E_g \ket{E_g}\bra{E_g} + \sum_{i=0}^N \frac{i}{N} (E_{inc,max} - E),
\end{align}
which is an evenly spaced ladder of $N+1$ levels plus a single ground state that lies at an energy $E_g$ below the ladder. Labelling the levels by $\{g,0,1,2,...,N\}$, we observe that $E_i - E_0 = E_{C,i}$, and thus the pair of levels $0$ and $i$ may be employed as the $i^{th}$ virtual qubit $C$ in the incoherent machine.

However, for any fixed $N$, $T_H$ and $E_{inc,max}$, the cost of preheating this system can be made as small as we like, by pushing the ground state energy further downward. For high enough values of $E_g$, the population in the ladder will be small enough for both $T_H$ and $T_R$ such that the difference in average energy is vanishingly small.

Note that this implies that the machine will run slower (in the autonomous case) or require many more repeated operations in the discrete case. However, in principle the final temperature attained is still the same and thus the incoherent machine can achieve as close a work cost as one likes to the coherent case. Together with the fact that the coherent machine can get as close to saturating the second law, we thus have the statement that for arbitrary sized machines, both coherent and incoherent machines can approach the limit of the second law of thermodynamics.

\end{document}